\newtheorem{theorem}{Theorem}[section]
\newtheorem{lemma}[theorem]{Lemma}
\theoremstyle{definition}
\newcommand*{\innerproofname}{Proof}
\renewcommand{\emptyset}{\varnothing}
\newcommand{\fbin}{{f_\textup{bin}}}
\newcommand{\fid}{{f_\textup{id}}}
\newcommand{\fidbin}{{f_\textup{id,bin}}}
\newcommand{\fidweak}{{\tilde{f}_\textup{id}}}
\newcommand{\ftrans}{{f_\textup{trans}}}
\DeclareMathOperator*{\argmax}{arg\,max}
\DeclareMathOperator*{\argmin}{arg\,min}
\title{\bf Reforming an Unfair Allocation by Exchanging Goods}
\author[1]{Sheung Man Yuen}
\author[2]{Ayumi Igarashi}
\author[3]{Naoyuki Kamiyama}
\author[1]{Warut Suksompong}
\affil[1]{National University of Singapore, Singapore}
\affil[2]{University of Tokyo, Japan}
\affil[3]{Kyoto University, Japan}
\date{\vspace{-10mm}}
\begin{document}

\maketitle

\begin{abstract}
Fairly allocating indivisible goods is a frequently occurring task in everyday life.
Given an initial allocation of the goods, we consider the problem of reforming it via a sequence of exchanges to attain fairness in the form of envy-freeness up to one good (EF1).
We present a vast array of results on the complexity of determining whether it is possible to reach an EF1 allocation from the initial allocation and, if so, the minimum number of exchanges required.
In particular, we uncover several distinctions based on the number of agents involved and their utility functions.
Furthermore, we derive essentially tight bounds on the worst-case number of exchanges needed to achieve EF1 when the initial allocation is balanced.
\end{abstract}

\section{Introduction}

Fair division is a research area that studies how to allocate scarce resources to interested agents in a fair manner.
Over the past several decades, a large body of work has developed on concepts and algorithms for fairly allocating various types of resources \citep{BramsTa96,Moulin03,Moulin19,RobertsonWe98}.
The developed theory has also been applied to many allocation scenarios in practice \citep{GoldmanPr14,BudishCaKe17,IgarashiYo23,HanSu24}.

A ubiquitous fair division problem is the allocation of \emph{indivisible goods}, such as books, furniture, paintings, and human resources.
While numerous fairness notions have been proposed for allocating indivisible goods, one of the most prominent notions is \emph{envy-freeness up to one good (EF1)}.
This notion requires that if an agent envies another agent, there must exist a good in the latter agent's bundle whose removal would make the envy disappear.
Besides its simplicity, a salient feature of EF1 is that an allocation satisfying it always exists and can be found, e.g., by the \emph{round-robin algorithm}, which lets the agents pick their favorite goods in a round-robin fashion. 
The allocation returned by this algorithm is also \emph{balanced}, meaning that the numbers of goods that any pair of agents receive differ by at most one.

The fair division literature typically assumes that there is a set of unallocated goods and the objective is to allocate them fairly.
In certain scenarios, however, an existing (possibly unfair) allocation of the goods is already in place, and the goal is to ``reform'' it in order to arrive at a fair allocation.
This is the case, for instance, in the allocation of personnel among teams in an organization.
As the personnel experience individual growth or decline, and as the needs of the teams evolve, these changes can necessitate a reevaluation and potential reformation of the current allocation by the organization leadership.
Another example is the distribution of a museum's exhibits across its branches---the museum director may decide to adjust the distribution so as to ensure fairness based on the most recent valuations.
Such scenarios fall under the umbrella of \emph{control problems}, which assume a central authority with the power to control the outcome and have been studied extensively in fair division and computational social choice \citep{FaliszewskiRo16,ChenKaNu25}.

In this work, we shall allow agents to \emph{exchange} pairs of goods in the reformation process, and use EF1 as our fairness criterion.
Exchanges constitute a fundamental type of operation and preserve the size of each agent's bundle, thereby ensuring that any cardinality constraints remain fulfilled.\footnote{Capacity constraints are prevalent in fair division applications and have accordingly received interest in the literature \citep{BiswasBa18,HummelHe22,ShoshanSeHa23,WuLiGa25}.}
Naturally, given an initial allocation, we wish to reach an EF1 allocation using a small number of exchanges.
However, it is sometimes impossible to reach an EF1 allocation via \emph{any} finite number of exchanges (e.g., if one agent receives many more goods than another agent in the initial allocation), so we start by exploring whether the corresponding decision problem can be answered efficiently.
Since this problem is equivalent to determining whether an EF1 allocation with a certain size vector exists in a given instance, it is also meaningful independently of exchange considerations.\footnote{A \emph{size vector} specifies the number of goods that each agent receives in an allocation.
When an EF1 allocation is not guaranteed to exist in some instances due to cardinality requirements, an approach taken by previous work is to relax the EF1 condition (e.g., \citep{WuLiGa25}).
However, this leads to unnecessarily weak guarantees in instances where EF1 can be attained.}
We also investigate other important questions in this setting.
Namely, if it is possible to reach an EF1 allocation, can we efficiently determine the smallest number of exchanges required to achieve this goal?
And how many exchanges might we need to make in the worst case in order to attain EF1?

\subsection{Our Results}

In our model, there is an initial allocation of a set of goods.
As is often assumed in fair division, each agent has an additive utility function over the goods.
At each step of the reformation process, two agents can exchange a pair of goods with each other to obtain another allocation, and the goal is to reach an EF1 allocation at the end of the process.
More details on our model are provided in \Cref{sec:prelim}.

In \Cref{sec:exist}, we consider the decision problem of determining whether a given initial allocation can be reformed into an EF1 allocation.
As mentioned earlier, this problem is equivalent to determining whether an EF1 allocation with a given size vector exists, so we focus on the latter decision problem instead.
We demonstrate interesting distinctions in the complexity based on the number of agents and their utility functions.
Specifically, in the case of two agents, the problem can be solved in polynomial time if the agents have identical utilities, but becomes (weakly) NP-complete otherwise.
For three or more agents, the problem is NP-complete even with identical utilities; however, it can be solved efficiently when the utilities are binary\footnote{This means that each agent has utility $0$ or $1$ for each good.} provided that the number of agents is constant.
Finally, for an arbitrary (non-constant) number of agents, the problem is strongly NP-hard even for identical utilities and NP-hard for binary utilities,\footnote{For binary utilities, there is no distinction between weak and strong NP-hardness.} but can be solved in polynomial time if the utilities are identical \emph{and} binary.
The results of this section are summarized in \Cref{tab:exists}.

\begin{table*}[tb]
\centering
\begin{NiceTabular}{c||c|c|c|c}
utilities
    & general
    & identical
    & binary
    & identical \& binary \\ \hline \hline
$n = 2$
    & wNP-c (Thm.~\ref{thm:exist_gen_two_nphard})
    & P (Thm.~\ref{thm:exist_iden_two})
    & P (Thm.~\ref{thm:exist_bin_const})
    & P (Thm.~\ref{thm:exist_bin_const}) \\ \hline
constant $n \geq 3$
    & wNP-c (Thm.~\ref{thm:exist_iden_const_nphard})
    & wNP-c (Thm.~\ref{thm:exist_iden_const_nphard})
    & P (Thm.~\ref{thm:exist_bin_const})
    & P (Thm.~\ref{thm:exist_bin_const}) \\ \hline
general $n$
    & sNP-c (Thm.~\ref{thm:exist_iden_gen_nphard})
    & sNP-c (Thm.~\ref{thm:exist_iden_gen_nphard})
    & NP-c (Thm.~\ref{thm:exist_bin_gen_nphard})
    & P (Thm.~\ref{thm:exist_idenbin_gen})
\end{NiceTabular}
\caption{Computational complexity of \textsc{Reformability}, the problem of deciding whether an EF1 allocation with a given size vector exists in a given instance. The top row represents the class of utility functions considered. The leftmost column represents the number of agents. ``sNP-c'' and ``wNP-c'' stand for strongly NP-complete and weakly NP-complete respectively.} \label{tab:exists}
\end{table*}

Having determined the reformability of the initial allocation, we next explore in \Cref{sec:optimal} the problem of deciding whether the optimal (i.e., minimum) number of exchanges required to reach an EF1 allocation is at most some given number $k$.
For (a) two agents with identical utilities, (b) a constant number of agents with binary utilities, and (c) any number of agents with identical binary utilities, we show that this problem can be solved in polynomial time, just like the corresponding decision problem in \Cref{sec:exist}.
For the remaining scenarios, since deciding whether an allocation is reformable is already NP-complete (from \Cref{sec:exist}), we instead focus on the special case where the allocation is balanced---an EF1 allocation is guaranteed to be reachable in this case (see \Cref{prop:exist_balanced}).
We show that the problem for this special case remains NP-complete.

In \Cref{sec:worst}, instead of considering specific instances, we derive worst-case bounds on the number of exchanges necessary to reach an EF1 allocation.
We assume that each of the $n$ agents possesses $s$ goods---this again ensures that an EF1 allocation is reachable.
We show that roughly $s(n-1)/2$ exchanges always suffice for instances with general utilities or with binary utilities; moreover, our bound is essentially tight for any $n$ and $s$, and exactly tight when $n = 2$ as well as when $s$ is divisible by $n$.
For instances with identical binary utilities, we show that an essentially tight bound for the number of exchanges is $sn/4$ for even $n$ and $s(n-1)(n+1)/4n$ for odd $n$.

While our focus in this paper is on exchanges, we show in Appendix~\ref{ap:transfer} that some of our results hold analogously when agents are allowed to \emph{transfer} goods rather than exchanging them.

\subsection{Additional Related Work} \label{sec:related-work}

As mentioned earlier, the majority of work in fair division assumes that there is no initial allocation of the resources---we now discuss the key exceptions and their differences from our model.
\citet{BoehmerBrHe24} studied the problem of \emph{discarding} goods from an initial allocation in order to reach an envy-free or EF1 allocation.
As it is possible to reallocate the goods in several practical situations, discarding them can be unnecessarily wasteful for the agents involved.
In a similar vein, \citet{DornDeSc21} investigated deleting goods to attain another fairness notion called \emph{proportionality}; they assumed that agents have ordinal preferences (rather than cardinal utilities) over the goods, and considered both the settings with and without an initial allocation.\footnote{When there is no initial allocation, \citet{DornDeSc21}~considered deleting goods so that a proportional allocation of the remaining goods exists. 
\citet{AzizScWa16} examined discarding or adding goods to achieve envy-freeness, also in the absence of an initial allocation and under ordinal preferences.}
Instead of deleting goods, \citet{BredereckKaLu23} allowed agents to share goods in order to improve their allocations, while \citet{BentertBrDe25} analyzed the resolution of envy by adding goods.
\citet{AzizBiLa19} focused on reallocating goods to make agents better off, but did not delve into the aspect of fairness.
\citet{IgarashiKaSu24} aimed to transition from an initial allocation to a target allocation, both of which are EF1, while maintaining EF1 throughout the process.
\citet{ChandramouleeswaranNiRa25} examined \emph{transferring} goods starting from a ``near-EF1'' allocation with the goal of reaching an EF1 allocation.
\citet{Segalhalevi22} considered the reallocation of a \emph{divisible} good and explored the trade-off between guaranteeing a minimum utility for every agent and ensuring each agent a certain fraction of her original utility.
\citet{ChevaleyreEnEs07} also strived to reach fair allocations but via exchanges with money.

Further afield, the idea of improving an initial allocation has also been examined when each agent receives only one good, a setting sometimes known as a \emph{housing market}.
\citet{GourvesLeWi17} assumed an underlying social network and allowed beneficial exchanges between agents who are neighbors in the network---their work led to a series of follow-up papers on similar models \citep{HuangXi20,LiPlSi21,MullerBe21,ItoKaKa23}.
\citet{DamammeBeCh15} also considered exchanges but without an underlying graph structure, while \citet{BrandtWi24} used \emph{Pareto-optimality} as their target notion.
The papers mentioned so far in this paragraph did not have fairness as their objective.
\citet{ItoIwKa25} incorporated fairness in the form of envy-freeness into this setting---starting with an envy-free allocation, they let each agent exchange her current good with a preferred unassigned good as long as the exchange keeps the allocation envy-free.

Finally, our work is similar in spirit to that of \citet{BredereckChWo16} on voting, which addressed the following question: given a voting instance, is there a ``nicely structured'' instance nearby?
In particular, we use an EF1 allocation as a target and measure distance in terms of the number of exchanges.

\section{Preliminaries} \label{sec:prelim}

Let $N = \{1, \ldots, n\}$ be a set of $n \geq 2$ agents, and $M$ be a set of $m \geq 1$ goods typically denoted by $g_1, \ldots, g_m$. 
A \emph{bundle} is a (possibly empty) subset of goods.
An \emph{allocation} $\mathcal{A} = (A_1, \ldots, A_n)$ is an ordered partition of $M$ into $n$ bundles such that bundle $A_i$ is allocated to agent $i \in N$.
An \emph{(allocation) size vector} $\vec{s} = (s_1, \ldots, s_n)$ is a vector of non-negative integers such that $\sum_{i\in N} s_i = m$.
We say that an allocation $\mathcal{A}$ has size vector $\vec{s}$ if $|A_i| = s_i$ for all $i \in N$.
A size vector $\vec{s}$ is \emph{balanced} if $|s_i - s_j| \leq 1$ for all $i, j \in N$, and an allocation is \emph{balanced} if it has a balanced size vector.

Each agent $i \in N$ has an additive \emph{utility function} $u_i: 2^M \to \mathbb{R}_{\geq 0}$ that maps bundles to non-negative real numbers; additivity means that $u_i (M') = \sum_{g \in M'} u_i (\{g\})$ for all $i \in N$ and $M' \subseteq M$.
We write $u_i (g)$ instead of $u_i (\{g\})$ for a single good $g \in M$. 
The utility functions are \emph{identical} if $u_i = u_j$ for all $i, j \in N$---we shall use $u$ to denote the common utility function in this case.
The utility functions are \emph{binary} if $u_i(g) \in \{0, 1\}$ for all $i \in N$ and $g \in M$.
Agent~$i$ is \emph{EF1 towards} agent $j$ in an allocation $\mathcal{A} = (A_1, \ldots, A_n)$ if either $A_j = \emptyset$ or there exists a good $g \in A_j$ such that $u_i(A_i) \geq u_i(A_j \setminus \{g\})$.
An allocation $\mathcal{A}$ is \emph{EF1} if every agent is EF1 towards every other agent in $\mathcal{A}$.
A \emph{(fair division) instance} $\mathcal{I}$ consists of a set of agents~$N$, a set of goods $M$, and the agents' utility functions $(u_i)_{i \in N}$.

An \emph{exchange} involves an agent $i$ giving one good from her bundle to agent $j$ and simultaneously receiving one good from agent $j$'s bundle.
We say that allocations $\mathcal{A} = (A_1, \ldots, A_n)$ and $\mathcal{B} = (B_1, \ldots, B_n)$ can be \emph{reached via an exchange} if there exist distinct $i, j \in N$, $g \in A_i$, and $g' \in A_j$ such that $B_i = (A_i \cup \{g'\}) \setminus \{g\}$, $B_j = (A_j \cup \{g\}) \setminus \{g'\}$, and $B_k = A_k$ for all $k \in N \setminus \{i, j\}$.
An allocation $\mathcal{B}$ can be \emph{reached} from an allocation $\mathcal{A}$ if there exist a non-negative integer $T$ and a sequence of allocations $(\mathcal{A}^0, \mathcal{A}^1, \ldots, \mathcal{A}^T)$ such that $\mathcal{A}^0 = \mathcal{A}$, $\mathcal{A}^T = \mathcal{B}$, and for each $t \in \{0, \ldots, T-1\}$, $\mathcal{A}^t$ and~$\mathcal{A}^{t+1}$ can be reached via an exchange.
The \emph{optimal number of exchanges} required to reach $\mathcal{B}$ from $\mathcal{A}$ is the smallest~$T$ across all such sequences of allocations---if no such $T$ exists (i.e., $\mathcal{B}$ cannot be reached from~$\mathcal{A}$), then the optimal number of exchanges is defined\footnote{The optimal number of exchanges can be viewed as the distance between the two allocations in the implicit \emph{exchange graph}, where the allocations are vertices and the edges connect allocations that can be reached via an exchange \citep{IgarashiKaSu24}.
When there is no path connecting two vertices of a graph, it is common to define the distance between them as infinity.} to be $\infty$.
Observe that two allocations can be reached from each other if and only if they share the same size vector.

\begin{restatable}{proposition}{propreachable} \label{prop:reachable}
Given an instance, let $\mathcal{A}$ and $\mathcal{B}$ be allocations in the instance. Then, $\mathcal{B}$ can be reached from $\mathcal{A}$ if and only if $\mathcal{A}$ and $\mathcal{B}$ have the same size vector.
\end{restatable}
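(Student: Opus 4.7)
The proof plan has two directions, both quite short; the main tool is simply tracking bundle sizes and counting misplaced goods.

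For the forward direction, I would argue that a single exchange preserves the size of every bundle: if $i$ and $j$ swap one good each, then $|B_i| = |(A_i\cup\{g'\})\setminus\{g\}| = |A_i|$, and similarly $|B_j| = |A_j|$, while all other bundles are untouched. A trivial induction on the length $T$ of the exchange sequence then shows that any reachable allocation has the same size vector as the starting one.

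For the backward direction, suppose $\mathcal{A}$ and $\mathcal{B}$ share the size vector $(s_1,\dots,s_n)$. I would define the \emph{discrepancy} of a current allocation $\mathcal{A}'=(A'_1,\dots,A'_n)$ relative to the target $\mathcal{B}$ as $D(\mathcal{A}') := |\{g \in M : g \in A'_i \cap B_j \text{ for some } i\neq j\}|$, i.e., the number of goods sitting in the wrong bundle. The plan is to reduce $D$ by at least one per exchange until it reaches zero, at which point $\mathcal{A}' = \mathcal{B}$.

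Concretely, assume $D(\mathcal{A}') > 0$ and pick any misplaced good $g \in A'_i \cap B_j$ with $i\neq j$. Because $|A'_j| = s_j = |B_j|$, the sets $A'_j\setminus B_j$ and $B_j\setminus A'_j$ have the same cardinality, and the latter contains $g$, so $A'_j\setminus B_j$ is nonempty; pick some $g' \in A'_j\setminus B_j$. Perform the exchange moving $g$ to agent $j$ and $g'$ to agent $i$. After this step, $g$ lies in $B_j$ correctly, so it is no longer misplaced; the status of $g'$ may improve (if $g'\in B_i$) or stay misplaced, but cannot worsen into a new misplacement since $g'$ was already counted in $D(\mathcal{A}')$; and no other good changes bundle. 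Hence $D$ strictly decreases by at least $1$. Iterating, we reach $\mathcal{B}$ in at most $D(\mathcal{A}) \le m$ exchanges.

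The only place requiring any care is the bookkeeping in the inductive step, ensuring that the newly moved $g'$ does not secretly increase the discrepancy; but since $g'$ was already misplaced before the exchange, the count cannot go up. I do not anticipate any genuine obstacle beyond writing this cleanly.
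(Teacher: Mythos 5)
Your proposal is correct and follows essentially the same route as the paper: the forward direction is the identical size-preservation induction, and your backward direction performs the same kind of corrective exchange (a misplaced good swapped into its target bundle against a good that does not belong there), with the only difference being that you track a global count of misplaced goods while the paper remedies the agents' bundles one at a time. Both arguments are sound and yield termination in at most $m$ exchanges.
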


\begin{proof}
Note that every exchange preserves the size vector of the allocation, since each agent involved in the exchange gives away one good and receives one good in return, while other agents retain their bundles.

$(\Rightarrow)$ If $\mathcal{B}$ can be reached from $\mathcal{A}$, then there exist a non-negative integer $T$ and a sequence of allocations $(\mathcal{A}^0, \mathcal{A}^1, \ldots, \mathcal{A}^T)$ such that $\mathcal{A}^0 = \mathcal{A}$, $\mathcal{A}^T = \mathcal{B}$, and for each $t \in \{0, \ldots, T-1\}$, $\mathcal{A}^t$ and~$\mathcal{A}^{t+1}$ can be reached via an exchange. 
For each $t \in \{0, \ldots, T-1\}$, $\mathcal{A}^t$ and~$\mathcal{A}^{t+1}$ have the same size vector.
Therefore, the whole sequence of allocations, including $\mathcal{A}$ and $\mathcal{B}$, have the same size vector.

$(\Leftarrow)$ Assume that $\mathcal{A}$ and $\mathcal{B}$ have the same size vector; we shall create a sequence of allocations from $\mathcal{A}$ to $\mathcal{B}$ with the desired properties.
We first remedy the goods in agent $1$'s bundle.
If $A_1 = B_1$, then all the goods in agent $1$'s bundle are correct and we are done.
Otherwise, since $|A_1| = |B_1|$, we must have $|A_1 \setminus B_1| = |B_1 \setminus A_1| > 0$.
Perform an exchange between a good $g \in A_1 \setminus B_1$ and a good $g' \in B_1 \setminus A_1$.
This creates a new allocation where the number of wrong goods in agent $1$'s bundle decreases by one.
By repeating this procedure, we eventually arrive at an allocation with agent $1$'s bundle remedied.
We then remedy the goods in the bundles of agents $2, 3, \ldots, n$ in the same manner until every agent has her own bundle in $\mathcal{B}$.
Note that when the goods in agent $i$'s bundle are remedied, there is no exchange of goods involving agents $1$ to $i-1$ anymore, and so the bundles of agents $1$ to $i-1$ remain correct.
This shows that $\mathcal{B}$ can be reached from~$\mathcal{A}$.
\end{proof}

Next, we state a simple proposition that characterizes the existence of EF1 allocations based on the size vector.

\begin{restatable}{proposition}{propexistbalanced} \label{prop:exist_balanced}
Let $\vec{s} = (s_1, \ldots, s_n)$, and let $m = \sum_{i=1}^n s_i$.
\begin{enumerate}
    \item[(a)] If $\vec{s}$ is balanced, then every instance with $n$ agents and $m$ goods admits an EF1 allocation with size vector $\vec{s}$.
    \item[(b)] If $\vec{s}$ is not balanced, then there exists an instance with $n$ agents and $m$ goods that does not admit any EF1 allocation with size vector $\vec{s}$.
\end{enumerate}
\end{restatable}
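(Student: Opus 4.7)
The plan is to handle the two parts separately, with part (a) relying on an appropriately ordered round-robin procedure and part (b) on a simple uniform-valuation instance.

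For part (a), I would invoke the round-robin algorithm mentioned in the introduction, but with the picking order chosen to match $\vec{s}$. Write $m = kn + r$ with $0 \le r < n$, so that a balanced size vector must assign $k+1$ goods to exactly $r$ agents and $k$ goods to the remaining $n-r$. Given the target balanced vector $\vec{s}$, place the $r$ agents with $s_i = k+1$ in the first $r$ positions of a permutation $\pi$ of $N$, and the remaining agents afterwards; then run round-robin in the order $\pi$. Each agent picks $\lceil m/n \rceil$ or $\lfloor m/n \rfloor$ times according to whether she is among the first $r$, so the resulting allocation has size vector exactly $\vec{s}$. It is a standard fact (which I would cite or reprove in one line: any agent $i$'s $t$-th pick is weakly preferred to the $(t+1)$-th pick of any agent following her in the order, and precedes the next pick of any agent before her) that the output is EF1, completing part (a).

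For part (b), suppose $\vec{s}$ is not balanced, so there exist $i, j$ with $s_i - s_j \ge 2$. Take the instance in which every agent has the same utility function assigning value $1$ to every good. In any allocation $\mathcal{A}$ with size vector $\vec{s}$, agent $j$ has $u_j(A_j) = s_j$ while for any $g \in A_i$ we have $u_j(A_i \setminus \{g\}) = s_i - 1 \ge s_j + 1 > s_j$; moreover $A_i \ne \emptyset$ since $s_i \ge 2$. Hence no good can be removed from $A_i$ to eliminate $j$'s envy, and $\mathcal{A}$ fails EF1.

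Neither step is a genuine obstacle: part (a) is essentially a bookkeeping argument that round-robin can realize any balanced size vector, and part (b) is a one-instance construction. The only mild subtlety is ensuring the cases in part (a) when $r = 0$ (all $s_i$ equal) or $r = n-1$ are treated uniformly, which is handled cleanly by the permutation formulation above.
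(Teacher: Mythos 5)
Your proposal is correct and follows essentially the same route as the paper: part (a) uses round-robin with the agents receiving more goods placed earlier in the picking order, and part (b) uses the instance in which every agent values every good at $1$ and observes that the agent with the smaller bundle cannot be EF1 towards the one with at least two more goods. The extra bookkeeping you supply for part (a) (the decomposition $m = kn + r$ and the standard EF1 argument for round-robin) is consistent with, and slightly more detailed than, the paper's one-line justification.
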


\begin{proof}
\begin{enumerate}
    \item[(a)] An EF1 allocation can be guaranteed by allowing agents to pick their favorite goods in a round-robin fashion, with agents with higher $s_i$ (if any) starting before those with lower $s_i$, until each agent $i$ has $s_i$ goods.
    \item[(b)] Let $\mathcal{I}$ be an instance with $m$ goods such that $u_i(g) = 1$ for all $i \in N$ and $g \in M$.
    Let $\mathcal{A}$ be any allocation with size vector $\vec{s}$.
    Since $\vec{s}$ is not balanced, there exist distinct $i, j \in N$ such that $s_j - s_i \geq 2$.
    Then, we have $|A_j| = s_j > 0$, so $A_j \neq \emptyset$.
    Furthermore, $u_i(A_i) = s_i < s_j - 1 = u_i(A_j \setminus \{g\})$ for all $g \in A_j$.
    This shows that agent $i$ is not EF1 towards agent $j$, and so $\mathcal{A}$ is not EF1.
    Therefore, $\mathcal{I}$ does not admit an EF1 allocation with size vector~$\vec{s}$. \qedhere
\end{enumerate}
\end{proof}

Finally, we introduce an NP-hard decision problem called the \textsc{Balanced Multi-Partition} problem, which we will use later in the proofs of several results.
In \textsc{Balanced Multi-Partition}, we are given positive integers $p, q, K$ and a multiset of positive integers $X = \{x_1, \ldots, x_{pq}\}$ such that $K < x_j \leq 2K$ for all $j \in \{1, \ldots, pq\}$, and the sum of all the integers in $X$ is $p(q+1)K$.
The problem is to decide whether $X$ can be partitioned into multisets $X_1, \ldots, X_p$ of equal cardinalities and sums, i.e., for each $i \in \{1, \ldots, p\}$, the cardinality of $X_i$ is $q$ and the sum of all the integers in $X_i$ is $(q+1)K$.
The NP-hardness of this problem is based on a reduction from the equal-cardinality version of the NP-hard problem \textsc{Partition} \citep[p.~223]{GareyJo79}.

\begin{restatable}{proposition}{propbalancedmultipartition} \label{prop:balanced_multi_partition}
For any fixed $p \geq 2$, \textsc{Balanced Multi-Partition} is NP-hard.
\end{restatable}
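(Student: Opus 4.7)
The plan is to reduce from the equal-cardinality version of \textsc{Partition}: given positive integers $z_1, \dots, z_{2n_0}$ with sum $2T$, decide whether a size-$n_0$ subset sums to $T$. Without loss of generality $z_j \leq T$ for all $j$ (otherwise the instance is trivially ``no'').

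I would first preprocess the instance by appending two copies of a value $c$, chosen so that $c > T$ and $T + c \equiv 1 \pmod{n_0 + 1}$; such a $c$ with $c \leq T + n_0 + 1$ always exists and can be found in polynomial time. Because $c > T$, any equal-cardinality partition of the extended multiset must place exactly one copy of $c$ on each side (putting both copies on the same side would force the originals there to sum to $T - c < 0$), so this transformation preserves the yes/no answer. Writing $n := n_0 + 1$ and $T' := T + c$, the extended instance has $2n$ elements summing to $2T'$, and the choice of $c$ guarantees $\gcd(n, T') = 1$, a condition I will need below.

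Next I would build the target \textsc{Balanced Multi-Partition} instance with parameters $q = n$ and $K = nT'$, consisting of $2n$ \emph{original} elements $x_j = n(T' + z_j)$ together with $(p - 2)\, n$ identical \emph{dummy} elements of value $v = (n+1) T'$. A direct check shows that every element lies in $(K, 2K]$ (using $0 < z_j \leq T'$ and $n \geq 1$), that there are $pn = pq$ elements in total, and that the overall sum equals $p(q+1)K$. The forward direction of the equivalence is then immediate: an equal-cardinality partition of the extended multiset produces two pure-original bundles of sum $(q+1)K$ each, while any $n$ dummies form a pure bundle of sum $nv = (n+1)K$.

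The main obstacle is the reverse direction, where I must rule out \emph{mixed} bundles. Suppose some bundle contains $k$ dummies and $n - k$ originals indexed by $J$; expanding the target $kv + \sum_{j \in J} x_j = (n+1)K$ and simplifying yields $n \sum_{j \in J} z_j = T'(n - k)$. Since $\gcd(n, T') = 1$, divisibility forces $n \mid n - k$, hence $k \in \{0, n\}$. Thus every bundle is pure, the two pure-original bundles induce an equal-cardinality partition of the extended instance, and hence one of the original instance. Combined with the NP-hardness of equal-cardinality \textsc{Partition}, this establishes NP-hardness of \textsc{Balanced Multi-Partition} for every fixed $p \geq 2$.
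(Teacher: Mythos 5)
Your proof is correct, and while it shares the same skeleton as the paper's argument---a reduction from equal-cardinality \textsc{Partition} that pads the multiset up to $pq$ elements and then applies an affine transformation to push all values into the range $(K, 2K]$---the mechanism by which you rule out ``bad'' partitions is genuinely different. The paper pads with $p-2$ copies of the full target value $K'$ plus zeros, and argues by magnitude: any bundle containing one of these heavy fillers cannot contain anything else of positive value without overshooting, so the fillers isolate themselves and the two remaining bundles must carry exactly the original elements. You instead pad with $(p-2)n$ identical mid-sized dummies and rule out mixed bundles by a divisibility argument: the equation $n\sum_{j\in J} z_j = T'(n-k)$ together with $\gcd(n,T')=1$ forces $k\in\{0,n\}$. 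This buys you a cleaner, single-valued filler and a purely arithmetic exclusion argument, but at the cost of the extra preprocessing step (appending two copies of $c$ to engineer the coprimality condition), which the paper's magnitude-based approach does not need; your intermediate claim that exactly one copy of $c$ lands on each side, and that the smallest valid $c$ lies in $(T, T+n_0+1]$, both check out. Both reductions are polynomial and both correctly yield only weak NP-hardness, which is all the proposition claims.
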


\begin{proof}
We shall prove NP-hardness via a series of reductions from the equal-cardinality version of \textsc{Partition}.
In this version, we are given positive integers $q, K'$ and a multiset of positive integers $W = \{w_1, \ldots, w_{2q}\}$ such that the sum of the integers in $W$ is $2K'$.
The problem is to decide whether $W$ can be partitioned into multisets $W_1$ and $W_2$ of equal cardinalities and sums.
This problem is known to be NP-hard \citep[p.~223]{GareyJo79}.

Let an instance of the equal-cardinality version of \textsc{Partition} be given, and let $p \geq 2$ be a fixed integer.
If some integer in $W$ is more than $K'$, then $W$ cannot be partitioned into the desired multisets; therefore, we assume that every integer in $W$ is at most $K'$.
Define a multiset $W^1 = \{w_j \mid j \in \{2q+1, \ldots, 2q+(p-2)\}\}$ such that $w_j = K'$ for all $w_j \in W^1$; define a multiset $W^0 = \{w_j \mid j \in \{2q+(p-2)+1, \ldots, pq\}\}$ such that $w_j = 0$ for all $w_j \in W^0$; and define $W' = W \cup W^1 \cup W^0$.
Essentially, we are adding $p-2$ copies of the number $K'$ and sufficiently many copies of the number $0$ so that the total number of elements in $W'$ is $pq$.
Note that every integer in $W'$ is at most $K'$, and the sum of all the elements in $W'$ is $pK'$.
We claim that $W'$ can be partitioned into multisets $W'_1, \ldots, W'_p$ of equal cardinalities and sums (i.e., each $W'_i$ has cardinality $q$ and sum $K'$) if and only if $W$ can be partitioned into multisets $W_1$ and $W_2$ of equal cardinalities and sums.

$(\Leftarrow)$ If we are given a partition into multisets $W_1$ and $W_2$, let $W'_1 = W_1$, let $W'_2 = W_2$, and let $W'_i$ contain one element from $W^1$ and $q-1$ elements from $W^0$ for each $i \in \{3, \ldots, p\}$.
Each of $W'_1, \ldots, W'_p$ has $q$ elements with sum $K'$.
This gives a desired partition of $W'$.

$(\Rightarrow)$ Assume that we are given a partition into multisets $W'_1, \ldots, W'_p$.
If some $W'_i$ contains at least two elements in $W^1$, then the sum of $W'_i$ is more than $K'$, which is not possible.
Therefore, every $W'_i$ contains at most one element in $W^1$.
Furthermore, for each $W'_i$ containing some element in $W^1$, if it contains some element in $W$, then its sum would exceed $K'$, which is again not possible.
Therefore, there are $p-2$ of the $W'_i$ such that each of them contains one element from $W^1$ and $q-1$ elements from $W^0$.
This means that two of the $W'_i$ contain exactly the elements in $W$.
These two $W'_i$ induce the desired partition into $W_1$ and $W_2$ of $W$.

Now, define an instance of \textsc{Balanced Multi-Partition} as follows.
Let $K = K'+q$, and let $X = \{x_1, \ldots, x_{pq}\}$ be such that $x_j = w_j + K + 1$ for all $j \in \{1, \ldots, pq\}$.
For each $j \in \{1, \ldots, pq\}$, since $0 \leq w_j \leq K'$, we have $K < x_j \leq K' + K + 1 \leq 2K$.
The sum of all integers in $X$ is $pK' + pq(K+1) = p(q+1)K$.
It is clear that $X$ can be partitioned into multisets $X_1, \ldots, X_p$ of equal cardinalities and sums if and only if $W'$ can be partitioned into multisets $W'_1, \ldots, W'_p$ of equal cardinalities and sums, since the difference between $x_j$ and $w_j$ is the same for all $j$.
Note that the reductions in this proof can all be done in polynomial time.
This establishes the NP-hardness of \textsc{Balanced Multi-Partition}.
\end{proof}

\section{Reformability of Allocations} \label{sec:exist}

We start by investigating the decision problem of whether a given initial allocation can be reformed into an EF1 allocation.
By \Cref{prop:reachable}, this reformation is possible if and only if there exists an EF1 allocation with the same size vector as the initial allocation.
Therefore, in the rest of this section, we shall equivalently focus on the problem of deciding the existence of an EF1 allocation with a given size vector---this problem can be of interest independently of reformation considerations, e.g., when space constraints are present.

Now, \Cref{prop:exist_balanced} tells us that an EF1 allocation with a balanced size vector always exists.
This means that the only time when we may have difficulties in ascertaining whether an EF1 allocation exists is when the given size vector is \emph{not} balanced.
In fact, as some of our proofs in this section show, the decision problem is NP-hard even when the sizes of the agents' bundles differ by \emph{exactly two} (e.g., in \Cref{thm:exist_gen_two_nphard}).

We discuss the cases of two agents, a constant number of agents, and a general number of agents separately.
For each case, we explore how the hardness of the decision problem varies across different classes of utility functions.
Our results are summarized in \Cref{tab:exists}.

For convenience, we refer to as \textsc{Reformability} the problem of deciding whether an EF1 allocation with a given size vector exists in a given instance.
Note that \textsc{Reformability} is in NP regardless of the number of agents, as we can verify in polynomial time whether a given allocation satisfies the condition by simply checking its size vector and comparing the bundles of every pair of agents for EF1.

\subsection{Two Agents}

For two agents, interestingly, the computational complexity of the problem turns out to be different depending on whether the agents have identical utilities or not.
We begin our discussion with the case of identical utilities.

For two agents with identical utilities, we first provide a characterization for the existence of a desired EF1 allocation based on the size vector and the utilities of the goods. 
We show that an EF1 allocation with a given size vector exists if and only if the agent with fewer goods (say, agent $1$) is EF1 towards the other agent (say, agent $2$) in the allocation where agent $1$ receives the most valuable goods.
Note that the condition in the lemma only requires checking that agent~$1$ is EF1 towards agent $2$; in particular, it does not require checking that agent~$2$ is EF1 towards agent $1$.

\begin{restatable}{lemma}{lemexistidentwo} \label{lem:exist_iden_two}
Given an instance with two agents with identical utilities, let $\vec{s} = (s_1, s_2)$ be a size vector with $s_1 \leq s_2$. 
Assume that the goods $g_1, \ldots, g_m$ are arranged in non-increasing order of utility, and let $M_0 = \{g_1, \ldots, g_{s_1}\}$. 
Then, there exists an EF1 allocation with size vector $\vec{s}$ if and only if agent~$1$ is EF1 towards agent~$2$ in the allocation $(M_0, M \setminus M_0)$.
\end{restatable}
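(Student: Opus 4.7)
The plan is to prove the biconditional in two directions: the forward implication via a short extremality argument, and the reverse implication by exhibiting an EF1 allocation from a carefully chosen parameterized family.

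For the $(\Rightarrow)$ direction, I will start from an arbitrary EF1 allocation $(B_1, B_2)$ with size vector $\vec{s}$ and pick $g^* \in \argmax_{g \in B_2} u(g)$. Agent~$1$'s EF1 guarantee yields $u(B_1) \geq u(B_2 \setminus \{g^*\})$. Since $M_0$ consists of the $s_1$ most valuable goods and $|B_1| = s_1$, the identical-utilities assumption gives $u(M_0) \geq u(B_1)$. Since $(M \setminus M_0) \setminus \{g_{s_1+1}\} = \{g_{s_1+2}, \ldots, g_m\}$ is the set of the $s_2-1$ least valuable goods while $B_2 \setminus \{g^*\}$ is another set of $s_2-1$ goods, I will also have $u(B_2 \setminus \{g^*\}) \geq u((M \setminus M_0) \setminus \{g_{s_1+1}\})$. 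Chaining these three inequalities yields precisely the condition that agent~$1$ is EF1 towards agent~$2$ in $(M_0, M \setminus M_0)$.

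For the $(\Leftarrow)$ direction, I will parameterize a family of candidate allocations by $k \in \{0, 1, \ldots, s_1\}$: set $B_1^{(k)} = \{g_1, \ldots, g_{s_1-k}\} \cup \{g_{m-k+1}, \ldots, g_m\}$ (the top $s_1-k$ plus bottom $k$ goods) and $B_2^{(k)} = M \setminus B_1^{(k)}$, so $(B_1^{(0)}, B_2^{(0)}) = (M_0, M \setminus M_0)$. Incrementing $k$ by one swaps one top good of $B_1$ for a bottom good, so $u(B_1^{(k)}) - u(B_2^{(k)})$ is weakly decreasing in $k$. I will then take $k^*$ to be the smallest $k$ for which agent~$2$ is EF1 towards agent~$1$ in $(B_1^{(k)}, B_2^{(k)})$; this is at most $s_1$ because at $k=s_1$ agent~$1$ holds only the bottom $s_1$ goods, so $u(B_1^{(s_1)}) \leq u(B_2^{(s_1)})$ and agent~$2$ has no envy at all.

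The crux will be to verify that agent~$1$ is also EF1 at $k^*$. If $k^* = 0$ this is exactly the hypothesis. Otherwise $k^* \geq 1$, and by minimality agent~$2$ is not EF1 at $k^*-1$: since $g_1 \in B_1^{(k^*-1)}$, this means $u(B_1^{(k^*-1)}) - u(B_2^{(k^*-1)}) > u(g_1)$. The transition from step $k^*-1$ to step $k^*$ removes $g_{s_1-k^*+1}$ from $B_1$ and adds $g_{m-k^*+1}$, so
\[
u(B_1^{(k^*)}) - u(B_2^{(k^*)}) > u(g_1) - 2u(g_{s_1-k^*+1}) + 2u(g_{m-k^*+1}) \geq -u(g_{s_1-k^*+1}),
\]
where the last bound uses the non-increasing ordering $u(g_1) \geq u(g_{s_1-k^*+1})$ and non-negativity $u(g_{m-k^*+1}) \geq 0$. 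Because $\max_{g \in B_2^{(k^*)}} u(g) = u(g_{s_1-k^*+1})$, this is exactly agent~$1$'s EF1 condition at $k^*$. The main obstacle is identifying the correct family of allocations to sweep over in the reverse direction; once the monotone interpolation and the transition inequality are in place, both EF1 conditions at $k^*$ fall out cleanly.
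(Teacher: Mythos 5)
Your proof is correct. The forward direction is the same extremality chain as the paper's ($u(M_0) \geq u(B_1) \geq u(B_2 \setminus \{g^*\}) \geq u((M \setminus M_0) \setminus \{g_{s_1+1}\})$), so there is nothing to compare there. The reverse direction, however, takes a genuinely different route. The paper also sweeps a one-parameter family starting from $(M_0, M \setminus M_0)$, but its family swaps the most valuable good currently in agent~$1$'s bundle with the most valuable good currently in agent~$2$'s bundle (so agent~$1$'s bundle drifts from $\{g_1,\dots,g_{s_1}\}$ to $\{g_{s_1+1},\dots,g_{2s_1}\}$), and its logical structure is an invariant argument: it proves inductively that agent~$1$ remains EF1 after every swap as long as agent~$2$ is not yet EF1, and then derives a contradiction with the hypothesis if the process fails to terminate by step $s_1$. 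You instead swap agent~$1$'s top goods for the \emph{globally least valuable} goods, prove that $u(B_1^{(k)}) - u(B_2^{(k)})$ is monotone and that the endpoint $k = s_1$ satisfies agent~$2$'s condition outright (so the stopping index $k^*$ exists without any contradiction argument), and then verify agent~$1$'s EF1 condition only at the crossing point $k^*$ via a single transition inequality driven by the failure of agent~$2$'s condition at $k^*-1$. Your checks go through: $g_1$ is indeed the most valuable good of $B_1^{(k^*-1)}$ for $k^* \leq s_1$, $g_{s_1-k^*+1}$ is indeed the most valuable good of $B_2^{(k^*)} = \{g_{s_1-k^*+1},\dots,g_{m-k^*}\}$, and the bound $u(g_1) - 2u(g_{s_1-k^*+1}) + 2u(g_{m-k^*+1}) \geq -u(g_{s_1-k^*+1})$ uses only the ordering and non-negativity. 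What your version buys is a shorter, more local verification (one inequality at the crossing point rather than an invariant maintained along the whole path); what the paper's version buys is that its sweep consists of explicit single-good exchanges between the two current bundles, which it then reuses almost verbatim as the greedy algorithm in its \textsc{Optimal Exchanges} result for two agents with identical utilities, so the invariant formulation does extra work later.
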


\begin{proof}
We say in this proof that for any nonempty set $M' \subseteq M$, the good $g_i \in M'$ is the \emph{most valuable} good in $M'$ if $g_i$ is the good with the smallest index in $M'$; likewise, $g_i$ is the \emph{least valuable} good in $M'$ if $g_i$ is the good with the largest index in $M'$.
Note that the most (resp.~least) valuable good in $M'$ is the one with the highest (resp.~lowest) utility among all the goods in $M'$, with ties broken by index.

$(\Rightarrow)$ Let $(A_1, A_2)$ be an EF1 allocation with size vector~$\vec{s}$.
Let $g$ and $g'$ be the most valuable good in $A_2$ and $M \setminus M_0$ respectively.
Since $M_0$ is the set containing the $s_1$ most valuable goods, we have $u(M_0) \geq u(A_1)$.
Since $(A_1, A_2)$ is an EF1 allocation, we have $u(A_1) \geq u(A_2 \setminus \{g\})$.
Moreover, since $M \setminus M_0$ is the set containing the $s_2$ least valuable goods, we have $u(A_2 \setminus \{g\}) \geq u((M \setminus M_0) \setminus \{g'\})$.
Combining the three inequalities, we get $u(M_0) \geq u((M \setminus M_0) \setminus \{g'\})$.
It follows that agent~$1$ is EF1 towards agent~$2$ in the allocation $(M_0, M \setminus M_0)$.

$(\Leftarrow)$ Suppose that agent $1$ is EF1 towards agent $2$ in the allocation $(M_0, M \setminus M_0)$.
If agent $2$ is also EF1 towards agent $1$ in $(M_0, M \setminus M_0)$, then we are done; therefore, assume that agent $2$ envies agent $1$ by more than one good.
For notational simplicity, let $h_j = g_{s_1 + j}$ for $j \in \{1, \ldots, s_1\}$, so that the goods arranged in non-increasing order of utility are $g_1, g_2, \ldots, g_{s_1}, h_1, h_2, \ldots, h_{s_1}, g_{2s_1+1}, \ldots, g_m$.
Let $A_1^1 = M_0 = \{g_1, \ldots, g_{s_1}\}$ and $A_2^1 = M \setminus M_0 = \{h_1, \ldots, h_{s_1}\} \cup \{g_{2s_1+1}, \ldots, g_m\}$.

Let $t = 1$. In the allocation $(A_1^t, A_2^t)$, agent $1$ is EF1 towards agent $2$, but agent $2$ envies agent $1$ by more than one good. Since $g_t$ is the most valuable good in $A_1^t$, we have $u(A_2^t) < u(A_1^t \setminus \{g_t\})$. Let $A_1^{t+1} = (A_1^t \cup \{h_t\}) \setminus \{g_t\}$ and $A_2^{t+1} = (A_2^t \cup \{g_t\}) \setminus \{h_t\}$ be the bundles after exchanging $g_t$ and $h_t$. Then, we have
\begin{align*}
    u(A_1^{t+1}) &= u((A_1^t \cup \{h_t\}) \setminus \{g_t\}) \\
    &\geq u(A_1^t \setminus \{g_t\}) 
    > u(A_2^t) 
    = u((A_2^{t+1} \cup \{h_t\}) \setminus \{g_t\}) 
    \geq u(A_2^{t+1} \setminus \{g_t\}),
\end{align*}
so agent $1$ is EF1 towards agent $2$ in $(A_1^{t+1}, A_2^{t+1})$. If agent~$2$ is also EF1 towards agent $1$, then $(A_1^{t+1}, A_2^{t+1})$ is an EF1 allocation and we are done.
Otherwise, agent $2$ envies agent~$1$ by more than one good, and we increment $t$ by $1$ and repeat the discussion in this paragraph.

If we still have not found an EF1 allocation after $t = s_1$, then agent $1$ is EF1 towards agent $2$ in $(A_1^{s_1+1}, A_2^{s_1+1})$, where $A_1^{s_1+1} = \{h_1, \ldots, h_{s_1}\} \subseteq A_2^1$ and $A_2^{s_1+1} = \{g_1, \ldots, g_{s_1}\} \cup \{g_{2s_1+1}, \ldots, g_m\} \supseteq A_1^1$, and $g_1$ is the most valuable good in $A_2^{s_1+1}$. 
This implies that
\begin{align*}
    u(A_1^{s_1+1}) &\leq u(A_2^1) 
    < u(A_1^1 \setminus \{g_1\}) 
    \leq u(A_2^{s_1+1} \setminus \{g_1\}),
\end{align*}
which means that agent $1$ is \emph{not} EF1 towards agent $2$ in $(A_1^{s_1+1}, A_2^{s_1+1})$. This is a contradiction; therefore, $(A_1^t, A_2^t)$ must be EF1 for some $t\in\{1,\dots,s_1\}$. 
\end{proof}

Since the condition in \Cref{lem:exist_iden_two} can be checked in polynomial time, we can derive the following result.

\begin{restatable}{theorem}{thmexistidentwo} \label{thm:exist_iden_two}
\textsc{Reformability} is in P for two agents with identical utilities.
\end{restatable}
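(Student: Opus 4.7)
The plan is to show that the characterization in Lemma~\ref{lem:exist_iden_two} can be tested in polynomial time, from which membership in P follows immediately. Given an instance with the two agents' common utility function $u$ and a size vector $\vec{s} = (s_1, s_2)$, I would first relabel the agents so that $s_1 \leq s_2$; because utilities are identical, this relabeling is without loss of generality and does not affect whether a desired EF1 allocation exists. I would then sort the $m$ goods in non-increasing order of utility in $O(m \log m)$ time, and let $M_0$ consist of the first $s_1$ goods in this ordering, as in the statement of Lemma~\ref{lem:exist_iden_two}.

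Next, I would directly verify whether agent~$1$ is EF1 towards agent~$2$ in the allocation $(M_0, M \setminus M_0)$. If $M \setminus M_0 = \emptyset$ (which happens when $s_2 = 0$), the condition holds trivially. Otherwise, by the sorting, the good with the highest utility in $M \setminus M_0$ is $g_{s_1+1}$, so by additivity the EF1 condition of agent~$1$ towards agent~$2$ reduces to the single inequality $u(M_0) \geq u(M \setminus M_0) - u(g_{s_1+1})$. Computing $u(M_0)$, $u(M \setminus M_0)$, and $u(g_{s_1+1})$ requires only additions of the given utilities, so the entire test runs in polynomial time.

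By Lemma~\ref{lem:exist_iden_two}, this test correctly decides whether an EF1 allocation with size vector $\vec{s}$ exists in the given instance, which by Proposition~\ref{prop:reachable} is exactly what \textsc{Reformability} asks. Since the algorithm is polynomial in the size of the input, \textsc{Reformability} is in P for two agents with identical utilities. I do not expect any real obstacle here: once Lemma~\ref{lem:exist_iden_two} is available, the theorem amounts to observing that its characterizing condition is efficiently checkable.
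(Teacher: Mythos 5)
Your proposal is correct and follows essentially the same route as the paper's proof: relabel so that $s_1 \leq s_2$, sort the goods, form $M_0$ from the top $s_1$ goods, and check the condition of \Cref{lem:exist_iden_two} in polynomial time. The extra detail you give (reducing the EF1 check to the single inequality $u(M_0) \geq u(M \setminus M_0) - u(g_{s_1+1})$, plus the empty-bundle case) is a valid elaboration of the paper's remark that the condition can be checked efficiently.
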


\begin{proof}
Without loss of generality, let the size vector be $(s_1, s_2)$ with $s_1 \leq s_2$.
Arrange the goods $g_1, \ldots, g_m$ in non-increasing order of utility (which can be done in polynomial time), and let $M_0$ be the set of $s_1$ goods with the highest utilities.
By \Cref{lem:exist_iden_two}, there exists an EF1 allocation with size vector $(s_1, s_2)$ if and only if agent $1$ is EF1 towards agent $2$ in the allocation $(M_0, M \setminus M_0)$.
The latter condition can be checked in polynomial time.
\end{proof}

While deciding whether an EF1 allocation with a given size vector exists can be done efficiently for two agents with identical utilities, we remark here that deciding whether an \emph{envy-free} allocation exists is NP-hard for two agents with identical utilities even if we allow any size vector---this follows directly from a reduction from \textsc{Partition}.\footnote{If we require both agents to receive the same number of goods, the problem for envy-freeness remains NP-hard by a reduction from the equal-cardinality version of \textsc{Partition}.}

We now proceed to general utilities.
\Cref{lem:exist_iden_two} assumes identical utilities, and there is no obvious way to generalize it to non-identical utilities.
In fact, perhaps surprisingly, we show that the decision problem becomes NP-hard when we drop the assumption of identical utilities.
The proof follows from a reduction from \textsc{Balanced Multi-Partition} with $p = 2$, an NP-hard problem by \Cref{prop:balanced_multi_partition}.

\begin{restatable}{theorem}{thmexistgentwonphard} \label{thm:exist_gen_two_nphard}
\textsc{Reformability} is weakly NP-complete for two agents.
\end{restatable}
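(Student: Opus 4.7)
The plan is to first note that \textsc{Reformability} is in NP, as the paper already observes: a candidate EF1 allocation with the specified size vector is a polynomial-size certificate that can be verified efficiently. For weak NP-hardness, I would reduce from \textsc{Balanced Multi-Partition} with $p=2$, which is NP-hard by \Cref{prop:balanced_multi_partition}.

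Given an instance of \textsc{Balanced Multi-Partition} with multiset $X = \{x_1, \ldots, x_{2q}\}$, parameter $K$, and target sum $(q+1)K$, I would construct a two-agent \textsc{Reformability} instance containing $2q$ ``item goods'' $g_1, \ldots, g_{2q}$ with $u_1(g_j) = u_2(g_j) = x_j$, together with a small constant number of ``auxiliary goods'' whose utilities are asymmetric between the two agents. The size vector $(s_1, s_2)$ is chosen to be unbalanced, of the form $(q, q+c)$ for a small constant $c$ matching the number of auxiliary goods, so that agent 1 holds exactly the ``partition-sized'' number of items.

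The key design goal is that the auxiliary goods' utilities are scaled to be large enough that in any EF1 allocation with size vector $(s_1, s_2)$ they are pinned to particular bundles (otherwise one of the agents envies by more than one good); this in turn forces $|A_1 \cap \{g_1, \ldots, g_{2q}\}| = q$. With the auxiliary placement fixed, I would tune the auxiliary utilities so that agent 1's EF1 constraint toward agent 2 becomes tight and reduces to $\sum_{g_j \in A_1} x_j \geq (q+1)K$, while agent 2's EF1 constraint toward agent 1 symmetrically reduces to $\sum_{g_j \in A_1} x_j \leq (q+1)K$. Combined with the cardinality condition, these two inequalities collapse to the equality $\sum_{g_j \in A_1} x_j = (q+1)K$, which is exactly the existence of a balanced partition with $X_1 = A_1 \cap \{g_1, \ldots, g_{2q}\}$. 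The forward direction of the reduction is then straightforward: given a balanced partition, we allocate the items accordingly and the auxiliaries to their forced bundles, and verify EF1; the backward direction extracts a balanced partition from any EF1 allocation via the pinned structure.

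The main obstacle will be calibrating the auxiliary utilities and the constant $c$ so that both the \emph{forcing} property (in every EF1 allocation the auxiliary goods go to the intended bundle) and the \emph{tightness} property (the two EF1 inequalities pinch $\sum_{g_j \in A_1} x_j$ exactly to $(q+1)K$) hold simultaneously. This requires a careful case analysis over all possible placements of the auxiliary goods and ruling out every placement other than the intended one, using that each $x_j$ lies in $(K, 2K]$ to control the ``$\max$'' term inside the EF1 inequality. Since every utility value introduced is polynomial in $K$, which is itself polynomially bounded by the input encoding, the reduction is polynomial-time and yields only weak (not strong) NP-hardness, matching the theorem statement.
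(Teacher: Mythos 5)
Your skeleton matches the paper's proof: membership in NP, a reduction from \textsc{Balanced Multi-Partition} with $p=2$ (via \Cref{prop:balanced_multi_partition}), an unbalanced size vector, and the idea of pinching the partition sum between agent $1$'s EF1 constraint (a lower bound) and agent $2$'s (an upper bound). But your construction diverges from the paper's at exactly the point where the work happens, and as sketched it has a genuine gap. You keep the partition goods \emph{identical} between the two agents and confine all asymmetry to $O(1)$ auxiliary goods whose placement you hope to ``pin.'' The paper instead sets $u_1(g) = u_2(g) + 4K$ for \emph{every} good, so that $u_1(A_i) = u_2(A_i) + 4K|A_i|$; since the size vector fixes both cardinalities, each agent's one-sided EF1 inequality converts into an opposite-direction inequality in the other agent's utility scale, and the two together force all four bundle values exactly. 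This global shift is the key idea of the paper's proof and is absent from your proposal.

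The forcing property you defer to ``careful case analysis'' is where a natural instantiation of your plan actually breaks. Take auxiliaries $h_0$ (intended for $A_1$, with $u_1(h_0)=0$, $u_2(h_0)=4K$) and $h_1,h_2,h_3$ (intended for $A_2$, with $u_1(h_1)=2K$ and all other auxiliary values $0$), and size vector $(q+1,q+3)$; in the intended placement the two EF1 conditions do pinch $\sum_{g_j\in A_1}x_j$ to exactly $(q+1)K$. Now consider the off-intended placement where all four auxiliaries go to agent $2$ and agent $1$ takes the $q+1$ \emph{smallest} partition goods, with sum $S_1$. Since the $2q$ partition goods sum to $2(q+1)K$ and each lies in $(K,2K]$, we get $(q+1)K < S_1 \le \frac{(q+1)^2}{q}K \le (q+3)K$, and one checks that both EF1 conditions hold (agent $1$ needs $S_1 \ge (q+1)K$ after removing $h_1$ from $A_2$; agent $2$ needs $S_1 \le S_2 + 4K + \max_{g\in A_1}u_2(g)$, which follows from $S_1 \le (q+3.5)K$). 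So this instance admits an EF1 allocation with the given size vector regardless of whether $X$ has a balanced bipartition, and the reduction is unsound. This does not refute every possible calibration, but it shows the obstacle is essential rather than routine: making the auxiliaries large enough to force their placement also makes them the ``removed'' good in the EF1 max term, which trivializes the very inequality you need to be tight, while keeping them small leaves alternative placements feasible. The paper's uniform $+4K$ shift is precisely what resolves this tension, because it makes bundle \emph{cardinality} (which the size vector controls) do the forcing. Separately, note that the claim of \emph{weak} NP-completeness also requires a pseudopolynomial-time algorithm (the paper invokes \Cref{lem:exist_gen_const_pseudopoly}); a reduction using small numbers only shows that strong hardness has not been established.
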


\begin{proof}
Clearly, this problem is in NP.
The ``weak'' aspect is demonstrated later in \Cref{lem:exist_gen_const_pseudopoly}, which says that there exists a pseudopolynomial-time algorithm that solves this problem for any constant number of agents.
Therefore, it suffices to show that this problem is NP-hard.

To demonstrate NP-hardness, we shall reduce from the NP-hard problem \textsc{Balanced Multi-Partition} with $p = 2$ (see \Cref{prop:balanced_multi_partition}).
Let a \textsc{Balanced Multi-Partition} instance be given with $p = 2$.
Without loss of generality, assume that $q \geq 2$.
Let $Y = \{y_1, \ldots, y_{2q+2}\}$ be a multiset such that $y_j = x_j$ for $j \in \{1, \ldots, 2q\}$, $y_{2q+1} = 2K$, and $y_{2q+2} = 0$.
We claim that $Y$ can be partitioned into two multisets $Y_1$ and $Y_2$ of equal cardinalities (i.e., of size $q+1$ each) with sums $(q+3)K$ and $(q+1)K$ respectively if and only if $X$ can be partitioned into two multisets $X_1$ and $X_2$ of equal cardinalities and sums.
If the latter condition holds, then let $Y_1$ (resp.~$Y_2$) contain the corresponding elements in $X_1$ (resp.~$X_2$), and let $y_{2q+1} \in Y_1$ and $y_{2q+2} \in Y_2$---this gives an appropriate partition of~$Y$.
Conversely, if the former condition holds, then we show that $X$ can be partitioned appropriately.
Note that if $y_{2q+1} \in Y_2$, then there are at least $q-1 > 0$ integers in $\{y_1, \ldots, y_{2q}\}$ that are also in $Y_2$. 
Since every integer in $\{y_1, \ldots, y_{2q}\}$ is more than $K$, the sum of $Y_2$ will be more than $(q-1)K + 2K = (q+1)K$, which is a contradiction.
This means that $y_{2q+1} \in Y_1$.
Similarly, if $y_{2q+2} \in Y_1$, then there are exactly $q+1$ integers in $\{y_1, \ldots, y_{2q}\}$ that are in $Y_2$.
The sum of $Y_2$ will be more than $(q+1)K$, which is a contradiction.
Hence, $y_{2q+2} \in Y_2$.
Now, this means that $\{y_1, \ldots, y_{2q}\}$ must be partitioned into two multisets of equal cardinalities (i.e., of size $q$ each) 
with sum $(q+1)K$ each.
This induces an appropriate partition of~$X$.

Next, define a fair division instance as follows.
There are $n = 2$ agents and a set of goods $M = \{g_1, \ldots, g_{2q+6}\}$.
Agent $2$'s utility is such that $u_2(g_j) = y_j$ for $j \in \{1, \ldots, 2q+2\}$, $u_2(g_{2q+3}) = u_2(g_{2q+4}) = 0$, and $u_2(g_{2q+5}) = u_2(g_{2q+6}) = 2K$.
Agent $1$'s utility is such that $u_1(g) = u_2(g) + 4K$ for all $g \in M$.
The size vector $\vec{s}$ is $(q+2, q+4)$.
This reduction can be done in polynomial time.
We claim that there exists an EF1 allocation with size vector $\vec{s}$ in this instance if and only if $Y$ can be partitioned into two multisets $Y_1$ and $Y_2$ of equal cardinalities (i.e., of size $q+1$ each) with sums $(q+3)K$ and $(q+1)K$ respectively.

$(\Leftarrow)$ Let $J'_1$ and $J'_2$ be a partition of $\{1, \ldots, 2q+2\}$ of equal cardinalities such that $\sum_{j \in J'_1} y_j = (q+3)K$ and $\sum_{j \in J'_2} y_j = (q+1)K$.
Let $A_1 = \{g_j \mid j \in J'_1\} \cup \{g_{2q+5}\}$ and $A_2 = M \setminus A_1$ be the two agents' bundles respectively.
From agent~$1$'s perspective, agent~$1$'s bundle has utility $$((q+3)K + 2K) + (q+2)(4K) = (5q+13)K,$$ agent $2$'s bundle has utility $$((q+1)K + 2K) + (q+4)(4K) = (5q+19)K,$$ and a most valuable good in agent~$2$'s bundle (e.g., $g_{2q+6}$) has utility $6K$, so agent $1$ is EF1 towards agent $2$.
From agent~$2$'s perspective, agent~$2$'s bundle has utility $(q+1)K + 2K = (q+3)K$, agent~$1$'s bundle has utility $(q+3)K + 2K = (q+5)K$, and a most valuable good in agent $1$'s bundle (e.g., $g_{2q+5}$) has utility $2K$, so agent $2$ is EF1 towards agent $1$.
Accordingly, $(A_1, A_2)$ is an EF1 allocation with size vector $(q+2, q+4)$.

$(\Rightarrow)$ Let $(A_1, A_2)$ be an EF1 allocation with size vector~$\vec{s}$.
From agent $1$'s perspective, $u_1(M) = (10q+32)K$ and a most valuable good (e.g., $g_{2q+5}$) has utility $6K$.
For agent $1$ to be EF1 towards agent~$2$, we must have $$u_1(A_1) \geq \frac{(10q+32)K - 6K}{2} = (5q+13)K$$ and $$u_2(A_1) = u_1(A_1) - (q+2)(4K) \geq (q+5)K.$$
This means that $$u_1(A_2) = u_1(M) - u_1(A_1) \leq (5q+19)K$$ and $$u_2(A_2) = u_1(A_2) - (q+4)(4K) \leq (q+3)K.$$
On the other hand, from agent~$2$'s perspective, $u_2(M) = (2q+8)K$ and a most valuable good has utility $2K$.
For agent $2$ to be EF1 towards agent~$1$, we must have $$u_2(A_2) \geq \frac{(2q+8)K - 2K}{2} = (q+3)K$$ and $$u_1(A_2) = u_2(A_2) + (q+4)(4K) \geq (5q+19)K.$$
This means that $$u_2(A_1) = u_2(M) - u_2(A_2) \leq (q+5)K$$ and $$u_1(A_1) = u_2(A_1) + (q+2)(4K) \leq (5q+13)K.$$
By combining these inequalities, we conclude that these inequalities are tight, i.e., agent $1$'s utilities for both agents' bundles are \emph{exactly} $(5q+13)K$ and $(5q+19)K$ respectively so that the sum is $(10q+32)K$, and agent $2$'s utilities for both agents' bundles are \emph{exactly} $(q+5)K$ and $(q+3)K$ respectively so that the sum is $(2q+8)K$.
Additionally, both agents must each have a most valuable good worth $6K$ and~$2K$ to them respectively.
Without loss of generality, we may assume that $g_{2q+5} \in A_1$ and $g_{2q+6} \in A_2$ (note that $g_{2q+1}$ is also a most valuable good, but we use $g_{2q+5}$ and $g_{2q+6}$ for simplicity).

Since $g_{2q+6} \in A_2$, there are $q+3$ goods in $A_2 \setminus \{g_{2q+6}\}$ and $u_2(A_2 \setminus \{g_{2q+6}\}) = (q+3)K - 2K = (q+1)K$.
These goods are chosen from $M_1 = \{g_1, \ldots, g_{2q+1}\}$ and $M_0 = \{g_{2q+2}, g_{2q+3}, g_{2q+4}\}$.
Recall from the construction that $u_2(g) > K$ for all $g \in M_1$, and $u_2(g) = 0$ for all $g \in M_0$.
If $A_2 \setminus \{g_{2q+6}\}$ contains at least $q+1$ goods from $M_1$, then $u_2(A_2 \setminus \{g_{2q+6}\}) > (q+1)K$, a contradiction.
Therefore, $A_2 \setminus \{g_{2q+6}\}$ contains at most $q$ goods from $M_1$, and at least $3$ goods from $M_0$.
Since $|M_0| = 3$, we must have $M_0 \subseteq A_2$.
Note that $u_2(M_0) = 0$, so $u_2((A_2 \setminus \{g_{2q+6}\}) \setminus M_0) = (q+1)K$.
Therefore, the $q$ goods from~$M_1$ in agent $2$'s bundle have a total utility of $(q+1)K$.
These goods, together with $g_{2q+2}$, induce the set $Y_2$ with cardinality $q+1$ and sum $(q+1)K$.
Then, $Y_1 = Y \setminus Y_2$ and $Y_2$ give a required partition of $Y$.
\end{proof}

The proof of \Cref{thm:exist_gen_two_nphard} suggests that the problem is NP-hard even when the sizes of the two agents' bundles differ by \emph{exactly two}.
Note that this problem is in P when the sizes of the agents' bundles differ by \emph{at most one} (in fact, every such instance is a Yes-instance by \Cref{prop:exist_balanced}).
For two agents with binary utilities, we shall show later that the decision problem is in P (see \Cref{thm:exist_bin_const}).

\subsection{Constant Number of Agents}

Next, we discuss the complexity of the decision problem for a \emph{constant number of agents}.
In this case, we devise a pseudopolynomial-time algorithm for deciding the existence of an EF1 allocation with a given size vector.
This algorithm uses dynamic programming to check for such an allocation.

\begin{restatable}{lemma}{lemexistgenconstpseudopoly} \label{lem:exist_gen_const_pseudopoly}
Let an instance with $n$ agents and a size vector be given, where $n$ is a constant. Suppose that the utility of each good is an integer, and let $R = \max_{i\in N} u_i(M)$. 
Then, there exists an algorithm running in time polynomial in $m$ and~$R$ that decides whether the instance admits an EF1 allocation with the size vector.
\end{restatable}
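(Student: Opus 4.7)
The plan is to set up a dynamic program that scans the goods $g_1, g_2, \ldots, g_m$ in order and records, for each partial assignment, just enough information to evaluate EF1 on the final allocation. Since utilities are additive, whether agent $i$ is EF1 towards agent $j$ in an allocation $(A_1, \ldots, A_n)$ depends only on $u_i(A_i)$, $u_i(A_j)$, and $\max_{g \in A_j} u_i(g)$, because the most beneficial good to remove from $A_j$ from agent $i$'s viewpoint is $\argmax_{g \in A_j} u_i(g)$. I would therefore use the state
\[
\sigma = \bigl(a_1, \ldots, a_n;\, (u_{ij})_{i, j \in N};\, (M_{ij})_{i \neq j}\bigr),
\]
where, for the partial allocation $(A_1, \ldots, A_n)$ of the goods processed so far, $a_k = |A_k|$, $u_{ij} = u_i(A_j)$, and $M_{ij} = \max_{g \in A_j} u_i(g)$ (with $M_{ij} = 0$ when $A_j = \emptyset$).

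I would start by marking the all-zero state as reachable. Then, for each good $g_t$ processed in order, each currently reachable state is extended in $n$ ways, one per choice of the recipient $k$: increment $a_k$ by $1$, add $u_i(g_t)$ to each $u_{ik}$, and update each $M_{ik}$ to $\max(M_{ik}, u_i(g_t))$. After all $m$ goods have been processed, I would declare the instance a Yes-instance if and only if some reachable final state satisfies $a_k = s_k$ for every $k \in N$ and $u_{ii} \geq u_{ij} - M_{ij}$ for every ordered pair $i \neq j$. The latter inequality is exactly the EF1 condition, and the convention $M_{ij} = 0$ when $A_j = \emptyset$ makes it hold automatically in that case.

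For the running time, each $a_k$ lies in $\{0, 1, \ldots, m\}$, while each $u_{ij}$ and each $M_{ij}$ lies in $\{0, 1, \ldots, R\}$ because utilities are non-negative integers and every bundle has utility at most $R$ to every agent. Hence the number of states is at most $(m+1)^n \cdot (R+1)^{2n^2}$, which is polynomial in $m$ and $R$ since $n$ is a constant. Each state has at most $n$ outgoing transitions per good processed, so both the table size and the total work are polynomial in $m$ and $R$.

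The main thing to argue carefully is that $\sigma$ is a sufficient statistic, i.e., two partial allocations sharing the same $\sigma$ can be interchanged in the DP: any further assignment of $g_{t+1}, \ldots, g_m$ updates $\sigma$ in a way that depends only on $\sigma$ and the assignments (not on the actual identities of the goods already placed), and the final EF1 test depends only on the final values of $u_{ij}$, $M_{ij}$, and $a_k$. Correctness follows by a straightforward induction on the number of goods processed, showing that a state is marked reachable if and only if some partial allocation of $g_1, \ldots, g_t$ realises it.
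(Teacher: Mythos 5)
Your proposal is correct and matches the paper's proof essentially exactly: the paper's dynamic program tracks the same state $(|A_k|,\, u_i(A_j),\, \max_{g \in A_j} u_i(g))$ over all agent pairs, yielding the same $(R+1)^{2n^2}(m+1)^n$ state count and the same per-good transition and final EF1/size-vector check. The only cosmetic difference is that you explicitly spell out the sufficient-statistic induction and the EF1 inequality $u_{ii} \geq u_{ij} - M_{ij}$, which the paper leaves implicit.
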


\begin{proof}
The algorithm uses dynamic programming.
We construct a table with $m$ columns and $L$ rows, where $L$ will be specified later.
The index of each row is represented by a tuple containing $a_{i,j}$, $b_{i,j}$, and $c_i$ for each $i, j \in N$, i.e., $(a_{1,1},a_{1,2}, \ldots, a_{n,n}, b_{1,1},b_{1,2}, \ldots, b_{n,n}, c_1, \ldots, c_n)$. 
The value of $a_{i,j}$ is the utility of agent $j$'s bundle from agent $i$'s perspective, i.e., $a_{i,j} = u_i(A_j)$; the value of $b_{i,j}$ is the utility of a most valuable good in agent $j$'s bundle from agent $i$'s perspective, i.e., $b_{i,j} = \max_{g \in A_j} u_i(g)$ (note that this value is zero if $A_j = \emptyset$); and the value of $c_i$ is the number of goods in agent $i$'s bundle. 
Note that $a_{i,j}, b_{i,j} \in \{0, \ldots, R\}$ and $c_i \in \{0, \ldots, m\}$, so there are $L = (R+1)^{2n^2}(m+1)^n$ rows, which is polynomial in $m$ and $R$.
An entry in column $q$ represents whether an allocation involving $\{g_1, \ldots, g_q\}$ is possible for the tuple representing the row, and is either positive or negative.

Initialize every entry to negative. Consider the $n$ possibilities of adding $g_1$ to each of the agents' bundles respectively, and set the corresponding entries in the first column of the table to positive. 
In particular, for each $j \in N$, the row represented by the tuple such that $a_{i,j} = b_{i,j} = u_i(g_1)$ and $c_j = 1$ for all $i \in N$, and zero for all other values in the tuple, has the entry (in the first column) set to positive.

Now, for each $q \in \{2, \ldots, m\}$ in ascending order, for each positive entry in column $q-1$, consider the $n$ possibilities of adding $g_q$ into each of the $n$ agents' bundles respectively, and set the corresponding entry for each of these possibilities in column $q$ to positive.
Once this procedure is done, consider all positive entries in column $m$. If some positive entry corresponds to an EF1 allocation with the required size vector, then the instance admits such an EF1 allocation; otherwise, no such allocation exists.

Since $n$ is a constant, the number of entries in the table is polynomial in $m$ and $R$. At each column, there is a polynomial number of rows with positive entries, and hence the update is polynomial. Finally, checking for a feasible EF1 allocation at the last column can also be done in polynomial time.
\end{proof}

We now move to \emph{polynomial-time} algorithms that determine the existence of an EF1 allocation with a given size vector.
Recall that such an algorithm exists for two agents with identical utilities (\Cref{thm:exist_iden_two}).
However, it turns out that such an algorithm does not exist for three or more agents with identical utilities, unless $\text{P} = \text{NP}$.
In particular, we establish the NP-hardness of the decision problem via a reduction from \textsc{Balanced Multi-Partition} with $p = 2$, an NP-hard problem by \Cref{prop:balanced_multi_partition}.

\begin{restatable}{theorem}{thmexistidenconstnphard} \label{thm:exist_iden_const_nphard}
\textsc{Reformability} is weakly NP-complete for $n \geq 3$ agents with identical utilities, where $n$ is a constant.
\end{restatable}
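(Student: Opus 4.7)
The plan is to reduce from \textsc{Balanced Multi-Partition} with $p = 2$, which is NP-hard by \Cref{prop:balanced_multi_partition}. Membership in NP is already noted, and the weak aspect of the hardness follows from the pseudopolynomial-time algorithm of \Cref{lem:exist_gen_const_pseudopoly}. Given a \textsc{Balanced Multi-Partition} instance with $X = \{x_1, \ldots, x_{2q}\}$, $K < x_j \leq 2K$, and $\sum_j x_j = 2(q+1)K$, assuming $q \geq 4$ (instances with smaller $q$ are solvable in constant time), we would build an $n$-agent identical-utility instance containing three kinds of goods: $2q$ ``$x$-goods'' with $u(g_j) = x_j$; $q+2$ ``medium goods'' of value $K$ each; and, when $n \geq 4$, $n - 3$ ``heavy goods'' of value $(q+1)K$ each. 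The target size vector is $(q, q, q+2, 1, 1, \ldots, 1)$.

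For the forward direction, we would take a balanced 2-partition $X_1, X_2$ of $X$ and assign the corresponding $x$-goods to agents $1$ and $2$, all medium goods to agent $3$, and one heavy good to each agent $i \geq 4$. The resulting utilities are $(q+1)K$, $(q+1)K$, $(q+2)K$, and $(q+1)K$ for each $i \geq 4$. All pairwise EF1 inequalities would be verified directly, with the critical ones attaining equality.

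For the backward direction, we would argue that any EF1 allocation with the target size vector forces the partition structure. First, each heavy good must lie in some $A_i$ with $i \geq 4$: otherwise some heavy good sits in $A_j$ with $j \in \{1,2,3\}$, forcing some agent $i' \geq 4$ to receive a non-heavy good worth at most $2K$, but then the EF1 inequality from $i'$ toward $j$ requires $2K$ to exceed either $(q-1)K$ (for $j \in \{1,2\}$) or $(q+1)K$ (for $j = 3$), both impossible when $q \geq 4$. The main obstacle is the next step, showing that agent $3$ receives no $x$-good: supposing $A_3$ contains $r_3 \geq 1$ $x$-goods (each of value at least $K+1$), we would derive two incompatible bounds on $u(A_1) + u(A_2)$. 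On one hand, $u(A_3) - M_3 \geq (q+1)K + r_3 - 1$, so the EF1 constraints from agents $1, 2$ toward agent $3$ give $u(A_1) + u(A_2) \geq 2(q+1)K + 2(r_3 - 1)$. On the other hand, the total-utility identity $u(A_1)+u(A_2)+u(A_3) = (3q+4)K$ combined with $u(A_3) \geq (q+2)K + r_3$ yields $u(A_1) + u(A_2) \leq 2(q+1)K - r_3$. These two bounds are incompatible when $r_3 \geq 1$ and collide only at $r_3 = 0$, which is precisely why the medium-good value $K$ and agent $3$'s bundle size $q+2$ are calibrated as they are. Once $r_3 = 0$, agent $3$ holds all medium goods, and the same EF1 constraints then squeeze $u(A_1) = u(A_2) = (q+1)K$, yielding the required balanced partition of $X$.
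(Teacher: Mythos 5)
Your reduction is correct, but it uses a genuinely different gadget from the paper's. The paper also reduces from \textsc{Balanced Multi-Partition} with $p=2$, but its instance consists of goods $g_1,\dots,g_{2q}$ of values $x_1,\dots,x_{2q}$ together with $n$ goods $h_1,\dots,h_n$ each worth $(q+1)K$, with size vector $(q+1,q+1,1,\dots,1)$: one first shows every agent must hold exactly one $h$-good, and then the EF1 condition of the singleton agents \emph{towards} agents $1$ and $2$ pins each residual bundle to value exactly $(q+1)K$. You instead give agents $1$ and $2$ only the partition goods, concentrate $q+2$ unit-$K$ goods on a third agent, and drive the argument through the EF1 condition of agents $1,2$ \emph{towards} agent $3$ --- structurally the same mechanism the paper uses in \Cref{thm:exist_iden_gen_nphard} and \Cref{thm:optimal_iden_const}, and indeed the opposite direction to the one the paper highlights for this theorem in the remark following \Cref{thm:exist_iden_gen_nphard}. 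I checked your key steps: the heavy-good placement argument, the bound $u(A_3)-M_3 \geq (q+1)K + r_3 - 1$, and the incompatibility $2(q+1)K + 2(r_3-1) \leq u(A_1)+u(A_2) \leq 2(q+1)K - r_3$ forcing $3r_3 \leq 2$, all go through, as does the forward direction. Two minor points: bounded-$q$ instances are decidable in \emph{polynomial} (not constant) time, which is all you need to justify assuming $q \geq 4$; and your construction needs that lower bound on $q$ plus a case split on heavy-good placement, whereas the paper's gadget avoids both. Either way the conclusion is the same, with NP membership and the pseudopolynomial algorithm of \Cref{lem:exist_gen_const_pseudopoly} supplying the ``weak'' qualifier exactly as you state.
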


\begin{proof}
Clearly, this problem is in NP.
The ``weak'' aspect is demonstrated in \Cref{lem:exist_gen_const_pseudopoly}.
Therefore, it suffices to show that this problem is NP-hard.

To show NP-hardness, we shall reduce from the NP-hard problem \textsc{Balanced Multi-Partition} with $p = 2$ (see \Cref{prop:balanced_multi_partition}).
Let a \textsc{Balanced Multi-Partition} instance with $p = 2$ be given.
Define a fair division instance as follows.
There are $n \geq 3$ agents with identical utilities, and a set of goods $M = \{g_1, \ldots, g_{2q}, h_1, \ldots, h_n\}$ such that $u(g_j) = x_j$ for $j \in \{1, \ldots, 2q\}$ and $u(h_k) = (q+1)K$ for $k \in \{1, \ldots, n\}$.
The size vector $\vec{s}$ is such that $s_1 = s_2 = q+1$ and $s_k = 1$ for all $k \in \{3, \ldots, n\}$.
This reduction can be done in polynomial time.
We claim that there exists an EF1 allocation with size vector $\vec{s}$ in this instance if and only if $X$ can be partitioned into multisets $X_1, X_2$ of equal cardinalities and sums.

$(\Leftarrow)$ Let $(X_1, X_2)$ be such a partition.
Define an allocation such that agent $k$ receives $h_k$ for $k \in N$, agent $1$ additionally receives the $q$ goods corresponding to the integers in $X_1$, and agent $2$ additionally receives the $q$ goods corresponding to the integers in $X_2$.
We show that this allocation is EF1.
The utilities of agent $1$'s and agent $2$'s bundles are $2(q+1)K$ each, and the utilities of the other agents' bundles are $(q+1)K$ each, so agents $1$ and $2$ do not envy anyone else.
Therefore, it remains to check that agent $k$ is EF1 towards agents $1$ and $2$ for $k \in \{3, \ldots, n\}$.
Upon the removal of the single good $h_1$ (resp.~$h_2$) from agent~$1$'s (resp.~agent $2$'s) bundle, the remaining bundle has utility $(q+1)K$, so agent $k$ is EF1 towards agent~$1$ (resp.~agent $2$).
Therefore, the allocation is EF1, as desired.

$(\Rightarrow)$ Let $(A_1, \ldots, A_n)$ be an EF1 allocation with size vector $(q+1, q+1, 1, \ldots, 1)$.
If agent $1$'s bundle has at least two goods from $\{h_1, \ldots, h_n\}$, then her bundle without the most valuable good has utility more than $(q+1)K$ since her bundle also contains other goods with positive utility.
Agent~$3$, having a bundle of utility at most $(q+1)K$, will not be EF1 towards agent $1$, contradicting the assumption that the allocation is EF1.
Therefore, agent $1$'s bundle has at most one good from $\{h_1, \ldots, h_n\}$; likewise for agent $2$'s bundle. 
This means that every agent receives \emph{exactly} one good from $\{h_1, \ldots, h_n\}$. 
Having established this, agent $3$'s bundle has a utility of $(q+1)K$, and agent $3$ is EF1 towards agent $1$. 
This means that agent $1$'s bundle without a most valuable good (say, some $h_k$) must have utility at most $(q+1)K$.
The same argument can be used to show the same statement for agent $2$'s bundle.
This means that the goods $\{g_1, \ldots, g_{2q}\}$ must be divided between agents $1$ and $2$ with each agent receiving a utility of \emph{exactly} $(q+1)K$.
Such a division of $\{g_1, \ldots, g_{2q}\}$ induces a partition of~$X$ into two multisets of equal cardinalities and sums, as desired.
\end{proof}

Since the decision problem is NP-hard even for identical utilities, it must also be NP-hard for general utilities.
We now consider another class of utilities: binary utilities.
When there are $n$ agents, every good~$g$ belongs to one of $2^n$ types of goods represented by the vector $(u_1(g), \ldots, u_n(g))$.
For the purpose of determining whether an EF1 allocation exists, it suffices to consider different goods of the same type as \emph{indistinguishable}.
We say that two allocations are in the same equivalence class if the number of goods of each type that each agent has is the same in both allocations.
If $\mathcal{A}$ is an EF1 allocation, then all allocations in the same equivalence class as $\mathcal{A}$ are also EF1 and are reachable from~$\mathcal{A}$.
We shall proceed with a result which enumerates all (essentially equivalent) EF1 allocations in time polynomial in the number of goods, provided that the number of agents is a constant.

\begin{restatable}{lemma}{lemenumeratebinconst} \label{lem:enumerate_bin_const}
Let an instance with $n$ agents with binary utilities and a size vector be given, where $n$ is a constant.
Then, there exists an algorithm running in time polynomial in $m$ that outputs all equivalence classes of EF1 allocations with the size vector.
\end{restatable}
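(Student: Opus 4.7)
The plan is to exploit the fact that with binary utilities and a constant number of agents, the number of distinct \emph{types} of goods is constant, and so an equivalence class of allocations is completely described by a constant-size integer matrix whose entries are bounded by $m$. This reduces the enumeration problem to checking a polynomial-in-$m$ number of candidate matrices.

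First I would formalize the notion of a type: for each good $g$, its type is the vector $\tau(g) = (u_1(g), \ldots, u_n(g)) \in \{0,1\}^n$, so there are at most $K \leq 2^n$ types in any instance. For each type $\tau$, let $m_\tau$ denote the number of goods of type $\tau$. Two allocations lie in the same equivalence class precisely when, for every agent $i$ and every type $\tau$, agent $i$ receives the same number of goods of type $\tau$. Thus an equivalence class is in bijection with a nonnegative integer matrix $X = (x_{i,\tau})_{i \in N,\, \tau \in \{0,1\}^n}$ satisfying the feasibility constraints $\sum_{i \in N} x_{i,\tau} = m_\tau$ for each type $\tau$ and $\sum_{\tau} x_{i,\tau} = s_i$ for each agent~$i$.

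Next I would rewrite EF1 under binary utilities as a linear constraint on $X$. Under binary utilities, $u_i(A_j \setminus \{g\}) = u_i(A_j) - u_i(g)$, and whenever $u_i(A_j) > 0$ there exists $g \in A_j$ with $u_i(g) = 1$; when $u_i(A_j) = 0$ every $g \in A_j$ satisfies $u_i(A_j \setminus \{g\}) = 0 \leq u_i(A_i)$. Hence EF1 of agent $i$ towards agent $j$ is equivalent to
\[
    u_i(A_i) \;\geq\; u_i(A_j) - 1,
\]
which in terms of $X$ reads $\sum_{\tau} \tau_i\, x_{i,\tau} \geq \sum_{\tau} \tau_i\, x_{j,\tau} - 1$, where $\tau_i$ is the $i$-th coordinate of $\tau$. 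These are at most $n(n-1)$ inequalities, each involving at most $2^n \cdot n$ terms.

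The algorithm then simply enumerates every candidate matrix $X$: each of the $n \cdot 2^n$ entries lies in $\{0, 1, \ldots, m\}$, giving at most $(m+1)^{n \cdot 2^n}$ matrices, which is polynomial in $m$ since $n$ is constant. For each candidate, feasibility and the EF1 inequalities can be verified in time polynomial in $m$, and every valid $X$ is output as an equivalence class. Correctness follows because distinct feasible matrices correspond to distinct equivalence classes, and a class is EF1 if and only if its defining matrix satisfies the inequalities above (since EF1 is preserved within an equivalence class).

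The main conceptual step — rather than a technical obstacle — is justifying that EF1 is a \emph{class-level} property under binary utilities and that it reduces to finitely many linear inequalities in $X$; once this is observed, the bound $K \leq 2^n$ on the number of types, together with the assumption that $n$ is constant, immediately yields the polynomial bound on the number of candidate matrices and hence on the running time.
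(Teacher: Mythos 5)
Your proposal is correct and follows essentially the same approach as the paper: both enumerate the polynomially many (since $n$ is constant) integer count-vectors describing how many goods of each of the $\leq 2^n$ types each agent holds, and check the size and EF1 conditions for each candidate class. Your explicit reformulation of EF1 as the linear inequalities $u_i(A_i) \geq u_i(A_j) - 1$ is a slightly more detailed justification of the class-level check that the paper leaves implicit, but it does not change the argument.
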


\begin{proof}
An agent's bundle can be represented by a $2^n$-vector where each component of the vector is the number of goods of that type in her bundle.
Since the number of goods of each type is an integer between $0$ and $m$, there are $m+1$ possible values for each component, and hence at most $(m+1)^{2^n}$ possible vectors to represent each agent's bundle.
Allocations in an equivalence class can be represented by an ordered collection of $n$ such vectors---one for each agent---and there are at most $((m+1)^{2^n})^n$ such collections.
Since $((m+1)^{2^n})^n$ is polynomial in $m$ whenever $n$ is a constant, there are at most a polynomial number of possible equivalence classes of allocations in the instance.
For each of these equivalence classes of allocations, we can check whether an allocation in the equivalence class is EF1 and has the required size vector in polynomial time, and output the equivalence class if so.
Therefore, the overall running time is polynomial in $m$, as claimed.
\end{proof}

\Cref{lem:enumerate_bin_const} implies that the decision problem can be solved efficiently for binary utilities.

\begin{restatable}{theorem}{thmexistbinconst} \label{thm:exist_bin_const}
\textsc{Reformability} is in P for a constant number of agents with binary utilities.
\end{restatable}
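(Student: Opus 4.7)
The plan is to derive Theorem \ref{thm:exist_bin_const} as an almost immediate corollary of Lemma \ref{lem:enumerate_bin_const}. Recall that \textsc{Reformability} asks whether the given instance admits any EF1 allocation with the prescribed size vector. Since Lemma \ref{lem:enumerate_bin_const} already provides, for constantly many agents with binary utilities, an algorithm that enumerates every equivalence class of such EF1 allocations in time polynomial in $m$, our decision procedure only needs to invoke this enumeration and check whether the returned list is nonempty.

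More concretely, given an instance with a constant number $n$ of agents with binary utilities and a target size vector $\vec{s}$, the algorithm proceeds as follows. First, it runs the algorithm from Lemma \ref{lem:enumerate_bin_const} on the input, obtaining the (polynomial-size) list of all equivalence classes of EF1 allocations with size vector $\vec{s}$. Second, it answers ``yes'' if this list is nonempty and ``no'' otherwise. Correctness is immediate: the instance is a Yes-instance of \textsc{Reformability} precisely when at least one EF1 allocation with size vector $\vec{s}$ exists, which is exactly what a nonempty enumeration witnesses. The overall runtime is polynomial in $m$ because the enumeration step is, and reading off whether its output is empty takes no additional overhead.

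There is essentially no obstacle beyond citing the lemma; the only substantive content lies in Lemma \ref{lem:enumerate_bin_const} itself, and Theorem \ref{thm:exist_bin_const} is a one-line consequence. I would therefore keep the written proof very short, just noting the reduction to the enumeration lemma and observing that emptiness-checking is trivially polynomial.
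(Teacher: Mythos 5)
Your proposal is correct and matches the paper's proof essentially verbatim: both invoke Lemma~\ref{lem:enumerate_bin_const} to enumerate the equivalence classes of EF1 allocations with the given size vector in polynomial time and answer ``Yes'' exactly when the enumeration is nonempty. No gaps.
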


\begin{proof}
Use the algorithm as described in \Cref{lem:enumerate_bin_const} to enumerate all equivalence classes with an EF1 allocation with the size vector, and output ``Yes'' if and only if such an equivalence class is found.
Note that if some allocation in an equivalence class is EF1, then all allocations in the same equivalence class are also EF1.
\end{proof}

\subsection{General Number of Agents}

For any constant number of agents, the problem of determining the existence of an EF1 allocation with a given size vector is \emph{weakly} NP-hard by \Cref{thm:exist_iden_const_nphard} (even for identical utilities).
For a general number of agents, the pseudopolynomial-time algorithm as described in \Cref{lem:exist_gen_const_pseudopoly} does not work, since that algorithm is at least exponential in the number of agents.
Therefore, the decision problem for a general number of agents might not be \emph{weakly} NP-hard.
We show that the problem is indeed \emph{strongly} NP-hard, even for identical utilities, by a reduction from \textsc{$3$-Partition}, a strongly NP-hard problem \citep[p.~224]{GareyJo79}.

\begin{restatable}{theorem}{thmexistidengennphard} \label{thm:exist_iden_gen_nphard}
\textsc{Reformability} is strongly NP-complete for identical utilities.
\end{restatable}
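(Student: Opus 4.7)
The problem is in NP since, given a candidate allocation, both the size vector and the EF1 condition can be checked in polynomial time (as already noted for \textsc{Reformability} in general). For strong NP-hardness, I plan to reduce from \textsc{3-Partition}: given a multiset $\{x_1, \ldots, x_{3q}\}$ and a bound $B$ with $B/4 < x_j < B/2$ and $\sum_j x_j = qB$, decide whether it can be partitioned into $q$ triples each summing to $B$. Because all values in a \textsc{3-Partition} instance are polynomially bounded in the input, a polynomial-time reduction whose numerical parameters are bounded by $O(B)$ will establish strong NP-hardness.

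The construction I have in mind creates a fair division instance with $n = 2q$ agents sharing a common utility $u$, small goods $g_1, \ldots, g_{3q}$ with $u(g_j) = x_j$, and big goods $h_1, \ldots, h_{2q}$ with $u(h_k) = B$. The target size vector is $\vec{s} = (\underbrace{4, \ldots, 4}_{q}, \underbrace{1, \ldots, 1}_{q})$. I claim an EF1 allocation with size vector $\vec{s}$ exists if and only if the given \textsc{3-Partition} instance is a Yes-instance.

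The forward direction is easy: from a valid 3-partition $(X_1, \ldots, X_q)$, give each of the first $q$ agents the three small goods corresponding to some $X_i$ together with one big good, and give each of the last $q$ agents a remaining big good. Every first-$q$ bundle has utility $2B$ and every last-$q$ bundle has utility $B$, so the only possible envy is from a last-$q$ agent towards a first-$q$ agent, and removing the big good from the envied bundle witnesses EF1.

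The substantive direction is the converse. Assuming an EF1 allocation with size vector $\vec{s}$ exists, I will establish two structural claims in order. \emph{Claim 1: every last-$q$ agent holds a big good.} If instead one held a small good of value $x_j < B/2$, then for any first-$q$ agent the bundle's remainder after removing the maximum-utility good exceeds $3B/4$, and hence exceeds $x_j$; this follows in each of the cases ``one big good plus three smalls'', ``four smalls'' (any three of which sum to more than $3\cdot B/4$ since each small exceeds $B/4$), and ``two or more bigs'' (where the remainder retains a big good of value $B$), violating EF1. \emph{Claim 2: every first-$q$ agent holds exactly one big good.} If some first-$q$ agent held two or more big goods, then after removing the max-utility good her bundle would still contain a big good of value $B$ plus strictly positive additional utility, giving remainder greater than $B$ and contradicting EF1 from any last-$q$ agent (whose utility is exactly $B$). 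Combined with the fact that $q$ big goods remain after Claim 1, pigeonhole forces each first-$q$ agent to hold exactly one big good plus three smalls summing to some $b_i$. EF1 from a last-$q$ agent then demands $b_i \leq B$, and combined with $\sum_i b_i = qB$ this forces $b_i = B$ for every $i$, recovering a valid 3-partition. The main obstacle is the case analysis in Claim 1, but the strict inequalities $B/4 < x_j < B/2$ leave enough slack to close every configuration of big and small goods that a first-$q$ agent might hold.
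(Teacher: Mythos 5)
Your proof is correct, and it reduces from \textsc{3-Partition} just as the paper does, but the gadget is genuinely different. The paper uses $n=q+1$ agents and $3q+6$ goods: a single ``dump'' agent receives six filler goods worth $K/5$ each (bundle worth $6K/5$, hence worth at least $K$ after removing one good), and the enforcement runs in the direction \emph{main agents must be EF1 towards the dump agent}, which lower-bounds each main bundle by $K$; a counting argument then pins every main bundle to exactly $K$. Your construction instead uses $2q$ agents and $5q$ goods, placing one big good of value $B$ \emph{inside} each main bundle to absorb the ``up to one good'' slack, and $q$ singleton reference agents each holding a big good; the enforcement runs in the opposite direction---\emph{reference agents must be EF1 towards the main agents}---which upper-bounds each triple of small goods by $B$, after which $\sum_i b_i = qB$ closes the argument. (The paper explicitly remarks that its Theorem 4.7 reduction checks EF1 from the partition-holding agents towards the rest, in contrast to its Theorem 3.6 reduction for constant $n$; your gadget is in fact the Theorem 3.6 direction, scaled up to a non-constant number of reference agents.) Both routes exploit the bounds $B/4 < x_j < B/2$ in the same way, and your case analysis in Claims 1 and 2 is complete: every configuration of big and small goods in a size-4 bundle leaves a post-removal remainder exceeding $3B/4$ (resp.\ exceeding $B$ when two big goods are present), so the contradictions go through. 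The paper's gadget is somewhat more economical in agents and goods; yours avoids the need to argue that the dump agent receives exactly the filler goods, replacing it with a clean pigeonhole on the big goods. Either establishes strong NP-hardness since all constructed utilities are $O(B)$.
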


\begin{proof}
Clearly, this problem is in NP.
Therefore, it suffices to show that it is strongly NP-hard.

To this end, we shall reduce from \textsc{$3$-Partition}.
In \textsc{$3$-Partition}, we are given positive integers $q$ and $K$, and a multiset $X = \{x_1, \ldots, x_{3q}\}$ of positive integers of total sum $qK$.
The problem is to decide whether $X$ can be partitioned into multisets $X_1, \ldots, X_q$ of equal cardinalities and sums, i.e., for each $i \in \{1, \ldots, q\}$, $|X_i| = 3$ and the sum of all the integers in $X_i$ is exactly $K$.
This decision problem is known to be strongly NP-hard, even if $K/4 < x_j < K/2$ for every $j \in \{1, \ldots, 3q\}$ \citep[p.~224]{GareyJo79}.

Let an instance of \textsc{$3$-Partition} be given.
Define a fair division instance as follows.
There are $n = q+1$ agents with identical utilities, and a set of goods $M = \{g_1, \ldots, g_{3q+6}\}$ such that $u(g_j) = x_j$ for $j \in \{1, \ldots, 3q\}$ and $u(g_k) = K/5$ for $k \in \{3q+1, \ldots, 3q+6\}$.
The size vector $\vec{s}$ is such that $s_j = 3$ for all $j \in \{1, \ldots, q\}$ and $s_{q+1} = 6$.
This reduction can be done in polynomial time.
We claim that there exists an EF1 allocation with size vector $\vec{s}$ in this instance if and only if there exists a partition of $X$ into multisets $X_1, \ldots, X_q$ of equal cardinalities and sums.

$(\Leftarrow)$ Let such a partition be given. 
Define an allocation such that each of the first $q$ agents receives the three goods corresponding to each multiset, and agent $q + 1$ receives $\{g_{3q+1}, \ldots, g_{3q+6}\}$. 
Note that agents $1$ to $q$ have bundles worth $K$ each, and agent $q+1$ has a bundle worth $6K/5$. 
Since each of the goods from agent $(q+1)$'s bundle is worth exactly $K/5$, every agent is EF1 towards agent $q+1$. Accordingly, the allocation is EF1.

$(\Rightarrow)$ Let an EF1 allocation with size vector $\vec{s}$ be given. 
Note that every good is worth at least $K/5$, so agent $(q+1)$'s bundle without a most valuable good is worth at least $K$. 
If agent $q+1$ receives a bundle worth more than $6K/5$, then some agent receives a bundle worth less than $K$, and will not be EF1 towards agent $q+1$. 
Therefore, agent $q+1$ must receive a bundle worth at most $6K/5$. 
The only way this is possible is when agent $q+1$ receives $\{g_{3q+1}, \ldots, g_{3q+6}\}$. 
Now, since agents $1$ to $q$ are EF1 towards agent $q+1$, these agents must each receive a bundle worth at least $K$. The only way this is possible is when each of them receives a bundle worth \emph{exactly} $K$. 
This induces the desired partition.
\end{proof}

The reduction in \Cref{thm:exist_iden_gen_nphard} requires us to check whether the set of agents holding the goods with utilities equal to the integers in the partition problem is EF1 towards the remaining agents.
This stands in contrast to the reduction in \Cref{thm:exist_iden_const_nphard}, which entails checking whether the complementary set of agents is EF1 towards the remaining agents.

For binary utilities, the decision problem for a \emph{constant} number of agents is in P (\Cref{thm:exist_bin_const}).
The crucial reason is that in this case, the number of different types of goods is also a constant, which allows us to enumerate all the (essentially equivalent) EF1 allocations in polynomial time (\Cref{lem:enumerate_bin_const}).
This is no longer possible when the number of agents is non-constant.
In fact, we show that the decision problem is NP-hard for a general number of agents with binary utilities.
To this end, we reduce from \textsc{Graph $k$-Colorability}, which is NP-hard for any fixed $k \geq 3$ \citep[p.~191]{GareyJo79}.

\begin{restatable}{theorem}{thmexistbingennphard} \label{thm:exist_bin_gen_nphard}
\textsc{Reformability} is NP-complete for binary utilities.
\end{restatable}

\begin{proof}
Clearly, this problem is in NP.
Therefore, it suffices to show that it is NP-hard.

To this end, we shall reduce from \textsc{Graph $k$-Colorability} with $k \geq 3$.
In \textsc{Graph $k$-Colorability}, we are given a graph $G = (V, E)$ and a positive integer $k$, and the problem is to decide whether $G$ is $k$-colorable, i.e., whether each of the vertices in $V$ can be assigned one of $k$ colors in such a way that no two adjacent vertices are assigned the same color.
This decision problem is known to be NP-hard for any fixed $k \geq 3$ \citep[p.~191]{GareyJo79}.

Let an instance of \textsc{Graph $k$-Colorability} be given with a fixed $k \geq 3$, where $V = \{v_1, \ldots, v_p\}$ and $E = \{e_1, \ldots, e_q\}$. 
Define a fair division instance as follows.
There are $n = q+k$ agents where the first $q$ agents are called \emph{edge agents} and the last $k$ agents are called \emph{color agents}.
There are $m = kp$ goods.
Each color agent assigns zero utility to every good.
For $r \in \{1, \ldots, q\}$, if $e_r = \{v_i, v_j\}$, then the $r^\text{th}$ edge agent assigns a utility of $1$ to each of $g_i$ and $g_j$, and zero utility to every other good.
Note that only the first $p$ goods correspond to vertices and are valuable to the edge agents whose corresponding edges are incident to the vertices; the remaining $(k-1)p$ goods are not valuable to any agent.
The size vector $\vec{s}$ is such that $s_r = 0$ for each edge agent $r$ and $s_c = p$ for each color agent $c$.
This reduction can be done in polynomial time.
We claim that there exists an EF1 allocation with size vector $\vec{s}$ in this instance if and only if $G$ is $k$-colorable.

$(\Leftarrow)$ Let a proper $k$-coloring of $G$ be given.
For $t \in \{1, \ldots, p\}$, if vertex $v_t$ is assigned the color~$c$, then allocate $g_t$ to the color agent~$c$.
Since there are $p$ such goods and each color agent is supposed to have $p$ goods in her bundle, it is possible to allocate all of these goods.
Subsequently, allocate the remaining goods arbitrarily among the color agents until every color agent has exactly $p$ goods.
We claim that this allocation is EF1.
Every color agent assigns zero utility to every good and is thus EF1 towards every other agent.
Each edge agent assigns a utility of $1$ to only two goods, so we only need to check that these two goods are in different bundles.
Indeed, these two goods correspond to vertices which are adjacent to each other in $G$, and proper coloring of $G$ implies that the vertices are of different colors, so the corresponding goods are in different color agents' bundles.
Hence, the allocation is EF1.

$(\Rightarrow)$ Let an EF1 allocation with size vector $\vec{s}$ be given.
For $t \in \{1, \ldots, p\}$, if the good $g_t$ is with color agent $c$, assign $v_t$ to color $c$.
We claim that this coloring is a proper $k$-coloring of $G$.
Since there are $k$ color agents, at most $k$ colors are used.
Therefore, it suffices to check that adjacent vertices are assigned different colors.
Let $v_i, v_j \in V$ be adjacent vertices.
Then, there exists an edge $e_r = \{v_i, v_j\}$.
The edge agent $r$ assigns a utility of $1$ to each of $g_i$ and~$g_j$.
Since agent $r$'s bundle is empty, $g_i$ and $g_j$ must be in different (color agents') bundles in order for agent $r$ to be EF1 towards every other agent.
This implies that $v_i$ and~$v_j$ are assigned different colors.
\end{proof}

Even though the decision problem is NP-hard for identical \emph{or} binary utilities, we prove next that it can be solved in polynomial time for identical \emph{and} binary utilities.
Indeed, this can be done by checking whether the total number of valuable goods is within a certain threshold which can be computed in polynomial time.
This threshold is in fact the sum over all agents $j\in N$ of the minimum between $s_j$ and $1 + \min_{i \in N} s_i$.
If the number of valuable goods is within this threshold, then we can distribute the valuable goods in a round-robin fashion first, thereby ensuring EF1.
Conversely, if the number of valuable goods is beyond this threshold, then some agent $i$ with the minimum $s_i$ will not be EF1 towards another agent who receives at least $s_i + 2$ valuable goods.

\begin{restatable}{theorem}{thmexistidenbingen} \label{thm:exist_idenbin_gen}
\textsc{Reformability} is in P for identical binary utilities.
\end{restatable}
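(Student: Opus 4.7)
The plan is to exhibit a threshold criterion that is checkable in polynomial time. Let $s_{\min} := \min_{i \in N} s_i$, let $V$ denote the number of goods $g$ with $u(g)=1$, and let $T := \sum_{j \in N} \min(s_j,\, s_{\min}+1)$. All three quantities can be computed in linear time, and the algorithm outputs Yes iff $V \le T$.

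To justify correctness, I would prove the equivalence: an EF1 allocation with size vector $\vec{s}$ exists if and only if $V \le T$. For necessity, under identical binary utilities an allocation is EF1 iff, writing $v_j$ for the number of valuable goods in agent $j$'s bundle, $\max_j v_j \le \min_i v_i + 1$ (the condition is vacuous when $\max_j v_j = 0$, and otherwise, removing one valuable good from the envied bundle is optimal). Any agent $i_0$ with $s_{i_0} = s_{\min}$ must satisfy $v_{i_0} \le s_{\min}$, so $\min_i v_i \le s_{\min}$, which forces $v_j \le \min(s_j,\, s_{\min}+1)$ for every $j$. Summing gives $V \le T$.

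For sufficiency, I would distribute the valuable goods by a capped round-robin and then pad each bundle with zero-value goods. In round $r$ (for $r = 1, 2, \dots, s_{\min}+1$), every agent $j$ with $s_j \ge r$ receives one valuable good, as long as any remain. If $V \le n \cdot s_{\min}$, the procedure halts within the first $s_{\min}$ rounds and the resulting $v_j$'s differ pairwise by at most one. If $n \cdot s_{\min} < V \le T$, then after the first $s_{\min}$ rounds every agent has $v_j = s_{\min}$, and the remaining $V - n s_{\min} \le T - n s_{\min} = n - |\{j : s_j = s_{\min}\}|$ valuable goods fit into round $s_{\min}+1$, in which only agents with $s_j \ge s_{\min}+1$ participate; those chosen reach $v_j = s_{\min}+1$, while every other agent stays at $v_j = s_{\min}$. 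In both cases $\max_j v_j - \min_i v_i \le 1$, so EF1 holds, and filling bundles with zero-value goods to meet $\vec{s}$ preserves EF1 since such goods contribute zero utility.

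The only delicate point is verifying that the capped round-robin respects every agent's capacity $s_j$ simultaneously with the EF1 bound $s_{\min}+1$; the case split on $V \le n s_{\min}$ versus $V > n s_{\min}$ resolves this cleanly and is the main (mild) obstacle.
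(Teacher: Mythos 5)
Your proposal is correct and follows essentially the same route as the paper's proof: your threshold $T = \sum_j \min(s_j, s_{\min}+1)$ equals the paper's $s_0 n + n - n_0$, your necessity argument (the minimum-size agent caps everyone at $s_{\min}+1$ valuable goods) matches theirs, and your capped round-robin with the case split on $V \le n s_{\min}$ versus $V > n s_{\min}$ is the same construction used for sufficiency.
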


\begin{proof}
Let $\vec{s}$ be the given size vector.
Let $s_0 = \min_{i\in N} s_i$ be the size of the smallest bundle, and $n_0 = |\{i \in N \mid s_i = s_0\}|$ be the number of agents with exactly $s_0$ goods in their bundles.
We claim that an EF1 allocation with size vector $\vec{s}$ exists if and only if the number of valuable goods is at most $s_0n + n - n_0$.
Note that checking whether the number of valuable goods is at most $s_0n + n - n_0$ can be done in polynomial time.

If an EF1 allocation with size vector $\vec{s}$ exists, then an agent with bundle size $s_0$ receives at most $s_0$ valuable goods.
For this agent to be EF1 towards every other agent, every other agent can only receive at most $s_0 + 1$ valuable goods.
Since there are $n - n_0$ agents with bundle size at least $s_0 + 1$ and $n_0$ agents with bundle size exactly $s_0$, the total number of valuable goods is at most $(n - n_0)(s_0 + 1) + n_0s_0 = s_0n + n - n_0$.

Conversely, if the number of valuable goods is at most $s_0n + n - n_0$, then we can allocate the valuable goods in a round-robin fashion up to the bundle size of each agent, followed by the non-valuable goods.
Note that every agent's bundle size is at least $s_0$.
If there are at most $s_0n$ valuable goods, then these valuable goods can be distributed fairly with the difference in the number of valuable goods between agents being at most one, and hence the allocation is EF1.
Otherwise, the first $s_0n$ valuable goods can be distributed so that every agent receives $s_0$ of them.
Since there are $n - n_0$ agents with bundle size at least $s_0 + 1$ and at most $n - n_0$ valuable goods left, the remaining valuable goods can be arbitrarily allocated to these agents so that each of these agents receives at most one more valuable good.
Then, each agent receives $s_0$ or $s_0 + 1$ valuable goods, and so the allocation is EF1.
\end{proof}

\section{Optimal Number of Exchanges} \label{sec:optimal}

In this section, we consider the complexity of computing the \emph{optimal number of exchanges} required to reach an EF1 allocation from an initial allocation.

Recall that the decision problem in \Cref{sec:exist} is to determine whether there exists an EF1 allocation that can be reached from a given initial allocation.
This is equivalent to determining whether the optimal number of exchanges to reach an EF1 allocation is finite or not.
We have established a few scenarios where there exist polynomial-time algorithms for this task: (a) two agents with identical utilities (\Cref{thm:exist_iden_two}), (b) any constant number of agents with binary utilities (\Cref{thm:exist_bin_const}), and (c) any number of agents with identical binary utilities (\Cref{thm:exist_idenbin_gen}).
For these scenarios, we can run the respective polynomial-time algorithms to decide whether such an EF1 allocation exists---if none exists, then the optimal number of exchanges is $\infty$.
Therefore, for the proofs in this section pertaining to these scenarios, we proceed with the assumption that such an EF1 allocation exists.
We will show that the problem of computing the optimal number of exchanges for these scenarios is also in P; our algorithms can be modified to compute an optimal \emph{sequence} of exchanges as well.

For the remaining scenarios where the decision problem in \Cref{sec:exist} is NP-hard, it is NP-hard to even decide whether the optimal number of exchanges to reach an EF1 allocation is finite or not.
Therefore, for these scenarios, we shall focus on the special case where the given size vector is \emph{balanced}, so that the optimal number of exchanges is guaranteed to be finite (see \Cref{prop:exist_balanced}).
Even with this assumption, we will show that the computational problem for these scenarios remains NP-hard.

For convenience, we refer to as \textsc{Optimal Exchanges} the problem of deciding---given an instance, an initial allocation in the instance, and a number $k$---whether the optimal number of exchanges required to reach an EF1 allocation is at most $k$.
Note that the optimal number of exchanges in the scenarios mentioned above are finite.
By the proof of \Cref{prop:reachable}, the optimal number of exchanges in these scenarios is polynomial in the number of agents and the number of goods.
As a result, \textsc{Optimal Exchanges} is in NP regardless of the number of agents, as we can easily verify whether an exchange path starting from the given initial allocation indeed reaches an EF1 allocation using at most $k$ exchanges.

\subsection{Two Agents}

We begin with the case of two agents.
For two agents with identical utilities, we show that there exists a polynomial-time algorithm that computes the optimal number of exchanges.
This algorithm performs the exchanges until an EF1 allocation is reached, while keeping track of the number of exchanges required.
The algorithm is ``greedy'' in the sense that at each step, it performs an exchange involving the most valuable good from the agent whose bundle has the higher utility, and the least valuable good from the other agent.
We demonstrate that this choice is the best for the number of exchanges required to reach an EF1 allocation.

\begin{restatable}{theorem}{thmoptimalidentwo} \label{thm:optimal_iden_two}
\textsc{Optimal Exchanges} is in P for two agents with identical utilities.
\end{restatable}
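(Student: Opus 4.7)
My plan is to reduce the problem to a polynomial-size enumeration via a structural lemma. By \Cref{thm:exist_iden_two}, we can decide in polynomial time whether an EF1 allocation with the initial size vector exists; if not, the answer is $\infty$. Otherwise, presort the goods once in non-increasing order of the common utility $u$, and without loss of generality assume $u(A_1) \ge u(A_2)$ (otherwise relabel the agents).

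For each $k \in \{0, 1, \ldots, \min(s_1, s_2)\}$, define the \emph{$k$-greedy allocation} $(B_1^k, B_2^k)$ as the single-shot result of moving the top-$k$ goods of $A_1$ (the $k$ goods of $A_1$ with largest utility) to $A_2$ and moving the bottom-$k$ goods of $A_2$ to $A_1$. The proposed algorithm iterates $k$ from $0$ upward and returns the smallest $k$ for which $(B_1^k, B_2^k)$ is EF1; checking EF1 for each $k$ is polynomial, and there are at most $\min(s_1, s_2)+1$ values of $k$, so the total running time is polynomial.

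Correctness of the algorithm hinges on the following structural lemma: for every $k \ge 0$, if \emph{any} $k$-exchange starting from $(A_1, A_2)$ yields an EF1 allocation, then the $k$-greedy exchange does as well. To prove it, I plan an exchange argument. Given a $k$-exchange EF1 allocation $(B_1, B_2)$ determined by swapped sets $X \subseteq A_1$ and $Y \subseteq A_2$ with $|X| = |Y| = k$, I would locally transform $(X, Y)$ toward the greedy choice $(T_k, C_k)$, where $T_k$ and $C_k$ denote the top-$k$ goods of $A_1$ and the bottom-$k$ goods of $A_2$ respectively. Each local step replaces a good in $X \setminus T_k$ with a higher-utility good from $T_k \setminus X$, or a good in $Y \setminus C_k$ with a lower-utility good from $C_k \setminus Y$. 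Such a step weakly shifts utility from the $A_1$-side to the $A_2$-side and weakly reduces the largest good remaining on the $A_1$-side, and I would show that these monotone changes preserve the EF1 inequality.

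The main obstacle will be the \emph{overshoot} regime where the $k$-greedy exchange makes the $A_1$-side strictly lighter than the $A_2$-side. The EF1 inequality to check then flips direction, and the maximum good in the new heavier bundle $B_2^k$ is one of the top-$k$ goods originally in $A_1$. I would handle this case by splitting off that top good and arguing separately that the remaining utility gap is small enough for EF1, via a careful accounting of the largest two goods on the heavier side after the exchange. Completing this overshoot case analysis is the main technical difficulty of the proof.
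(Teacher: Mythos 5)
Your algorithm is in substance the same as the paper's: the paper's greedy procedure, after $k$ steps, produces exactly your $k$-greedy allocation (it repeatedly swaps the currently most valuable good on the heavy side with the currently least valuable good on the light side, and these are shown to be the top-$k$ of one original bundle and the bottom-$k$ of the other), and the answer returned is the same. The gap is in your correctness argument. The structural lemma you plan to prove --- ``for every $k \geq 0$, if any $k$-exchange yields an EF1 allocation, then the $k$-greedy exchange does as well'' --- is false, and the local replacement step does not preserve EF1. Concretely, take identical utilities with $A_1 = \{a, b, c\}$, $A_2 = \{d, e, f\}$, $u(a) = u(b) = 10$, $u(c) = u(d) = u(e) = u(f) = 0$, and $k = 2$. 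The $2$-exchange with $X = \{a, c\}$, $Y = \{d, e\}$ gives bundles $\{b,d,e\}$ and $\{a,c,f\}$, each of utility $10$, which is envy-free; but the $2$-greedy exchange ($T_2 = \{a,b\}$, $C_2 = \{d,e\}$) gives bundles of utility $0$ and $20$ whose heavier side still contains a second good worth $10$ after removing the best one, so it is not EF1. The single local step replacing $c \in X \setminus T_2$ by the higher-utility $b \in T_2 \setminus X$ is exactly what destroys EF1, so the ``overshoot'' case you flag as the main technical difficulty is not merely hard --- the statement you would be trying to prove there is false.

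The algorithm itself still works, because the lemma is only needed for $k$ strictly below the smallest index $k^*$ at which the $k$-greedy allocation is EF1, and for such $k$ the correct claim is that \emph{no} $k$-exchange is EF1 at all (so the lemma is vacuous rather than proved by transformation). Establishing that vacuity requires a different, aggregate argument, which is what the paper does: for any $k < T$, the light agent's utility after any $k$ swaps is at most her utility after $T-1$ greedy steps (since the greedy transfers the $k$ largest possible gains), while the heavy agent's bundle minus its best good is at least the corresponding greedy quantity; since the greedy allocation at step $T-1$ violates EF1 in the light-towards-heavy direction, so does every $k$-swap. This must be paired with the invariant, proved separately, that the originally heavier agent remains EF1 towards the lighter one at every greedy step, which rules out the overshoot regime before termination. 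Replacing your local-exchange lemma with this extremal comparison (or an equivalent one) is what is needed to close the proof.
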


\begin{proof}
We show that we can compute the optimal number of exchanges in polynomial time.
We assume that an EF1 allocation with the given size vector exists.
If the initial allocation~$\mathcal{A}$ is EF1, we are done. 
Otherwise, assume without loss of generality that agent $2$ has a higher utility than agent $1$ in $\mathcal{A}$.
Let $\vec{s} = (s_1, s_2)$ be the size vector.
By rearranging the labels of the goods, assume that the goods are in non-increasing order of utility, i.e., $u(g_1) \geq u(g_2) \geq \cdots \geq u(g_m)$.
The algorithm proceeds as follows: repeatedly exchange the most valuable good in agent $2$'s bundle with the least valuable good in agent~$1$'s bundle until agent $1$ is EF1 towards agent $2$.
The optimal number of exchanges required is then the number of exchanges made in this algorithm.

First, we claim that in each exchange, a good from $\{g_1, \ldots, g_{s_1}\}$ in agent $2$'s bundle is always exchanged with a good from $\{g_{s_1+1}, \ldots, g_m\}$ in agent $1$'s bundle.
Suppose on the contrary that this is not true.
Let $\mathcal{A}'$ be the allocation just before we make the exchange that violates this claim.
The only way for the claim to be violated is when $A'_1 = \{g_1, \ldots, g_{s_1}\}$ and $A'_2 = \{g_{s_1+1}, \ldots, g_m\}$.
If $s_1 \leq s_2$, then by \Cref{lem:exist_iden_two}, there does not exist an EF1 allocation with size vector $\vec{s}$---this would contradict our assumption that an EF1 allocation with size vector $\vec{s}$ exists.
Otherwise, $s_1 > s_2$, and every good in $A'_1$ has a higher utility than every good in $A'_2$, so agent $1$ is EF1 towards agent $2$.
This would contradict our assumption that the algorithm has not terminated.
Hence, the claim is true.

By the claim in the previous paragraph, the total number of exchanges made is at most $\min \{s_1, s_2\} \leq m$.
Each exchange can be performed in polynomial time, and so the algorithm terminates in polynomial time.
We show next that an EF1 allocation is obtained when the algorithm terminates.
It suffices to show that agent $2$ is EF1 towards agent $1$ in the final allocation.
To this end, we show that agent $2$ is EF1 towards agent~$1$ at every step of the algorithm.
Let the initial allocation be $\mathcal{A}^0 = \mathcal{A}$, and let $\mathcal{A}^t$ be the allocation after $t$ steps of the algorithm.
Note that $\mathcal{A}^0$ satisfies the condition that agent $2$ is EF1 towards agent $1$, since agent $2$ has a higher utility than agent $1$ in $\mathcal{A}$.
We show that if $\mathcal{A}^t$ has the property that agent $2$ is EF1 towards agent $1$ and agent $1$ is \emph{not} EF1 towards agent $2$, then $\mathcal{A}^{t+1}$ has the property that agent $2$ is EF1 towards agent $1$.
Suppose that $g \in A^t_2$ is exchanged with $h \in A^t_1$. 
Since agent~$1$ is not EF1 towards agent~$2$ in ${\cal A}^t$, it holds that 
$u(A^t_1) < u(A^t_2 \setminus \{g\})$. 
Thus, 
\begin{align*}
    u(A^{t+1}_2) &\geq u(A^{t+1}_2 \setminus \{h\}) 
    = u(A^t_2 \setminus \{g\}) 
    > u(A^t_1) 
    \geq u(A^t_1 \setminus \{h\}) 
    = u(A^{t+1}_1 \setminus \{g\}),
\end{align*}
showing that agent $2$ is EF1 towards agent $1$ in $\mathcal{A}^{t+1}$.

Finally, we show that the optimal number of exchanges required to reach an EF1 allocation is at least the number of exchanges made in this algorithm.
Let $T$ be the number of exchanges made in this algorithm.
For each $t \in \{1, \ldots, T\}$, let $g^t \in A_2$ (resp.~$h^t \in A_1$) be the good in agent $2$'s bundle (resp.~agent $1$'s bundle) that is exchanged at the $t^\text{th}$ step of the algorithm.
Note that $u(g^1) \geq \cdots \geq u(g^T) \geq u(h^T) \geq \cdots \geq u(h^1)$.
Also, we have $u(A^{T-1}_1) < u(A^{T-1}_2 \setminus \{g^T\})$, where $g^T$ is a good with the highest utility in agent $2$'s bundle in $\mathcal{A}^{T-1}$.
Suppose on the contrary that only $k \leq T-1$ exchanges are required to reach an EF1 allocation.
Since $\mathcal{A}$ is not EF1, we have $1 \leq k < T$.
Let $(B_1, B_2)$ be the EF1 allocation after the $k$ exchanges.
The utility of $B_1$ is upper-bounded by the utility of $A_1$ after adding $k$ goods of the highest utility from $A_2$ and removing $k$ goods of the lowest utility from $A_1$, so
\begin{align*}
    u(B_1) &\leq u((A_1 \cup \{g^1, \ldots, g^k\}) \setminus \{h^1, \ldots, h^k\}) \\
    &= u(A_1) + \sum_{t=1}^k (u(g^t) - u(h^t)) \\
    &\leq u(A_1) + \sum_{t=1}^{T-1} (u(g^t) - u(h^t)) \\
    &= u((A_1 \cup \{g^1, \ldots, g^{T-1}\}) \setminus \{h^1, \ldots, h^{T-1}\}) = u(A^{T-1}_1).
\end{align*}
On the other hand, the utility of $B_2$ without the most valuable good is lower-bounded by the utility of~$A_2$ after adding $k$ goods of the lowest utility from $A_1$ and removing $k+1 \leq T$ goods of the highest utility from $A_2$, so we have
\begin{align*}
    u(B_2 \setminus \{g\}) &\geq u((A_2 \cup \{h^1, \ldots, h^k\}) \setminus \{g^1, \ldots, g^{k+1}\}) \\
    &= u(A_2) - u(g^1) - \sum_{t=1}^k (u(g^{t+1}) - u(h^t)) \\
    &\geq u(A_2) - u(g^1) - \sum_{t=1}^{T-1} (u(g^{t+1}) - u(h^t)) \\
    &= u((A_2 \cup \{h^1, \ldots, h^{T-1}\}) \setminus \{g^1, \ldots, g^T\}) 
    = u(A^{T-1}_2 \setminus \{g^T\})
\end{align*}
for every $g \in B_2$.
This implies that $$u(B_1) \leq u(A^{T-1}_1) < u(A^{T-1}_2 \setminus \{g^T\}) \leq u(B_2 \setminus \{g\})$$ for all $g \in B_2$.
Hence, agent $1$ is not EF1 towards agent $2$ in $(B_1, B_2)$, contradicting the assumption that $(B_1, B_2)$ is EF1.
It follows that at least $T$ exchanges are required to reach an EF1 allocation.
\end{proof}

However, if the utilities are not identical, then computing the optimal number of exchanges is NP-hard, even for balanced allocations.
We show this by modifying the construction from the NP-hardness proof of \Cref{thm:exist_gen_two_nphard} in determining the existence of an EF1 allocation with a given size vector.

\begin{restatable}{theorem}{thmoptimalgentwonphard} \label{thm:optimal_gen_two_nphard}
\textsc{Optimal Exchanges} is NP-complete for two agents, even when the initial allocation is balanced.
\end{restatable}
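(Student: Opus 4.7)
The plan is to establish NP-hardness by reduction from \textsc{Balanced Multi-Partition} with $p=2$, adapting the construction used in the proof of \Cref{thm:exist_gen_two_nphard}. Membership in NP follows at once from the proof of \Cref{prop:reachable}: any two allocations with the same size vector are at distance at most $m$ in the exchange graph, so a witnessing optimal exchange sequence is a polynomial-size certificate, which can be verified in polynomial time by checking that each step is a legal single exchange and that the final allocation is EF1.

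For NP-hardness, I will construct a two-agent fair division instance along the lines of the \Cref{thm:exist_gen_two_nphard} reduction, extended by a small collection of auxiliary goods whose sole purpose is to make the initial allocation balanced while preserving the partition-encoding structure. The auxiliary goods will be chosen so that in any EF1 allocation they are essentially ``pinned'' to specific agents, either by being high-value for only one side (so that moving them destroys EF1) or by being worthless to both (so that their distribution is irrelevant). After this padding, the role of the ``number goods'' (those encoding $x_1, \ldots, x_{2q}$ together with the heavy goods $g_{2q+5}, g_{2q+6}$ of \Cref{thm:exist_gen_two_nphard}) is unchanged: any EF1 allocation must split them between the two agents in exactly the way prescribed by a balanced partition of $X$, via the same utility inequalities as in the original reduction. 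The initial allocation will place all the number goods in agent $2$'s bundle, so that the initial allocation is balanced but very far from EF1 (agent $1$'s initial utility is essentially zero). The exchange threshold $k$ will equal the number of number goods that any EF1 allocation must transfer from agent $2$ to agent $1$; a balanced partition of $X$ yields such an exchange sequence of length exactly $k$, and conversely any EF1 allocation reached in at most $k$ exchanges induces a balanced partition.

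The main obstacle will be ruling out ``shortcut'' EF1 allocations reachable in fewer than $k$ exchanges. Because each exchange moves at most one good in each direction, fewer than $k$ exchanges can move at most $k-1$ number goods to agent $1$, and the tightness inequalities from \Cref{thm:exist_gen_two_nphard} should then show this is insufficient for agent $1$ to become EF1-envy-free towards agent $2$ (agent $1$'s bundle would fall short of the required utility). The delicate task is to tune the auxiliary goods so that the initial allocation is genuinely balanced, so that in every EF1 allocation the auxiliary goods sit as expected, and so that no ``creative'' shuffling of auxiliary goods among themselves can achieve EF1 cheaply. Working out these tuning details, and re-verifying the \Cref{thm:exist_gen_two_nphard} tightness inequalities in the augmented instance, will account for most of the effort.
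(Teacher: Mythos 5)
Your proposal matches the paper's proof in all essentials: the paper pads the instance from \Cref{thm:exist_gen_two_nphard} with $2q+6$ goods of zero utility to both agents (your second "pinning" option), starts from the balanced allocation in which agent $2$ holds all the original goods, sets the threshold to $k = q+2$, and rules out shortcuts exactly as you describe—since every original good $g$ satisfies $u_1(g)=u_2(g)+4K$ while padding goods contribute nothing, the EF1 bounds $u_1(B_1)\geq(5q+13)K$ and $u_2(B_1)\leq(q+5)K$ force exactly $q+2$ original goods to cross to agent $1$. The approach and the key counting/tightness argument are the same as the paper's.
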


\begin{proof}
Clearly, this problem is in NP.
To demonstrate NP-hardness, we modify the construction from the proof of \Cref{thm:exist_gen_two_nphard}.
Recall that we have $Y = \{y_1, \ldots, y_{2q+2}\}$ with $K < y_j \leq 2K$ for $j \in \{1, \ldots, 2q\}$, $y_{2q+1} = 2K$, and $y_{2q+2} = 0$.
A fair division instance~$\mathcal{I}'$ is defined with $n = 2$ agents and a set of goods $M' = \{g_1, \ldots, g_{2q+6}\}$.
Agent $2$'s utility is such that $u_2(g_j) = y_j$ for $j \in \{1, \ldots, 2q+2\}$, $u_2(g_{2q+3}) = u_2(g_{2q+4}) = 0$, and $u_2(g_{2q+5}) = u_2(g_{2q+6}) = 2K$.
Agent $1$'s utility is such that $u_1(g) = u_2(g) + 4K$ for all $g \in M$.
The size vector $\vec{s}\,'$ is $(q+2, q+4)$.
In \Cref{thm:exist_gen_two_nphard}, it was proven that there exists an EF1 allocation with size vector $\vec{s}\,'$ in this instance if and only if $Y$ can be partitioned into two multisets $Y_1$ and $Y_2$ of equal cardinalities (i.e., of size $q+1$ each) with sums $(q+3)K$ and $(q+1)K$ respectively.
Both problems were proven to be NP-hard.

Define a new fair division instance $\mathcal{I}$ as follows.
There are $n = 2$ agents and a set of goods $M = \{g_1, \ldots, g_{4q+12}\}$.
For $j \in \{1, \ldots, 2q+6\}$, the utility of $g_j$ for each agent is identical to that in the original fair division instance $\mathcal{I}'$.
For $j \in \{2q+7, \ldots, 4q+12\}$, we have $u_i(g_j) = 0$ for $i \in \{1, 2\}$.
The size vector $\vec{s}$ is $(2q+6, 2q+6)$.
In the initial allocation $\mathcal{A}$, agent $1$ has $A_1 = \{g_{2q+7}, \ldots, g_{4q+12}\}$ and agent $2$ has $A_2 = \{g_1, \ldots, g_{2q+6}\}$.
This reduction can be done in polynomial time.
We claim that the optimal number of exchanges required to reach an EF1 allocation from $\mathcal{A}$ is at most $q+2$ in $\mathcal{I}$ if and only if there exists an EF1 allocation with size vector $\vec{s}\,'$ in $\mathcal{I}'$.

$(\Leftarrow)$ Let $(A'_1, A'_2)$ be an EF1 allocation with size vector $\vec{s}\,'$ in $\mathcal{I}'$.
Note that $A'_1 \subseteq A_2$ and $|A'_1| = q+2$.
In $\mathcal{A}$, exchange the $q+2$ goods from $A'_1$ with any $q+2$ goods in $A_1$.
This requires a total of $q+2$ exchanges.
The new allocation has exactly the same goods as that in $(A'_1, A'_2)$ along with other goods with zero utility, so it is EF1.
Therefore, the optimal number of exchanges to reach an EF1 allocation from $\mathcal{A}$ is at most $q+2$.

$(\Rightarrow)$ Suppose that an EF1 allocation $\mathcal{B}$ is reached from $\mathcal{A}$ after $t \leq q+2$ exchanges in $\mathcal{I}$.
We may assume that every good is not exchanged more than once.
By the same reasoning as in the proof of \Cref{thm:exist_gen_two_nphard}, we must have $u_1(B_1) \geq (5q+13)K$ and $u_2(B_1) \leq (q+5)K$.
Since $t$ goods are transferred from $A_2$, we have $$u_1(B_1) = u_2(B_1) + t(4K) \leq (q+5)K + (q+2)(4K) = (5q+13)K.$$
This means that the inequalities for $u_1(B_1)$ are tight, and we have $u_1(B_1) = 5q+13$ and $t = q+2$.
Letting $M_1 = A_2 \cap B_1$, we have $|M_1| = q+2$ and $M_1 \subseteq A_2 \subseteq M'$.
Since $(B_1, B_2)$ is an EF1 allocation, the allocation that removes all goods with zero utility is also EF1, namely, $(M_1, M' \setminus M_1)$.
This induces an EF1 allocation with size vector $\vec{s}\,'$ in $\mathcal{I}'$.
\end{proof}

\subsection{Constant Number of Agents}

While a polynomial-time algorithm to compute the optimal number of exchanges exists for two agents with identical utilities, this is not the case for three or more agents unless $\text{P} = \text{NP}$.
Indeed, we establish the NP-hardness of this problem via a reduction from \textsc{Balanced Multi-Partition} with $p \geq 2$, an NP-hard problem by \Cref{prop:balanced_multi_partition}.

\begin{restatable}{theorem}{thmoptimalidenconst} \label{thm:optimal_iden_const}
\textsc{Optimal Exchanges} is NP-complete for $n \geq 3$ agents with identical utilities, where $n$ is a constant, even when the initial allocation is balanced.
\end{restatable}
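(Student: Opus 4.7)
The plan is to reduce from \textsc{Balanced Multi-Partition} with $p = 2$, which is NP-hard by \Cref{prop:balanced_multi_partition}, combining the NP-hardness construction from \Cref{thm:exist_iden_const_nphard} with the dummy-good padding technique used in the proof of \Cref{thm:optimal_gen_two_nphard}. Given a BMP instance, I will construct a fair division instance with $n \geq 3$ agents having identical utilities whose ``core'' valuable goods are $g_1, \ldots, g_{2q}$ (with $u(g_j) = x_j$) and $h_1, \ldots, h_n$ (with $u(h_k) = (q+1)K$), together with a sufficient number of zero-utility dummy goods to form a balanced size vector $(s, s, \ldots, s)$. The initial balanced allocation $\mathcal{A}$ will be designed so that the valuable goods lie ``far'' from any EF1 configuration---for instance, concentrating all or most of them on a single reservoir agent while the other agents hold primarily dummies together with their own $h_k$. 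The threshold $k$ will be set to the minimum number of exchanges required to transport the valuable goods from their initial positions into the partition-based EF1 target identified in the proof of \Cref{thm:exist_iden_const_nphard}, which is an explicit polynomial in $n$ and $q$.

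For the forward direction ($\Leftarrow$), given a valid partition of $X$, I will explicitly exhibit a sequence of $k$ exchanges that moves each valuable good from its initial location to its target location (swapping it with a dummy from the destination agent), thereby realizing the partition-based EF1 target.

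The main obstacle lies in the backward direction ($\Rightarrow$): showing that if some EF1 allocation is reachable from $\mathcal{A}$ in at most $k$ exchanges, then a valid partition of $X$ must exist. The chief difficulty is that the balanced setting admits a much richer family of EF1 allocations than the non-balanced setting of \Cref{thm:exist_iden_const_nphard}, because dummy goods can absorb size slack while contributing no utility; in particular, one has to rule out EF1 configurations in which a ``reference'' agent retains some of the partition goods and thereby relaxes the constraints on agents $1$ and $2$. To overcome this, I will perform a careful case analysis that bounds the number of valuable goods which can change hands within $k$ exchanges, combined with utility accounting exploiting the EF1 inequalities imposed by the remaining agents $3, \ldots, n$ (who in the target are meant to hold only $h_k$ and dummies). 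The goal is to show that the exchange budget together with EF1 forces agents $1$ and $2$ to each end up with valuable-good utility exactly $2(q+1)K$---namely one $h_\ell$ plus $g$'s summing to $(q+1)K$---from which the partition of $X$ into two equal-cardinality, equal-sum multisets can be read off. The exact choice of $s$, of the initial distribution of dummies, and of the threshold $k$ will have to be tuned so that this tight correspondence holds uniformly for every BMP instance.
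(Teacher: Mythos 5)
Your high-level plan (reduce from \textsc{Balanced Multi-Partition}, concentrate the valuable goods on a reservoir agent, pad with zero-utility dummies to make the allocation balanced, and set the exchange budget to the cost of realizing the partition-based target) matches the spirit of the paper's proof. However, the backward direction, which you correctly identify as the main obstacle, is not actually carried out, and the specific construction you sketch appears to break there. If you reuse the gadget from \Cref{thm:exist_iden_const_nphard} --- each agent holding her own $h_k$ worth $(q+1)K$, with the $2q$ partition goods initially on the reservoir and a budget of roughly $2q$ exchanges --- then nothing forces all $2q$ partition goods to end up with agents $1$ and $2$ in exact sums. For instance, the reservoir agent can retain one good $g_j$ (so her bundle is $h_n$ plus $g_j$ plus dummies; removing $h_n$ leaves $x_j \le 2K \le (q+1)K$), agent $1$ can receive goods summing to $(q+1)K - x_j$, and the resulting allocation is still EF1 while inducing no valid partition of $X$. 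Ruling out such spurious EF1 targets is exactly the ``tuning'' you defer, and it is not a routine case analysis with your gadget.

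The paper sidesteps this with a different construction whose key idea you are missing: the reservoir agent $n$ initially holds \emph{all} valuable goods, namely the $pq$ partition goods (each worth more than $K$) together with $q+2$ auxiliary goods worth exactly $K$ each, and the budget is $pq$ exchanges. The budget then guarantees that agent $n$ \emph{retains} at least $q+2$ valuable goods, so her bundle minus one good is worth at least $(q+1)K$; EF1 forces every other agent to have utility at least $(q+1)K$, for a total of at least $p(q+1)K$, which is exactly $u(M_1)$. This squeeze makes every inequality tight: all of $M_1$ must leave agent $n$, each of agents $1,\dots,p$ gets utility exactly $(q+1)K$, and since each partition good exceeds $K$ in value, each such agent gets exactly $q$ of them, yielding the equal-cardinality, equal-sum partition. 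In other words, the retained goods of the reservoir agent are themselves the source of the $(q+1)K$ lower bound, rather than separate $h_k$ gadgets distributed to the agents; without that coupling between the exchange budget and the EF1 threshold, the reduction does not go through. (The paper also uses general $p = n - 1$ rather than $p = 2$, but that is a minor point.)
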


\begin{proof}
Clearly, this problem is in NP.
To demonstrate NP-hardness, we shall reduce from the NP-hard problem \textsc{Balanced Multi-Partition} (see \Cref{prop:balanced_multi_partition}).
Let a \textsc{Balanced Multi-Partition} instance be given with $p \geq 2$.
Define a fair division instance as follows.
There are $n = p + 1$ agents with identical utilities, and a set of goods $M = \{g_1, \ldots, g_{n(pq+q+2)}\}$ such that $u(g_j) = x_j$ for all $j \in \{1, \ldots, pq\}$, $u(g_j) = K$ for all $j \in \{pq+1, \ldots, pq+q+2\}$, and $u(g) = 0$ for the remaining goods $g$.
Note that every good in $M_1 = \{g_1, \ldots, g_{pq}\}$ has utility more than $K$ and at most $2K$, every good in $M_2 = \{g_{pq+1}, \ldots, g_{pq+q+2}\}$ has utility exactly $K$, and every good in $M \setminus (M_1 \cup M_2)$ has zero utility.
The size vector $\vec{s}$ is such that $s_i = pq+q+2$ for all $i \in N$.
In the initial allocation $\mathcal{A}$, agent~$n$ has $M_1 \cup M_2$, while the remaining agents have the remaining goods.
This reduction can be done in polynomial time.
We claim that the optimal number of exchanges required to reach an EF1 allocation from $\mathcal{A}$ is at most $pq$ if and only if $X$ can be partitioned into multisets $X_1, \ldots, X_p$ of equal cardinalities and sums.

$(\Leftarrow)$ Let $(X_1, \ldots, X_p)$ be such a partition.
For each $j \in \{1, \ldots, pq\}$, if $x_j$ is in $X_i$ for some $i \in \{1, \ldots, p\}$, exchange $g_j$ in agent $n$'s bundle with a zero-utility good in agent $i$'s bundle.
After $pq$ exchanges, each agent $i \in \{1, \ldots, p\}$ has $q$ goods corresponding to the integers in $X_i$ along with other goods with zero utility, and the total utility of these goods is $(q+1)K$.
Meanwhile, agent $n$ has $M_2$ along with other goods with zero utility.
There are $q+2$ goods with utility $K$ each, and so the utility of agent $n$'s bundle without a most valuable good is $(q+1)K$.
This shows that the resulting allocation is EF1.
Therefore, the optimal number of exchanges required to reach an EF1 allocation from $\mathcal{A}$ is at most~$pq$.

$(\Rightarrow)$ Suppose that the optimal number of exchanges required to reach an EF1 allocation from $\mathcal{A}$ is at most $pq$.
Let $\mathcal{B}$ be one such EF1 allocation after these exchanges.
Since at most $pq$ exchanges were made, agent $n$ has at least $|M_1 \cup M_2| - pq = q+2$ goods from $M_1 \cup M_2$ in $\mathcal{B}$.
Every good in $M_1 \cup M_2$ has utility at least $K$, so the utility of agent $n$'s bundle in $\mathcal{B}$ without a most valuable good is at least $(q+1)K$.
For every agent to be EF1 towards agent $n$, they must each have a utility of at least $(q+1)K$ in $\mathcal{B}$.
The total utility of agent $1$ to agent $p$'s bundles is therefore at least $p(q+1)K$, which is exactly the utility of $M_1$.
Since every good in $M_1$ has a higher utility than every good in $M_2$, this implies that the only possibility is that all the goods in $M_1$ go to agents $1$ to $p$, leaving $M_2$ (along with other goods with zero utility) with agent $n$.
This means that agent $n$'s bundle in $\mathcal{B}$ without a most valuable good has utility \emph{exactly} $(q+1)K$, and that the goods in $M_1$ must be split among agents $1$ to $p$ so that every agent receives a utility of $(q+1)K$ from $M_1$.
Furthermore, none of these agents can receive more than $q$ goods from~$M_1$; otherwise, if some agent receives at least $q+1$ goods from $M_1$, then the utility of her bundle is more than $(q+1)K$, which leaves another agent with utility less than $(q+1)K$, a contradiction.
Therefore, agent $1$ to $p$ each receives exactly $q$ goods from $M_1$.
Hence, it is possible to partition the goods in $M_1$ into $p$ bundles so that each bundle has exactly $q$ goods and the utility of each bundle is exactly $(q+1)K$.
This induces a partition of $X$ into $p$ multisets of equal cardinalities and sums. 
\end{proof}

Next, we consider binary utilities.
We have shown that deciding whether the optimal number of exchanges to reach an EF1 allocation is finite can be done in polynomial time (\Cref{thm:exist_bin_const}).
We now demonstrate that the same is true for computing this exact number.

\begin{restatable}{theorem}{thmoptimalbinconst} \label{thm:optimal_bin_const}
\textsc{Optimal Exchanges} is in P for a constant number of agents with binary utilities.
\end{restatable}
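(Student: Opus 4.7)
The plan is to compose the two immediately preceding lemmas. Let $\mathcal{A}$ be the given initial allocation and let $\vec{s}$ be its size vector. First, I will invoke Lemma~\ref{lem:enumerate_bin_const} to list, in time polynomial in $m$, every equivalence class of EF1 allocations with size vector $\vec{s}$; call this collection $\mathcal{F}$. By Proposition~\ref{prop:reachable}, any allocation reachable from $\mathcal{A}$ via exchanges must share the size vector $\vec{s}$, so if $\mathcal{F}$ is empty, then no EF1 allocation is reachable from $\mathcal{A}$, the optimal number of exchanges is $\infty > k$, and the answer is ``No''.

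Otherwise, for each equivalence class $\mathcal{C} \in \mathcal{F}$, I will apply Lemma~\ref{lem:optimal_bin_const} to compute the minimum number of exchanges $T_\mathcal{C}$ required to reach some allocation in $\mathcal{C}$ from $\mathcal{A}$; this invocation also runs in time polynomial in $m$. Since under binary utilities the EF1 property depends only on the multiset of goods of each type that each agent holds, every allocation in $\mathcal{C}$ is EF1, so reaching any member of $\mathcal{C}$ suffices. The optimal number of exchanges to reach an EF1 allocation from $\mathcal{A}$ is then $T^{\star} = \min_{\mathcal{C} \in \mathcal{F}} T_\mathcal{C}$, and I output ``Yes'' if and only if $T^{\star} \leq k$.

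Correctness follows because any minimum-length exchange sequence starting from $\mathcal{A}$ and ending at an EF1 allocation terminates at some specific EF1 allocation, which belongs to exactly one equivalence class in $\mathcal{F}$; conversely, for each class $\mathcal{C} \in \mathcal{F}$, $T_\mathcal{C}$ is realized by an explicit exchange sequence. Thus $T^{\star}$ equals the true optimum. The total running time is polynomial in $m$: the number of enumerated classes is polynomial in $m$ by Lemma~\ref{lem:enumerate_bin_const}, and each of the $|\mathcal{F}|$ calls to Lemma~\ref{lem:optimal_bin_const} costs time polynomial in $m$. There is no real obstacle once both lemmas are in hand; the two main difficulties---bounding the search space of candidate targets and efficiently computing exchange distance between equivalence classes---have already been addressed by Lemmas~\ref{lem:enumerate_bin_const} and~\ref{lem:optimal_bin_const}, respectively.
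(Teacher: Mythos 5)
Your proposal is correct and matches the paper's own proof essentially verbatim: both enumerate all equivalence classes of EF1 allocations with the given size vector via Lemma~\ref{lem:enumerate_bin_const}, compute the exchange distance to each class via Lemma~\ref{lem:optimal_bin_const}, and take the minimum. The extra remarks on the empty case and on why any member of a class suffices are fine but add nothing beyond what the paper already relies on.
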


Before we establish \Cref{thm:optimal_bin_const}, we first prove the following lemma, which shows that finding the optimal number of exchanges between two equivalence classes of allocations can be done efficiently.

\begin{restatable}{lemma}{lemoptimalbinconst} \label{lem:optimal_bin_const}
Let an instance with $n$ agents with binary utilities be given, where $n$ is a constant. 
Then, there exists an algorithm running in time polynomial in $m$ that computes the optimal number of exchanges required to reach some allocation in a given equivalence class from another given allocation.
\end{restatable}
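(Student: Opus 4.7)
Plan: The plan is to reduce the problem to a shortest-path computation in a polynomial-size auxiliary graph whose vertices are equivalence classes of allocations.

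Since $n$ is a constant, there are $T := 2^n = O(1)$ types of goods, each identified with a utility profile in $\{0,1\}^n$, and every equivalence class of allocations is uniquely represented by a matrix $M \in \mathbb{Z}_{\geq 0}^{n \times T}$ where $M_{i,t}$ is the number of type-$t$ goods in agent $i$'s bundle. Any matrix reachable via exchanges from $M[\mathcal{A}]$ must share the row sums (the size vector) and column sums (the type counts) of $M[\mathcal{A}]$, so the number of relevant matrices is at most $(m+1)^{nT} = \mathrm{poly}(m)$. An exchange between distinct agents $i, j$ swapping goods of distinct types $t, t'$ perturbs the matrix by subtracting $1$ from $M_{i,t}$ and $M_{j,t'}$ and adding $1$ to $M_{i,t'}$ and $M_{j,t}$; there are only $O(n^2 T^2) = O(1)$ such ``atomic'' perturbations. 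Crucially, such a perturbation is realizable at a matrix $M$---meaning it corresponds to a concrete swap of two goods between two actual bundles---if and only if the resulting matrix is entrywise non-negative, as both conditions amount to $M_{i,t} \geq 1$ and $M_{j,t'} \geq 1$.

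I would then build the auxiliary graph $H$ whose vertices are matrices in $\mathbb{Z}_{\geq 0}^{n \times T}$ sharing the row and column sums of $M[\mathcal{A}]$, with an edge from $M$ to $M'$ whenever $M' - M$ is one of the atomic perturbations and $M' \geq 0$. A path in $H$ from $M[\mathcal{A}]$ to $M[\mathcal{B}]$ of length $k$ can be lifted step by step to a sequence of $k$ concrete exchanges---by choosing any goods of the required types at each step---terminating at an allocation whose matrix is $M[\mathcal{B}]$, which is exactly in the equivalence class of $\mathcal{B}$; conversely, any sequence of exchanges starting from $\mathcal{A}$ and ending at an allocation in the equivalence class of $\mathcal{B}$ projects to a path in $H$ of the same length. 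Thus the distance from $M[\mathcal{A}]$ to $M[\mathcal{B}]$ in $H$ is precisely the quantity we wish to compute.

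Running BFS on $H$ starting from $M[\mathcal{A}]$ computes this distance in time $O(|H|) = \mathrm{poly}(m)$, since each vertex has $O(1)$ out-degree and neighbors can be enumerated in constant time. The main technical point, more verification than obstacle, is the biconditional between edges of $H$ and concrete exchanges, which follows immediately from the non-negativity characterization of realizability noted above.
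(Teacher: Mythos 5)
Your proposal is correct, but it takes a genuinely different route from the paper's proof. The paper fixes the target equivalence class and enumerates all polynomially many ``movements'' (transportation plans specifying, for each agent and each good type, how many goods of that type travel from each agent's initial bundle to each agent's final bundle); for each feasible movement it then invokes the result of Igarashi, Kamiyama, and Suksompong that the exact exchange distance between two fixed allocations equals $m - c^*$, where $c^*$ is the maximum number of disjoint cycles in the associated item exchange graph, computes $c^*$ by enumerating cycle types, and finally minimizes over all movements. You instead quotient the exchange graph by the type-count equivalence and run BFS directly on the resulting graph of $n \times 2^n$ matrices. Your key observations---that an exchange either leaves the matrix unchanged (same-type swap, which can be discarded) or applies one of $O(1)$ atomic perturbations, and that a perturbation is realizable exactly when the resulting matrix is non-negative---give both directions of the correspondence between walks in the quotient graph and concrete exchange sequences, so the BFS distance equals the minimum, over all allocations in the target class, of the exchange distance from $\mathcal{A}$. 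Your argument is more elementary and self-contained, since it avoids the $m - c^*$ formula and the cycle-type enumeration entirely; the paper's approach buys a closed-form distance for each fixed movement at the cost of relying on an external structural result. Two minor points worth tightening: a sequence of exchanges projects to a walk of length \emph{at most} (not exactly) its own length once trivial same-type swaps are dropped, and you should note explicitly that if the size vectors (row sums) differ the target matrix is unreachable and the answer is $\infty$, as the paper does.
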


\begin{proof}
If the size vectors of the given allocation and an allocation in the given equivalence class are different, then the optimal number of exchanges is $\infty$.
Therefore, we may henceforth assume that the size vectors are the same.

For each agent $i$ and each type of good $c$ in the given initial allocation, we have an $n$-vector such that the $j^\text{th}$ component of this vector represents the number of goods of type $c$ that need to be moved from agent $i$ in the initial allocation to agent $j$ in some final allocation in the given equivalence class under some exchange.
Since the number of goods of each type is an integer between $0$ and $m$, there are $m+1$ possible numbers for each component, and hence at most $(m+1)^n$ possible vectors to represent this information.
The movement of goods from the initial allocation to the final allocation can be represented by an ordered collection of such vectors over all agents and over all types of goods in each agent's initial allocation.
Since there are $n$ agents and $2^n$ types of goods, there are at most $((m+1)^n)^{n \cdot 2^n}$ such collections.
Since $((m+1)^n)^{n \cdot 2^n}$ is polynomial in $m$ whenever $n$ is a constant, there are at most a polynomial number of such movements between the two allocations.
For each of these movements, we can verify in polynomial time whether it indeed gives some final allocation in the equivalence class.
We thus have an enumeration of all such feasible movements in polynomial time.

For each feasible movement between the initial allocation $\mathcal{A}$ and the final allocation $\mathcal{B}$, define a directed graph where the vertices are the agents and each edge $e_g$ represents a good such that if $g \in A_i \cap B_j$, then $e_g = (i, j)$.
\citet[Prop.~4.1]{IgarashiKaSu24} showed that the number of exchanges required to reach $\mathcal{B}$ from $\mathcal{A}$ is $m - c^*$, where $c^*$ is the maximum number of disjoint cycles in the item exchange graph.
Note that each cycle consists of a subset of the agents in some order, so the number of cycle types, $L$, is at most $(n+1)!$, which is a constant.
We have an $L$-vector such that the $k^\text{th}$ component of this vector represents the number of cycles in the item exchange graph of cycle type $k$.
Since the number of cycles of each cycle type is an integer between $0$ and $m$, there are $m+1$ possible numbers for each component, and hence at most $(m+1)^L$ possible vectors to represent the cycles in the graph, which is polynomial in $m$.
We can enumerate all such vectors, consider only those vectors that represent the item exchange graph, and output the maximum number of disjoint cycles from these vectors as $c^*$ in polynomial time.
Then, calculating $m - c^*$ gives the optimal number of exchanges for that feasible movement of goods.
Finally, the minimum optimal number of exchanges across all feasible movements is the quantity we desire.
\end{proof}

We proceed to the proof of \Cref{thm:optimal_bin_const}.

\begin{proof}[Proof of \Cref{thm:optimal_bin_const}]
We use the algorithm in \Cref{lem:enumerate_bin_const} to enumerate all the possible equivalence classes of EF1 allocations with the given size vector in polynomial time.
For each equivalence class of allocations, we use the algorithm in \Cref{lem:optimal_bin_const} to compute the optimal number of exchanges required to reach some allocation from this equivalence class from the given initial allocation in polynomial time.
The smallest such number will then answer the decision problem of \textsc{Optimal Exchanges}.    
\end{proof}

\subsection{General Number of Agents}

Let us now consider a non-constant number of agents.
We have shown that computing the optimal number of exchanges required to reach an EF1 allocation is NP-hard, even for identical utilities (\Cref{thm:optimal_iden_const}).
We thus consider binary utilities.

Although this problem belongs to P when the number of agents is a constant (\Cref{thm:optimal_bin_const}), we show that it is NP-hard for a general number of agents, even for the special case where the initial allocation is balanced.
To this end, we reduce from \textsc{Exact Cover by 3-Sets (X3C)}, an NP-hard problem \citep[p.~221]{GareyJo79}.

\begin{restatable}{theorem}{thmoptimalbingen} \label{thm:optimal_bin_gen}
\textsc{Optimal Exchanges} is NP-complete for binary utilities, even when the initial allocation is balanced.
\end{restatable}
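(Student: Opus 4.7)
The containment in NP is immediate: given a proposed exchange sequence of length at most $k$, we can verify in polynomial time that it is valid and produces an EF1 allocation, and by the proof of \Cref{prop:reachable} we need only consider $k$ polynomial in $m$.

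For hardness, the plan is to reduce from \textsc{Exact Cover by 3-Sets} (X3C). Given $U=\{e_1,\dots,e_{3q}\}$ and $\mathcal{S}=\{S_1,\dots,S_p\}$, I would construct a fair-division instance as follows. Introduce one \emph{set agent} $\alpha_j$ per $S_j$, together with a bounded number of auxiliary agents, including a \emph{reservoir} agent $\rho$ that initially holds the goods representing the elements of $U$. For each $e\in U$ there is an \emph{element good} $g_e$, valued at $1$ by exactly those $\alpha_j$ with $e\in S_j$, and at $0$ by everyone else. In addition, I would add \emph{slot} goods valued privately by each corresponding set agent (to pin down the EF1 condition in the final allocation) and enough zero-valued dummy goods so that the initial allocation $\mathcal{A}$---in which $\rho$ holds all element goods while each $\alpha_j$ holds only its slots plus dummies---has a common bundle size, i.e., is balanced. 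The target $k$ would be tuned to $m-q$, so that hitting the budget forces the transitions to decompose into $q$ edge-disjoint cycles in the item-exchange graph, using the characterisation of \citet{IgarashiKaSu24} invoked in \Cref{lem:optimal_bin_const}.

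For the $(\Leftarrow)$ direction, given an exact cover $\{S_{j_1},\dots,S_{j_q}\}$, for each $k\in\{1,\dots,q\}$ I would route the three element goods of $S_{j_k}$ from $\rho$ into $\alpha_{j_k}$ through a single cycle that returns dummy goods to $\rho$; the $q$ disjoint cycles realise the target allocation in exactly $m-q$ exchanges, and a direct EF1 check (using the slot goods to bound the value of $\alpha_j$'s own bundle) confirms fairness. For the $(\Rightarrow)$ direction, I would argue that any EF1 allocation reachable in at most $k$ swaps must (i)~remove every element good from $\rho$, since otherwise the set agents envy $\rho$ by more than one valued good; (ii)~route each element good $g_e$ to some $\alpha_j$ with $e\in S_j$, since the gadget is calibrated so that any ``wasted'' placement either violates EF1 or consumes an extra exchange; and (iii)~partition these transfers into at least $q$ edge-disjoint cycles in the item-exchange graph, which by the cycle-counting formula is tight only when each participating set agent absorbs precisely the three element goods of some $S_j$. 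The union of these $q$ sets is then forced to partition $U$, yielding an exact cover.

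The main obstacle, as in the earlier reductions, will be calibrating the auxiliary slot goods, dummy goods, and the constant number of auxiliary agents so that simultaneously: (a)~$\mathcal{A}$ is exactly balanced; (b)~the EF1 condition in any reachable allocation prevents element goods from being parked with an agent who does not value them; and (c)~no cycle-saving shortcut is available via two overlapping but non-covering $S_j$'s that would break the bijection between cycles and covering sets. I expect step (c) to be the crux, and anticipate handling it by introducing private slot goods that force every cycle passing through $\alpha_j$ to drain exactly $|S_j|$ element goods out of $\rho$, thereby ruling out any would-be shortcut that exploits overlaps among the sets.
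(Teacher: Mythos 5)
Your NP-membership argument and your choice of \textsc{X3C} as the source problem both match the paper, but your gadget is oriented the wrong way around, and this is precisely why the step you flag as ``the crux'' cannot be repaired as sketched. You place the \emph{element} goods in the reservoir and make the \emph{set} agents value them; then each set agent $\alpha_j$ needs at least one of the (up to three) element goods it values to leave $\rho$, and each element good that leaves helps a \emph{variable} number of set agents (however many sets contain that element). There is no uniform ``each removal helps exactly three needy agents'' count, so no budget can force the removed goods to correspond to $q$ pairwise-disjoint triples: once EF1 forces every element good out of $\rho$ and into some set containing it (which, as you note, the slot-good calibration can arrange), the number of exchanges is the same---roughly one per element good---whether the receiving sets form an exact cover or an arbitrary assignment of each element to any set containing it. Your proposed fix, private slot goods forcing each visited $\alpha_j$ to ``drain exactly $|S_j|$ element goods,'' has no EF1 mechanism behind it: no other agent's envy is affected by whether $\alpha_j$ absorbs one, two, or three of its elements, so nothing in the fairness condition or the exchange count penalizes a non-covering assignment. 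As written, your reduction would answer ``yes'' on every \textsc{X3C} instance in which each element belongs to at least one set.

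The paper's reduction inverts the roles and this inversion is what makes the counting tight. There are $3q$ \emph{element agents} (one per $x_i$) and a single special agent holding $p$ \emph{set goods} $h_1,\dots,h_p$; element agent $i$ values exactly those $h_j$ with $x_i\in Y_j$ (so each $h_j$ is valued by \emph{exactly three} agents), and is given exactly $n_i-2$ valued goods in her own initial bundle so that her envy toward the special agent exceeds the EF1 threshold by exactly one removal. Then every one of the $3q$ element agents needs at least one of her valued set goods moved out of the special bundle, each moved set good can satisfy at most three of them, and the budget of $q$ exchanges therefore forces exactly $q$ moved goods whose associated triples partition the $3q$ elements---an exact cover. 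Note also that the budget is $q$ (one direct swap per selected set good), not $m-q$; with self-loops counted as cycles in the item-exchange graph of \citet{IgarashiKaSu24}, $q$ length-two cycles among the moved goods give $m-c^*=q$, so your calibration of $k$ is off as well. If you reorient your gadget to agents-for-elements and goods-for-sets, the rest of your outline (reservoir agent, balanced initial allocation via dummies, per-agent calibration of privately valued goods) goes through essentially as in the paper.
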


\begin{proof}
Clearly, this problem is in NP.
To demonstrate NP-hardness, we shall reduce from \textsc{Exact Cover by 3-Sets (X3C)}.
In \textsc{X3C}, we are given positive integers $p$ and $q$, a set $X = \{x_1, \ldots, x_{3q}\}$, and a collection $C = \{Y_1, \ldots, Y_p\}$ of three-element subsets of $X$, i.e., for each $j \in \{1, \ldots, p\}$, $|Y_j| = 3$ and $Y_j \subseteq X$.
The problem is to decide whether there exists an exact cover for $X$ in $C$, i.e., whether there exists $D \subseteq C$ such that $|D| = q$ and for all $x \in X$, there exists $Y \in D$ such that $x \in Y$.
This decision problem is known to be NP-hard \citep[p.~221]{GareyJo79}.

Let an \textsc{X3C} instance be given.
Note that if there exists an exact cover $D$ for $X$, and if some $x' \in X$ appears in exactly one $Y' \in C$, then $Y'$ must be in $D$, and moreover, the other two elements in $Y' \setminus \{x'\}$ must not appear in any other three-element sets in $D$.
In this case, we can reduce the problem further by considering the set $X \setminus Y'$ and the collection $\{Y_j \in C \mid Y_j \cap Y' = \emptyset\}$ instead.
Therefore, assume without loss of generality that each $x \in X$ appears in at least two three-element sets in $C$.

Define a fair division instance as follows.
There are $n = 3q+1$ agents with binary utilities, and a set of goods $M = \{g_{i, j} \mid i \in \{1, \ldots, 3q+1\}, j \in \{1, \ldots, p\}\}$.
For notational simplicity, let $h_j = g_{3q+1, j}$ for all $j \in \{1, \ldots, p\}$; the good $h_j$ is associated with $Y_j$.
The size vector is $\vec{s} = (p, \ldots, p)$.
In the initial allocation $\mathcal{A} = (A_1, \ldots, A_{3q+1})$, we have $A_i = \{g_{i, j} \mid j \in \{1, \ldots, p\}\}$ for all $i \in N$.
Agent $3q+1$ is a special agent and assigns zero utility to every good.
For each non-special agent $i \in \{1, \ldots, 3q\}$, let $n_i$ be the number of three-element subsets in $C$ that contain $x_i$, i.e., $n_i = |\{Y \in C \mid x_i \in Y\}|$.
By assumption, we have $n_i \geq 2$.
Then, agent $i$ values exactly $n_i - 2$ goods in $A_i$, e.g., $u_i(g_{i, j}) = 1$ for $j \in \{1, \ldots, n_i - 2\}$ and $u_i(g_{i, j}) = 0$ for $j \in \{n_i - 1, \ldots, p\}$.
Agent $i$ also values the goods associated with any $Y_j$ that contains $x_i$, i.e., $u_i(h_j) = 1$ if and only if $x_i \in Y_j$.
Agent $i$ assigns zero utility to every other good not mentioned.
Note that in the initial allocation, from each non-special agent $i$'s perspective, the utility of agent $i$'s bundle is $n_i - 2$, the utility of agent $(3q+1)$'s bundle is $n_i$, and the utilities of the other agents' bundles are zero.
This reduction can be done in polynomial time.
We claim that the optimal number of exchanges required to reach an EF1 allocation from $\mathcal{A}$ is at most $q$ if and only if there exists an exact cover for $X$ in~$C$.

$(\Leftarrow)$ Let $D$ be an exact cover for $X$.
For each $Y_j \in D$, we perform one exchange as follows: select any $x_i \in Y_j$ arbitrarily, and exchange $g_{i, p}$ in agent $i$'s bundle with $h_j$ in agent $(3q+1)$'s bundle.
Note that there are exactly $q$ exchanges, since $|D| = q$.
We claim that the final allocation is EF1.
Since agent $3q+1$ does not value any good, she is EF1 towards every other agent.
Therefore, we only need to consider agent $i$'s envy for $i \in \{1, \ldots, 3q\}$.
Note that there exists $j \in \{1, \ldots, p\}$ such that $x_i \in Y_j$ and $Y_j \in D$.
This means that $h_j$ is moved to some non-special agent's bundle in an exchange (possibly agent~$i$).
Regardless of whom $h_j$ goes to, agent $i$'s utility for her own bundle is at least $n_i - 2$, and agent $i$'s utility for agent $(3q+1)$'s bundle is exactly $n_i - 1$, so agent $i$ is EF1 towards agent $3q+1$.
Furthermore, if $h_j$ goes to some agent $i' \neq i$, then agent $i$'s utility for the bundle of agent $i'$ is $1$, so agent $i$ is EF1 towards agent $i'$.
Every other non-special agent's bundle yields zero utility to agent~$i$.
This shows that agent $i$ is EF1 towards every other agent.
Accordingly, the final allocation is EF1.

$(\Rightarrow)$ Suppose that after at most $q$ exchanges, an EF1 allocation is reached.
Let $i \in \{1, \ldots, 3q\}$.
The valuable goods from agent $i$'s perspective are with agent $i$ or with agent $3q+1$.
Since agent $i$'s utility for her own bundle in $\mathcal{A}$ is $n_i - 2$ and her utility for agent $(3q+1)$'s bundle is $n_i$, some valuable good from agent $(3q+1)$'s bundle needs to be moved to another agent's bundle (possibly $i$'s) in an exchange.
Now, each good in agent $(3q+1)$'s bundle is valuable to exactly three agents.
Since the movement of each good in agent $(3q+1)$'s bundle can only resolve the envy for at most three agents, at least $q$ goods need to be moved to make agents $1$ to $3q$ EF1.
This means that exactly $q$ exchanges are made; moreover, each good $h_j$ moved from agent $(3q+1)$'s bundle is associated with three distinct agents.
The set of $q$ goods moved from agent $(3q+1)$'s bundle induces an exact cover $D$ with cardinality $q$.
\end{proof}

Finally, we consider identical binary utilities.
We show that for this class of utilities, the computational problem can be solved efficiently regardless of whether the size vector of the initial allocation is balanced or not.
To show this, we demonstrate that the greedy algorithm allows an EF1 allocation to be reached using the smallest number of exchanges.

\begin{restatable}{theorem}{thmoptimalidenbingen} \label{thm:optimal_idenbin_gen}
\textsc{Optimal Exchanges} is in P for identical binary utilities.
\end{restatable}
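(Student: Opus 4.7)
The plan is to reduce the problem to a simple combinatorial optimization over \emph{value profiles}. Since the utilities are identical and binary, every good is either \emph{valuable} (utility $1$) or \emph{worthless} (utility $0$), and the EF1 status of any allocation $\mathcal{B}$ depends only on the tuple $(v_1,\dots,v_n)$ with $v_i = u(B_i)$: a routine check shows that $\mathcal{B}$ is EF1 if and only if $\max_i v_i - \min_i v_i \leq 1$. Moreover, each exchange either leaves this profile unchanged (when the two swapped goods have equal utility) or transfers one unit of value from one agent to another, and conversely any desired unit transfer between an agent with $v_i>0$ and an agent with $v_j<s_j$ is realizable by a single exchange. Consequently, the minimum number of exchanges needed to go from the initial profile $(v_1,\dots,v_n)$ to some allocation with target profile $(v_1^*,\dots,v_n^*)$ equals $\tfrac12 \sum_i |v_i^* - v_i|$, independently of which specific allocation within the target's equivalence class one aims for.

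Given this reduction, the algorithm first invokes \Cref{thm:exist_idenbin_gen} to decide in polynomial time whether any EF1 allocation with the fixed size vector $\vec{s}$ (inherited from $\mathcal{A}$) exists. If none exists, the optimal number of exchanges is $\infty$ and the answer is ``no''. Otherwise, set $V = \sum_i v_i$, $q = \lfloor V/n \rfloor$, and $r = V - qn$; every EF1 profile consists of exactly $r$ agents with $v_i^* = q+1$ and the remaining $n-r$ agents with $v_i^* = q$, subject to the capacity constraint $v_i^* \leq s_i$. I would then select the $r$ agents for the ``high'' slots greedily: for each eligible agent (those with $s_i \geq q+1$), assigning $q+1$ instead of $q$ shifts the cost by $-1$ if $v_i \geq q+1$ and by $+1$ if $v_i \leq q$. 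Hence one should first populate the $q+1$ slots with agents satisfying $v_i \geq q+1$, and only if fewer than $r$ such agents are available, fill the remainder with agents having $v_i \leq q$ (with ties broken arbitrarily).

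A standard swap argument verifies that this greedy choice is optimal among all feasible EF1 targets: any alternative assignment can be rearranged into the greedy one without increasing the objective. Feasibility of the greedy procedure follows from \Cref{thm:exist_idenbin_gen}, which guarantees enough eligible agents to form a valid target. Each step---tallying $V$, computing $q$ and $r$, partitioning agents by value, and summing the cost---runs in polynomial time, and the algorithm compares the resulting optimum against $k$ to answer \textsc{Optimal Exchanges}. The main technical point to write up carefully is the equivalence between the number of exchanges and the transfer cost $\tfrac12\sum_i |v_i^* - v_i|$, together with the swap argument establishing optimality of the greedy target; the rest is direct bookkeeping.
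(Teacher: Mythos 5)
Your proposal is correct and reaches the same optimum as the paper, but it is packaged differently. The paper's proof of \Cref{thm:optimal_idenbin_gen} names the answer directly as $\max\{c_0, c_1\}$, where $F = \lfloor m_1/n \rfloor$ for $m_1$ valuable goods, $c_0$ is the total deficit of agents below $F$, and $c_1$ is the total excess of agents above $F+1$: the lower bound comes from counting goods in the wrong hands (an agent short of $F$ must import that many valuable goods and export as many worthless ones, and each exchange corrects at most two goods), and the upper bound from running the greedy highest-to-lowest exchange rule and verifying it terminates at an EF1 allocation within $\max\{c_0, c_1\}$ steps. You instead minimize the transfer cost $\tfrac12\sum_i |v_i^* - v_i|$ over all feasible EF1 target profiles and pick the target by a swap argument. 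The two answers provably coincide: since $\sum_i v_i = Fn + r$, one has $c_0 - c_1 = n_1 - r$ with $n_1 = |N_1|$, so the penalty $|n_1 - r|$ your greedy pays for mismatched high slots equals $|c_0 - c_1|$, and $\tfrac12\bigl(c_0 + c_1 + |c_0 - c_1|\bigr) = \max\{c_0, c_1\}$. Your route is more modular---it isolates the target-selection step, which the paper handles only implicitly through its exchange rule---at the cost of having to argue swap optimality and the capacity constraint $v_i^* \le s_i$ explicitly; both go through, with feasibility (every agent has $s_i \ge F$, and at least $r$ agents have $s_i \ge F+1$) following from the threshold condition in \Cref{thm:exist_idenbin_gen} exactly as you indicate. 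One point to state precisely in the write-up: $\tfrac12\sum_i |v_i^* - v_i|$ is the minimum number of exchanges to reach \emph{some} allocation with profile $v^*$, not a prescribed one, which suffices because any allocation whose profile has spread at most one is EF1.
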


\begin{proof}
Let $\mathcal{A} = (A_1, \ldots, A_n)$ be the given allocation.
As mentioned at the beginning of \Cref{sec:optimal}, we assume that an EF1 allocation can be reached from $\mathcal{A}$.
By the proof of \Cref{thm:exist_idenbin_gen}, there must be at most $s_0n + n - n_0$ valuable goods, where $s_0 = \min_{i\in N} |A_i|$ and $n_0 = |\{i \in N \mid |A_i| = s_0\}|$.
Suppose that there are $m_1 \leq s_0n + n - n_0$ valuable goods. 
An EF1 allocation requires every agent to receive at least $F := \lfloor m_1/n \rfloor$ valuable goods and at most $F+1$ valuable goods. 
Since an EF1 allocation with the same size vector as $\mathcal{A}$ exists, every agent must have at least $F$ goods in $\mathcal{A}$, i.e., $s_0 \geq F$.
Let $N_0$ be the set of agents who have at most $F$ valuable goods in the initial allocation, and $N_1$ be the set of agents who have at least $F+1$ valuable goods in the initial allocation. 
Note that $N = N_0 \cup N_1$. 
Let $c_0 = \sum_{i \in N_0} (F - u(A_i))$ and $c_1 = \sum_{i \in N_1} (u(A_i) - (F+1))$.
We claim that the optimal number of exchanges required to reach an EF1 allocation is $\max \{c_0, c_1\}$.
Note that this value can be computed in polynomial time, so it suffices to prove the claim.

First, we show that the optimal number of exchanges required to reach an EF1 allocation is at least $\max \{c_0, c_1\}$. 
Each agent $i \in N_0$ needs to receive at least $F - u(A_i) \geq 0$ valuable goods in order to arrive at a bundle with utility at least $F$. 
In receiving these valuable goods, agent $i$ must give away the same number of non-valuable goods from her bundle in $A_i$---note that this is possible since every agent has at least $F$ goods. 
Therefore, there exist at least $F - u(A_i)$ valuable goods from other agents' bundles that should go to agent $i$'s bundle and at least $F - u(A_i)$ non-valuable goods from agent $i$'s bundle that should go to other agents' bundles. 
Summing up over all $i \in N_0$, we have that at least $\sum_{i \in N_0} 2(F - u(A_i)) = 2c_0$ goods are in the wrong hands. 
Since each exchange places at most two goods in correct hands, the number of exchanges required is at least $2c_0/2 = c_0$.
By an analogous argument on the agents in $N_1$, we have that the number of exchanges required is at least $c_1$.
This proves that the optimal number of exchanges required to reach an EF1 allocation is at least $\max \{c_0, c_1\}$.

Next, we describe an algorithm that allows us to reach an EF1 allocation with at most $\max \{c_0, c_1\}$ exchanges. 
The algorithm is as follows: repeatedly exchange a valuable good from an agent with the highest utility with a non-valuable good from an agent with the lowest utility, until every agent has at least $F$ valuable goods and at most $F+1$ valuable goods.
We show that this ending will always be reached.
Suppose on the contrary that this is not the case, and consider the final allocation just before the algorithm cannot proceed further.
Since every agent has at least $F$ goods in $\mathcal{A}$, it must be possible that every agent receives at least $F$ valuable goods in the final allocation, and so $F = s_0$.
This means that some agent has more than $F+1$ valuable goods in the final allocation, and every agent who has $F$ goods in $\mathcal{A}$ has $F$ valuable goods in the final allocation.
Then, the number of valuable goods is $m_1 > Fn_0 + (F+1)(n-n_0) = s_0n + n - n_0$, which is a contradiction.
This shows that it is possible to reach the desired ending.

Now, we are ready to show that the optimal number of exchanges required is at most $\max \{c_0, c_1\}$.
If $c_0 \geq c_1$, then the first $c_1$ exchanges involve exchanging valuable goods from agents in $N_1$ with non-valuable goods from agents in $N_0$.
At this point, every agent in $N_1$ has exactly $F+1$ valuable goods, and every agent in $N_0$ has at most $F$ valuable goods.
Call this allocation $(B_1, \ldots, B_n)$.
We have that $\sum_{i \in N_0} (F - u(B_i)) = c_0 - c_1$.
If $|N_1| < c_0 - c_1$, then after $|N_1|$ further exchanges, every agent has at most $F$ valuable goods and some agent has fewer than $F$ valuable goods, contradicting the assumption that $F = \lfloor m_1/n \rfloor$.
Therefore, we must have $|N_1| \geq c_0 - c_1$.
Now, after $c_0 - c_1$ further exchanges, every agent in $N_0$ has exactly $F$ valuable goods and every agent in $N_1$ has between $F$ and $F+1$ valuable goods (inclusive), giving an EF1 allocation.
Hence, if $c_0 \geq c_1$, then the optimal number of exchanges required to reach an EF1 allocation is at most $c_1 + (c_0 - c_1) = c_0$.
By an analogous argument, if $c_0 < c_1$, then the optimal number of exchanges required to reach an EF1 allocation is at most $c_1$.
It follows that the optimal number of exchanges required to reach an EF1 allocation is at most $\max \{c_0, c_1\}$.
\end{proof}

\section{Worst-Case Bounds} \label{sec:worst}

In this section, instead of instance-specific optimization, we turn our attention to the \emph{worst-case} number of exchanges required to reach an EF1 allocation from some initial allocation.
Since an EF1 allocation may not always be reachable (as can be seen from \Cref{sec:exist}), we shall focus on the special case where the number of goods in each agent's bundle is the same.
We say that a size vector $\vec{s} = (s_1, \ldots, s_n)$ is \emph{$s$-balanced} for a positive integer $s$ if $s_i = s$ for all $i \in N$, and an allocation is \emph{$s$-balanced} if it has an $s$-balanced size vector.
We shall consider the worst-case number of exchanges starting from an $s$-balanced allocation for $n$ agents.

\subsection{General Utilities} \label{sec:worst_general}

Given $n$ and $s$, let $f(n, s)$ be the smallest integer such that for every instance with $n$ agents and $ns$ goods and every $s$-balanced allocation $\mathcal{A}$ in the instance, there exists an EF1 allocation that can be reached from~$\mathcal{A}$ using at most $f(n, s)$ exchanges.
We shall examine the bounds for $f(n, s)$.

We first derive an upper bound for $f(n, s)$.
At a high level, we use an algorithm by \citet{BiswasBa18} to find an EF1 allocation under cardinality constraints such that every agent retains roughly $s/n$ of her goods from her original bundle.
The algorithm also distributes the goods in each agent's initial bundle to the other agents as evenly as possible in order to maximize the number of goods that can be exchanged one-to-one, thereby minimizing the total number of exchanges.
One can check that roughly $s(n-1)/2$ exchanges are required to reach this EF1 allocation from the initial allocation.

\begin{restatable}{theorem}{thmworstub} \label{thm:worst_ub}
Let $n$ and $s$ be positive integers, and let $q = \lfloor s/n \rfloor$ and $r = s - qn$ be the quotient and remainder when $s$ is divided by $n$ respectively. Then,
\[
f(n, s) \leq
\begin{cases}
    s(n-1)/2                 & \text{if $r = 0$}; \\
    s(n-1)/2 + r(n-3)/2 + 1  & \text{otherwise}.
\end{cases}
\]
Moreover, we have $f(2, s) \leq (s-r)/2$ for all $s$.
\end{restatable}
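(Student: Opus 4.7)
The plan is to construct an EF1 target allocation $\mathcal{B}$ close to the initial $s$-balanced allocation $\mathcal{A}$ and to bound the exchanges from $\mathcal{A}$ to $\mathcal{B}$ via the \emph{transfer matrix} $(a_{ij})_{i,j\in N}$, where $a_{ij}$ counts the goods of $A_i$ that end up in $B_j$. Treating each initial bundle $A_i$ as a category $C_i$, I would invoke the algorithm of \citet{BiswasBa18} for EF1 under cardinality constraints, instantiated so that each agent $j$ receives $q=\lfloor s/n\rfloor$ or $q+1$ goods from every category. In particular each agent keeps $q$ or $q+1$ goods from her own initial bundle, and the matrix $(a_{ij})$ has entries in $\{q,q+1\}$ with every row and column summing to $s$, forcing exactly $r$ entries per row (and per column) to equal $q+1$.

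When $r=0$, every $a_{ij}$ equals $q$, so the off-diagonal transfer is symmetric and each unordered pair $\{i,j\}$ can resolve its $q$ units in each direction using $q$ mutual swaps between $i$ and $j$. Summing over all $\binom{n}{2}$ pairs yields $q\binom{n}{2}=s(n-1)/2$ exchanges. When $r>0$, I would arrange the output so that every diagonal entry $a_{ii}$ equals $q+1$, which is realisable by distributing the $n(r-1)$ off-diagonal $q+1$-cells along $r-1$ edge-disjoint derangements of $N$. For every ordered pair $(i,j)$ with $i\neq j$, the reciprocal $q$ units pair up into $q\binom{n}{2}$ mutual swaps; the residual ``$+1$'' excesses form a directed graph that is $(r-1)$-regular in both in- and out-degree, hence Eulerian. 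By Proposition~4.1 of \citet{IgarashiKaSu24}, an $\ell$-cycle in this residual is resolved in $\ell-1$ exchanges, so by choosing the derangements to maximise reciprocal (length-two) pairs and decomposing whatever remains into short cycles, the residual contributes at most $r(n-2)+1$ extra exchanges. Combining gives $q\binom{n}{2}+r(n-2)+1 = s(n-1)/2 + r(n-3)/2 + 1$.

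For $n=2$ the $2\times 2$ matrix admits only two configurations compatible with $r\in\{0,1\}$ and the row/column count requirement: when $r=1$, either both diagonal entries equal $q+1$ (off-diagonals $q$) or both off-diagonals equal $q+1$ (diagonals $q$); the former yields a symmetric transfer realisable in exactly $q=(s-r)/2$ swaps, and when $r=0$ every entry equals $q$ and the same count holds. Hence $f(2,s)\le (s-r)/2$. The hardest step is the residual-cycle analysis in the $r>0$, $n\ge 3$ case: one must verify that the $(r-1)$-regular off-diagonal $+1$-pattern can be chosen compatibly with the EF1 output of Biswas--Barman so that the residual directed graph decomposes into cycles whose total $\ell-1$ cost does not exceed $r(n-2)+1$, with the ``$+1$'' absorbing a single unavoidable parity obstruction---such as the one arising when $n(r-1)$ is odd, which rules out a complete reciprocal pairing.
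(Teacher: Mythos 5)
Your overall strategy---build an EF1 target $\mathcal{B}$ with a near-uniform transfer matrix via the algorithm of \citet{BiswasBa18} and count exchanges through the circuit decomposition of \citet{IgarashiKaSu24}---is the same as the paper's, and your $n=2$ and $r=0$ cases are essentially correct. The gap is in the $n\ge 3$, $r>0$ case: you need an EF1 allocation whose transfer matrix has \emph{every} diagonal entry equal to $q+1$ and whose off-diagonal $(q+1)$-cells form $r-1$ edge-disjoint derangements, but nothing in your argument shows that such an allocation exists. The algorithm of \citet{BiswasBa18}, applied to categories $C_i=A_i$ of size $s$ (not divisible by $n$), determines \emph{which} $r$ agents receive the extra $(q+1)$-st good from each category by an envy-graph-based picking order that is precisely what guarantees EF1; you cannot prescribe that agent $i$ is always among the first $r$ pickers for her own category, let alone impose a derangement structure on the remaining extras, without losing that guarantee. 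Realisability of the matrix as a nonnegative integer matrix with the right row and column sums is not the same as realisability by an EF1 allocation. Your residual bound of $r(n-2)+1$ is also asserted rather than derived (and the parity remark is speculative), though this is secondary: a plain cycle decomposition of an $(r-1)$-regular digraph would already give at most $(n-1)(r-1)\le rn-2r+1$ exchanges \emph{if} the structure existed.

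The paper sidesteps exactly this difficulty by choosing categories whose sizes are all divisible by $n$: each $C_i$ consists of $qn$ goods from $A_i$, and the leftover $rn$ goods are grouped into $r$ categories $D_1,\dots,D_r$ of size $n$, each drawn from \emph{consecutive} agents' bundles. Then every agent receives exactly $q$ goods from each $C_i$ and exactly one good from each $D_w$, so there is no freedom about who gets the extras that needs to be controlled. The $qn(n-1)/2$ term comes from pairwise swaps of the $C_i$ goods, and the $rn-2r+1$ term comes from a bespoke argument showing that the misplaced $D_w$-goods admit at least $2r-1$ disjoint circuits (one short cycle per $D_w$ using fewer than $2n$ edges in total, plus at least $r-1$ further circuits of length at most $n$ from the remaining edges). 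To salvage your route you would need a strengthening of \citet{BiswasBa18} that lets you prescribe, per category, the set of agents receiving $\lceil|C_h|/n\rceil$ goods while retaining EF1---which is not established and is precisely what the paper's category decomposition avoids having to prove.
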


\begin{proof}
Let $\mathcal{A}$ be an $s$-balanced allocation.
It suffices to find an EF1 $s$-balanced allocation~$\mathcal{B}$ such that the optimal number of exchanges to reach $\mathcal{B}$ from $\mathcal{A}$ is at most the expression given in the theorem statement.

When $n = 2$, allocate the goods in $A_1$ to the two agents in a round-robin fashion with agent $1$ going first, and allocate the goods in $A_2$ to the two agents in a round-robin fashion with agent $2$ going first.
Call this new allocation $\mathcal{B}$.
Note that $\mathcal{B}$ is clearly $s$-balanced.
We have $A_i \cap B_{3-i} = (s-r)/2$ for $i \in \{1, 2\}$, so the optimal number of exchanges required to reach $\mathcal{B}$ from $\mathcal{A}$ is (exactly) $(s-r)/2$.
To see that $\mathcal{B}$ is EF1, observe that agent $1$ does not envy agent $2$ with respect to the goods chosen from $A_1$ and is EF1 towards agent $2$ with respect to the goods chosen from $A_2$, so agent $1$ is EF1 towards agent~$2$ in $\mathcal{B}$; likewise, agent~$2$ is EF1 towards agent~$1$ in $\mathcal{B}$.
This shows that $f(2, s) \leq  (s-r)/2$.

When $n \geq 3$, we shall find an EF1 $s$-balanced allocation $\mathcal{B}$ by generalizing the method for two agents.
We define $n+r$ categories of goods $C_1, \ldots, C_n, D_1, \ldots, D_r$ as follows.
For $i \in N$, category~$C_i$ contains $qn$ goods arbitrarily selected from $A_i$ only; note that $r$ goods remain unselected in $A_i$.
Next, we form $D_w$ recursively as follows: let $w \in \{1, \ldots, r\}$ be the smallest index such that $D_w$ does not have $n$ goods yet, let $i \in N$ be the smallest index such that $A_i$ still has unselected goods, arbitrarily select a good in $A_i$, and add it to $D_w$.
At the end of this process, every category $C_i$ has exactly $qn$ goods from $A_i$, and every category $D_w$ has exactly $n$ goods from consecutive agents' bundles, say, $A_{i_w}, A_{i_w + 1}, \ldots, A_{j_w}$.

We now proceed to form $\mathcal{B}$ using the algorithm by \citet{BiswasBa18} which finds an EF1 allocation under cardinality constraints.
In particular, there exists an EF1 allocation $\mathcal{B} = (B_1, \ldots, B_n)$ such that $|C_i \cap B_j| = |C_i|/n = q$ for all $i, j \in N$ and $|D_w \cap B_j| = |D_w|/n = 1$ for all $w \in \{1, \ldots, r\},\, j \in N$.
Also, $\mathcal{B}$ is $s$-balanced because $|B_j| = qn + r = s$ for all $j \in N$.
We shall bound the number of exchanges required to reach $\mathcal{B}$ from $\mathcal{A}$.

For each unordered pair of distinct $i, j \in N$, exchange the $q$ goods from $C_i \cap B_j$ (which are in $A_i$) with the $q$ goods from $C_j \cap B_i$ (which are in $A_j$).
This requires a total of $qn(n-1)/2$ exchanges.
Call this intermediate allocation $\mathcal{A}' = (A'_1, \ldots, A'_n)$.
At this point, the only goods that are possibly in the wrong bundles in $\mathcal{A}'$ (as compared to $\mathcal{B}$) are the goods in all the $D_w$, and there are at most $rn$ such goods.
For each $i \in N$, let $X_i = A'_i \cap (D_1 \cup \cdots \cup D_r)$.

If $r = 0$, then $\mathcal{A}' = \mathcal{B}$, and we are done since the total number of exchanges is $qn(n-1)/2 = s(n-1)/2$.
Else, $r > 0$.
Consider the directed graph where the vertices are the agents and each edge~$e_g$ represents a good $g \in M$ such that if $g \in A'_i \cap B_j$, then $e_g = (i, j)$.
\citet[Prop.~4.1]{IgarashiKaSu24} showed that the number of exchanges required to reach $\mathcal{B}$ from $\mathcal{A}'$ is $m - c^*$, where $c^*$ is the maximum possible cardinality of a partition of the edges of the graph into (directed) circuits.
In $\mathcal{A}'$, $qn^2$ goods from all the $C_i$ are in the correct bundle by the previous process, and each of the edges representing these goods has its own circuit, say, $(i, i)$ if the good is in $A'_i$.
We shall show that the edges representing the $rn$ goods in all the $D_w$ can be partitioned into at least $2r-1$ disjoint circuits.
This will give at least $qn^2 + (2r - 1) = sn - (rn - 2r + 1)$ as the cardinality of one such partition of the edges of the graph into circuits.
Accordingly, $c^* \geq sn - (rn - 2r + 1)$, and the number of exchanges required to reach $\mathcal{B}$ from $\mathcal{A}'$ is $m - c^* \leq rn - 2r + 1$.
Then, the number of exchanges required to reach $\mathcal{B}$ from $\mathcal{A}$ (via $\mathcal{A}'$) is at most $$qn(n-1)/2 + (rn - 2r + 1) = s(n-1)/2 + r(n-3)/2 + 1,$$ establishing the theorem.

Let $w \in \{1, \ldots, r\}$ be given.
We shall show that there exists a cycle formed with a subset of the edges representing the goods in $D_w$.
The goods in $D_w$ come from consecutive agents' bundles in $\mathcal{A}'$, say, agents $i_w$ to $j_w$.
Every agent receives exactly one good from $D_w$ in $\mathcal{B}$; in particular, agents $i_w$ to $j_w$ receive exactly one good from $D_w$ each.
Consider the good~$g$ in $D_w \cap B_{i_w}$.
If $g$ is in $X_{i_w}$, then the edge $e_g = (i_w, i_w)$ is a desired cycle.
Otherwise, $g$ belongs to some agent $i' \in \{i_w + 1, \ldots, j_w\}$ in $\mathcal{A}'$.
Then, the edge $e_g$ is $(i', i_w)$.
Next, we consider the good $g'$ in $D_w \cap B_{i'}$, and find the agent that has $g'$ in $\mathcal{A}'$.
The edge representing~$g'$ then points to $i'$ from that agent.
By repeating this, we eventually find a cycle formed with some of these edges and with a subset of the agents $i_w$ to $j_w$ as vertices.
Let $M_w \subseteq D_w$ be the set of goods that are represented by the edges in this cycle.
Note that each $X_i$ for $i \in \{i_w, \ldots, j_w\}$ contains at most one good in $M_w$, and each $X_i$ for $i \in N \setminus \{i_w, \ldots, j_w\}$ does not contain any good in~$M_w$.

Now, consider the goods represented by the edges of the $r$ cycles---one for each $w$.
Note that these cycles are disjoint since the sets $M_w$ are pairwise disjoint.
Let $M_0 = \bigcup_{w=1}^r M_w$.
We claim that $|M_0| < 2n$.
Since the $r$ goods in $X_1$ are entirely contained in $D_1$, we have $|X_1 \cap M_1| \leq 1$ and $|X_1 \cap M_w| = 0$ for $w \in \{2, \ldots, r\}$, which implies that $|\bigcup_{w=1}^r (X_1 \cap M_w)| \leq 1$.
Now, for each $i \in N \setminus \{1\}$, the $r$ goods in $X_i$ can only be contained in at most two $D_w$---to see this, observe that if the $r$ goods are contained in $D_{w'}$, $D_{w'+1}$, and $D_{w'+2}$, then $D_{w'+1} \subseteq X_i$, which implies that $r = |X_i| \geq |D_{w'+1}| = n$, a contradiction.
Thus, we have $|X_i \cap M_w| \leq 1$ for all $w \in \{1, \ldots, r\}$, and $|X_i \cap M_w| = 1$ for at most two $w$, and so $|\bigcup_{w=1}^r (X_i \cap M_w)| \leq 2$.
Since $M_0 = \bigcup_{i \in N} \bigcup_{w=1}^r (X_i \cap M_w)$, we have $|M_0| \leq 1 + (n-1) \cdot 2 < 2n$, proving the claim.

Finally, consider the edges representing the $rn$ goods in all the $D_w$.
We have shown that fewer than $2n$ of these edges can be used to form $r$ disjoint circuits (in fact, cycles).
There are more than $rn - 2n = (r-2)n$ edges remaining.
Since we can always require every circuit to have length at most~$n$, there exists a partition of the remaining edges into more than $r-2$ disjoint circuits, i.e., at least $r-1$ disjoint circuits.
The total number of circuits in this partition is at least $r + (r-1) = 2r-1$.
This completes the proof.
\end{proof}

If no good is involved in more than one exchange, then $s(n-1)/2$ exchanges means that a total of $s(n-1) = m(1-1/n)$ goods are exchanged.
When $n$ is large, the fraction of goods involved in the exchanges becomes close to $1$.
While this bound might not seem impressive, we show next that it is, in fact, already essentially tight.
Specifically, we establish a lower bound for $f(n, s)$ by constructing an instance (with binary utilities) and an $s$-balanced allocation $\mathcal{A}$ in the instance such that roughly $s(n-1)/2$ exchanges are necessary to reach an EF1 allocation from $\mathcal{A}$.

\begin{restatable}{theorem}{thmworstlbbin} \label{thm:worst_lb_bin}
Let $n$ and $s$ be positive integers, and let $q = \lfloor s/n \rfloor$ and $r = s - qn$ be the quotient and remainder when $s$ is divided by $n$ respectively. Then,
\[
f(n, s) \geq
\begin{cases}
    s(n-1)/2            & \text{if $r = 0$}; \\
    s(n-1)/2 - (n-r)/2  & \text{otherwise}.
\end{cases}
\]
\end{restatable}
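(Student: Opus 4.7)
My plan is to exhibit a specific binary-utility instance together with an $s$-balanced initial allocation $\mathcal{A}$ whose every reachable EF1 allocation is ``far'' from $\mathcal{A}$ in the exchange distance. Fix an arbitrary partition $\mathcal{A} = (A_1, \ldots, A_n)$ of $m = ns$ goods into $n$ bundles of size $s$, and give agent $i$ the binary utility function
\[
u_i(g) = \begin{cases} 0 & \text{if } g \in A_i, \\ 1 & \text{otherwise.} \end{cases}
\]
That is, each agent values at $1$ exactly those goods that are not initially in her own bundle, so in $\mathcal{A}$ every agent's own bundle yields her utility $0$ while any other bundle yields her utility $s$.

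Next, I would analyze the structure of an arbitrary EF1 allocation $\mathcal{B}$ reachable from $\mathcal{A}$, which is necessarily $s$-balanced by \Cref{prop:reachable}. Set $t_{ij} := |A_i \cap B_j|$, so that $|B_j \cap V_i| = s - t_{ij}$ where $V_i = M \setminus A_i$ denotes agent $i$'s set of valued goods. The EF1 condition of agent $i$ towards agent $j$ then reads $s - t_{ii} \geq (s - t_{ij}) - 1$, i.e.\ $t_{ij} \geq t_{ii} - 1$. Summing this over $j \neq i$ and using $\sum_j t_{ij} = |A_i| = s$ gives $s - t_{ii} \geq (n-1)(t_{ii} - 1)$, hence $t_{ii} \leq (s + n - 1)/n$. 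Because $t_{ii}$ is a non-negative integer, $t_{ii} \leq \lceil s/n \rceil$, and summing over $i$ yields
\[
\sum_{i=1}^n t_{ii} \leq n \lceil s/n \rceil = \begin{cases} s & \text{if } r = 0, \\ s + n - r & \text{if } r \geq 1. \end{cases}
\]

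Finally, I would invoke the cycle-decomposition identity already used in the upper bound proof: by \citet[Prop.~4.1]{IgarashiKaSu24}, the number of exchanges required to transform $\mathcal{A}$ into $\mathcal{B}$ equals $m - c^*$, where $c^*$ is the maximum number of disjoint cycles in the item exchange graph. Since each self-loop contributes length $1$ and every other cycle has length at least $2$, one has the universal bound
\[
c^* \leq \sum_{i=1}^n t_{ii} + \frac{m - \sum_{i=1}^n t_{ii}}{2} = \frac{m + \sum_{i=1}^n t_{ii}}{2}.
\]
Combining this with the previous bound on $\sum_i t_{ii}$ and simplifying using $m = ns$ gives a lower bound of exactly $s(n-1)/2$ when $r = 0$ and $s(n-1)/2 - (n-r)/2$ when $r \geq 1$, as required.

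The main obstacle is identifying the right ``complementary'' utility construction: the goods an agent cares about must be precisely those she does not yet hold, so that in any EF1 target she is forced to acquire many new goods. Once this construction is in hand, the remaining argument is a short calculation, and the only mild subtlety is that the two cases of the theorem statement emerge uniformly from the rounding of $s/n$ to $\lceil s/n \rceil$.
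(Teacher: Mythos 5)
Your proposal is correct and follows essentially the same route as the paper: the identical ``complementary'' binary instance (each agent values exactly the goods outside her initial bundle), the same key fact that every agent retains at most $\lceil s/n \rceil$ of her original goods in any EF1 allocation (you derive it by summing the inequalities $t_{ij} \geq t_{ii}-1$, the paper by an equivalent total-utility argument), and the same final count of at least half the misplaced goods. The only cosmetic difference is that you route the last step through the cycle-decomposition formula of \citet{IgarashiKaSu24}, whereas the paper simply observes that each exchange places at most two goods in correct hands.
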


\begin{proof}
Let $M = \{ g_{i, j} \mid 1 \leq i \leq n, 1 \leq j \leq s \}$ be the set of goods such that each good $g_{i, j}$ is worth~$0$ to agent $i$ and worth $1$ to all agents except $i$.
We have $u_i(M) = s(n-1)$.
We claim that an EF1 allocation requires every agent to receive a bundle worth at least $s - q - \lceil r/n \rceil$ from her perspective.
To see this, suppose on the contrary that some agent~$i$ receives a bundle worth less than $s - q - \lceil r/n \rceil$ to her.
For the allocation to be EF1, every other agent receives a bundle worth at most $s - q - \lceil r/n \rceil$ to agent $i$.
Then, we must have $u_i(M) < n(s - q - \lceil r/n \rceil)$.
When $r = 0$, it holds that $\lceil r/n \rceil = 0$ and $$n(s - q - \lceil r/n \rceil) = n(s - q) = sn - s = s(n-1).$$
When $r > 0$, it holds that $\lceil r/n \rceil = 1$ and $$n(s - q - \lceil r/n \rceil) = n(s - q - 1) = sn - (qn + r) - (n-r) = s(n-1) - (n-r) \leq s(n-1).$$
In both cases, we have $u_i(M) < s(n-1) = u_i(M)$, a contradiction.

Let $\mathcal{A}$ be the allocation such that $A_i = \{g_{i, j} \mid 1 \leq j \leq s\}$ for every $i$. 
In order to reach an EF1 allocation, each agent must give away at least $s - q - \lceil r/n \rceil$ goods from her bundle in order to receive from the other agents the same number of valuable goods from her perspective.
The total number of goods that are currently in the wrong hands across all agents is at least $n(s - q - \lceil r/n \rceil)$, and the optimal number of exchanges required to reach an EF1 allocation is at least half of this number, since each exchange places at most two goods in the correct hands.
When $r = 0$, the optimal number of exchanges required is at least $n(s - q - \lceil r/n \rceil)/2 = s(n-1)/2$.
When $r > 0$, the optimal number of exchanges required is at least $n(s - q - \lceil r/n \rceil)/2 = s(n-1)/2 - (n-r)/2$.
\end{proof}

For two agents, \Cref{thm:worst_ub,thm:worst_lb_bin} give a tight bound of $f(2, s) = (s-r)/2 = m/4 - r/2 = \lfloor m/4 \rfloor$.
This means that in the worst-case scenario, the number of exchanges required to reach an EF1 allocation is roughly one-quarter of the total number of goods between the two agents, or equivalently, roughly half of the goods need to be exchanged between the two agents to reach an EF1 allocation.

\Cref{thm:worst_ub,thm:worst_lb_bin} also give a tight bound of $f(n, s) = s(n-1)/2$ whenever $s$ is divisible by~$n$.
By observing the proof of \Cref{thm:worst_ub}, we can achieve an EF1 allocation with $f(n, s)$ exchanges without involving each good in more than one exchange.
This means that a $(1 - 1/n)$ fraction of all goods need to be exchanged in the worst-case scenario.
Intuitively, this happens when each agent only values the goods in the bundle of every agent except her own in the initial allocation, and therefore needs to ensure that these goods are evenly distributed among all agents including herself.

Define $\fbin(n, s)$ as the smallest integer such that for every \emph{binary} instance with $n$ agents and $ns$ goods and every $s$-balanced allocation $\mathcal{A}$ in the instance, there exists an EF1 allocation that can be reached from $\mathcal{A}$ using at most $\fbin(n, s)$ exchanges.
The proof of \Cref{thm:worst_lb_bin} uses a binary instance, which means that the lower bound of the theorem works for $\fbin$ as well.
Clearly, the upper bound of \Cref{thm:worst_ub} works for $\fbin$, so the discussion in the preceding paragraphs also applies to binary instances too.

\subsection{Identical Binary Utilities} \label{sec:worst_idenbin}

Given $n$ and $s$, let $\fidbin (n, s)$ be the smallest integer such that for every instance with $n$ agents with \emph{identical binary} utilities and $ns$ goods and every $s$-balanced allocation $\mathcal{A}$ in the instance, there exists an EF1 allocation that can be reached from $\mathcal{A}$ using at most $\fidbin (n, s)$ exchanges.
We show that $\fidbin (n, s)$ is roughly $sn/4$ for even $n$ and $s(n-1)(n+1)/4n$ for odd $n$---note that this is approximately half of the bound $f(n,s)$ for arbitrary utilities, which is roughly $s(n-1)/2$ as seen earlier.
The upper bounds (of $sn/4$ and $s(n-1)(n+1)/4n$ respectively) correspond to the case where half of the agents have all the valuable goods while the remaining half have all the non-valuable goods.

\begin{restatable}{theorem}{thmworstidenbin} \label{thm:worst_idenbin}
Let $n$ and $s$ be positive integers.
If $n$ is even, then
$\frac{n}{2}\left\lfloor \frac{s}{2} \right\rfloor \leq \fidbin (n, s) \leq \frac{sn}{4}$.
If $n$ is odd, then $\frac{n+1}{2}\left\lfloor \frac{s(n-1)}{2n} \right\rfloor \leq \fidbin (n, s) \leq \frac{s(n-1)(n+1)}{4n}$.
\end{restatable}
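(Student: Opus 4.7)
The plan is to combine the exact formula for the optimal number of exchanges derived in the proof of \Cref{thm:optimal_idenbin_gen} with an optimization over $s$-balanced initial allocations. For an instance with identical binary utility $u$ and an $s$-balanced allocation $\mathcal{A}$, write $v_i = u(A_i)$, $m_1 = \sum_i v_i$, $F = \lfloor m_1/n \rfloor$, $\alpha = m_1 - nF$, $N_0 = \{i \in N : v_i \leq F\}$, $N_1 = N \setminus N_0$, and $n_\ell = |N_\ell|$. By the proof of \Cref{thm:optimal_idenbin_gen}, the optimal number of exchanges to reach an EF1 allocation from $\mathcal{A}$ is exactly $\max\{c_0, c_1\}$, where $c_0 = \sum_{i \in N_0}(F - v_i)$ and $c_1 = \sum_{i \in N_1}(v_i - F - 1)$. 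Hence $\fidbin(n, s)$ equals the supremum of $\max\{c_0, c_1\}$ over all admissible profiles $(v_1, \ldots, v_n) \in \{0, 1, \ldots, s\}^n$, and I can treat both bounds as extremal problems in the variables $v_i$.

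For the upper bound I would derive two linear inequalities for each of $c_0$ and $c_1$. The bound $c_0 \leq n_0 F$ is immediate from $v_i \geq 0$, and $c_0 \leq n_1(s - F)$ follows from $\sum_{i \in N_1} v_i \leq n_1 s$ combined with $m_1 \geq nF$. Analogously, $c_1 \leq n_1(s - F - 1)$ is immediate, and $c_1 \leq n_0(F + 1)$ follows from $\sum_{i \in N_0} v_i \geq 0$ combined with $m_1 \leq n(F+1) - 1$. The key elementary step is that for any $x \in [0, s]$,
\[
    \min\{n_0 x,\, n_1(s - x)\} \leq \frac{n_0 n_1 s}{n},
\]
verified by splitting into the cases $x \leq n_1 s / n$ (bound the first argument) and $x > n_1 s / n$ (bound the second). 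Applying this with $x = F$ and $x = F + 1$ gives $c_0, c_1 \leq n_0 n_1 s/n$. Finally, optimizing $n_0 n_1$ subject to $n_0 + n_1 = n$ gives $n_0 n_1 \leq \lfloor n/2 \rfloor \lceil n/2 \rceil$, which equals $n^2/4$ for even $n$ and $(n^2 - 1)/4$ for odd $n$, delivering the claimed upper bounds $sn/4$ and $s(n-1)(n+1)/(4n)$.

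For the lower bound I would exhibit a single extremal family and simply read off $c_0$. Let $k = \lfloor n/2 \rfloor$, and consider the identical-binary instance in which $ks$ goods are valuable and $(n - k)s$ are not, with initial allocation $\mathcal{A}$ placing all valuable goods in the bundles of $k$ designated agents and all non-valuable goods in the remaining $n - k$ bundles. Then $m_1 = ks$, $F = \lfloor ks/n \rfloor$, and $N_0$ is precisely the set of the $n - k$ agents holding only non-valuable goods, so $c_0 = (n - k) F$. Substituting $k = n/2$ for even $n$ yields $(n/2) \lfloor s/2 \rfloor$, and substituting $k = (n-1)/2$ for odd $n$ yields $\tfrac{n+1}{2} \lfloor s(n-1)/(2n) \rfloor$. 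Since $\fidbin(n, s) \geq \max\{c_0, c_1\} \geq c_0$ for this instance, the two stated lower bounds follow.

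The main obstacle I anticipate is establishing the dual inequality $c_1 \leq n_0(F + 1)$ cleanly, since it relies on the sharper bound $m_1 \leq n(F+1) - 1$ (using $\alpha \leq n - 1$) combined with $\sum_{i \in N_0} v_i \geq 0$ to control $\sum_{i \in N_1} v_i$. Once that inequality and the min-product bound above are in place, the remainder of the argument amounts to routine bookkeeping with the integer partitions $n_0 + n_1 = n$ and the floor expressions.
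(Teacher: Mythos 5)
Your proposal is correct and follows essentially the same route as the paper: both reduce to the exact formula $\max\{c_0,c_1\}$ from the proof of Theorem~\ref{thm:optimal_idenbin_gen}, bound each of $c_0$ and $c_1$ by $n_0 n_1 s/n$ (the paper via a direct chain of inequalities, you via two linear bounds and a small min-lemma --- the same algebra organized differently), maximize the quadratic $n_1(n-n_1)$, and use the identical lower-bound instance with $\lfloor n/2\rfloor$ agents holding all valuable goods. The only detail worth noting is that your application of the min-lemma at $x=F+1$ needs $F+1\le s$, which fails only in the degenerate case $m_1=ns$ where $c_0=c_1=0$ anyway.
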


\begin{proof}
Recall that the proof of \Cref{thm:optimal_idenbin_gen} provides a way to compute the optimal number of exchanges to reach an EF1 allocation from a given initial allocation.
To recap, let $m_1$ be the total number of valuable goods, $F = \lfloor m_1 / n \rfloor$ be the minimum number of valuable goods each agent must receive in an EF1 allocation, $N_0$ be the set of agents who have at most $F$ valuable goods in the initial allocation, $N_1$ be the set of agents who have at least $F+1$ valuable goods in the initial allocation, $c_0 = \sum_{i \in N_0} (F - u(A_i))$, and $c_1 = \sum_{i \in N_1} (u(A_i) - (F+1))$.
The optimal number of exchanges is $\max \{c_0, c_1\}$.

We first prove the lower bounds for $\fidbin (n, s)$ by providing an explicit initial allocation and showing that the optimal number of exchanges to reach an EF1 allocation is at least $\left\lfloor \left\lfloor n/2 \right\rfloor s/n \right\rfloor \cdot \left\lceil n/2 \right\rceil$, which corresponds to the lower bounds for both even and odd $n$.
In the initial allocation, $\lfloor n/2 \rfloor$ agents have $s$ valuable goods each and the remaining $\lceil n/2 \rceil$ agents have $s$ non-valuable goods each.
There are a total of $m_1 = \lfloor n/2 \rfloor \cdot s$ valuable goods, and $F = \left\lfloor \left\lfloor n/2 \right\rfloor s/n \right\rfloor$.
The value of $c_0$ is $$\sum_{i \in N_0} (F - u(A_i)) = \sum_{i \in N_0} (\left\lfloor \left\lfloor n/2 \right\rfloor s/n \right\rfloor - 0) = \left\lfloor \left\lfloor n/2 \right\rfloor s/n \right\rfloor \cdot \left\lceil n/2 \right\rceil.$$
Since $\max \{c_0, c_1\} \geq c_0$, the lower bounds follow.

We now prove the upper bounds for $\fidbin (n, s)$.
Let an $s$-balanced allocation $\mathcal{A} = (A_1, \ldots, A_n)$ be given, and let $n_0 = |N_0|$ and $n_1 = |N_1|$.
We first derive upper bounds for $c_0$ and $c_1$.
Note that $m_1 \leq sn_1 + \sum_{i \in N_0} u(A_i)$.
We have
\begin{align*}
    c_0 = \sum_{i \in N_0} (F - u(A_i)) 
    &= (n - n_1)F - \sum_{i \in N_0} u(A_i) \\
    &\leq (n - n_1)\frac{m_1}{n} - \sum_{i \in N_0} u(A_i) \\
    &= \left(1 - \frac{n_1}{n}\right) m_1 - \sum_{i \in N_0} u(A_i) \\
    &\leq \left(1 - \frac{n_1}{n}\right) \left(sn_1 + \sum_{i \in N_0} u(A_i)\right) - \sum_{i \in N_0} u(A_i) \\
    &\leq sn_1 - \frac{sn_1^2}{n} + \sum_{i \in N_0} u(A_i) - \sum_{i \in N_0} u(A_i) 
    = \frac{sn_1}{n}(n - n_1).
\end{align*}
On the other hand, $m_1 \geq \sum_{i \in N_1} u(A_i)$.
We have
\begin{align*}
    c_1 = \sum_{i \in N_1} (u(A_i) - (F+1)) 
    &= \sum_{i \in N_1} u(A_i) - n_1\left( \left\lfloor \frac{m_1}{n} \right\rfloor +1\right) \\
    &\leq \sum_{i \in N_1} u(A_i) - n_1 \left(\frac{m_1}{n}\right) \\
    &\leq \sum_{i \in N_1} u(A_i) - \frac{n_1}{n} \sum_{i \in N_1} u(A_i) \\
    &= \frac{1}{n}(n - n_1) \sum_{i \in N_1} u(A_i) 
    \leq \frac{1}{n}(n - n_1) sn_1 
    = \frac{sn_1}{n}(n - n_1).
\end{align*}

We have thus shown that $\max \{c_0, c_1\} \leq n_1(n - n_1)s/n$.
Therefore, the optimal number of exchanges is at most $n_1(n - n_1)s/n$, which is a quadratic expression in $n_1$.
When $n$ is even, $n_1(n - n_1)s/n$ attains a maximum value at $n_1 = n/2$, and this value is $sn/4$.
When $n$ is odd, $n_1(n - n_1)s/n$ attains a maximum value at $n_1 = (n+1)/2$ and $n_1 = (n-1)/2$, and this value is $s(n-1)(n+1)/4n$.
The upper bounds for $\fidbin (n, s)$ follow.
\end{proof}

\section{Conclusion and Future Work}

In this paper, we have studied the reformability of unfair allocations and the number of exchanges required in the reformation process.
We uncovered several distinctions in the complexity of these problems based on the number of agents and their utility functions, and showed that the number of exchanges required to reach an EF1 allocation is relatively high in the worst case.

While our worst-case bounds for general utilities are already exactly tight in certain scenarios and almost tight generally, an open question is to tighten them for more than two agents when the number of goods in each agent's bundle is not divisible by the number of agents.
Additionally, although these bounds also work for binary utilities, one could try to derive bounds for \emph{identical} utilities.
We provide some insights for identical (but not necessarily binary) utilities in Appendix~\ref{ap:worst_iden}.
Another interesting direction is to require each exchange to be beneficial for both agents involved---in Appendix~\ref{ap:beneficial}, we prove that the problem of deciding whether a given initial allocation can be reformed into an EF1 allocation using only beneficial exchanges is NP-complete for binary utilities.
One could also consider the model of \emph{transferring} goods instead of exchanging them; an EF1 allocation is always reachable from any allocation in this model, so a natural question is to determine the optimal number of exchanges needed for this goal.
In Appendix~\ref{ap:transfer}, we show that several of our proof ideas for exchanges can be adapted to handle transfers as well.
Finally, one could consider reforming an allocation using notions other than EF1 as fairness benchmarks, or allow utilities that are not necessarily additive.\footnote{An EF1 allocation always exists even for arbitrary monotonic utilities \citep{LiptonMaMo04}.
For such utilities, given that a single utility function may already take exponential time to describe, it is common to assume a \emph{query model}, where an algorithm can query an agent's utility for a set of goods.
Interestingly, we observe that even for two agents with identical utilities, an exponential number of queries may be necessary to answer \textsc{Reformability} for monotonic utilities (cf.~\Cref{thm:exist_iden_two}).
To see this, let $m = 2k+2$ and let the size vector be $(k, k+2)$.
Suppose that the utility of each set of goods is equal to its size, except for a particular set $A$ of size $k$, whose utility may be $k$ or $k+1$ (the set~$A$ is not known to the algorithm).
The answer to \textsc{Reformability} is ``Yes'' exactly when $u(A) = k+1$, but any algorithm needs an exponential number of queries in the worst case to discover this.
}

\section*{Acknowledgments}

This work was partially supported by the Singapore Ministry of Education under grant
number MOE-T2EP20221-0001, by JST FOREST under grant
number JPMJFR226O, by JSPS KAKENHI under grant number JP20H05795, by JST ERATO under grant number JPMJER2301, and by an NUS Start-up Grant.
We would like to thank the ISAAC 2025 reviewers and participants for their valuable comments.

\bibliographystyle{plainnat}
\bibliography{main}

\appendix

\section{Worst-Case Bounds for Identical Utilities}
\label{ap:worst_iden}

We continue the discussion from \Cref{sec:worst} on worst-case bounds, and focus on identical utilities in this appendix.

Given $n$ and $s$, let $\fid(n, s)$ be the smallest integer such that for every instance with $n$ agents with \emph{identical} utilities and $ns$ goods and every $s$-balanced allocation $\mathcal{A}$ in the instance, there exists an EF1 allocation that can be reached from $\mathcal{A}$ using at most $\fid(n, s)$ exchanges.

A tight bound for two agents is an immediate consequence of our previous results.
Indeed, the lower bound follows from \Cref{thm:worst_idenbin}, while the upper bound follows from \Cref{thm:worst_ub}.

\begin{theorem}
Let $s$ be a positive integer.
Then, $\fid(2, s) = \lfloor s/2 \rfloor$.
\end{theorem}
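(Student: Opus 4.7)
The plan is to prove the bound by sandwiching $\fid(2,s)$ between matching upper and lower bounds that fall out immediately from results already established in the paper. Since this is essentially a definitional composition, I expect no serious obstacles.

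For the upper bound, I would observe that any identical-utility instance is, in particular, a general-utility instance, so $\fid(2,s) \leq f(2,s)$. Theorem \ref{thm:worst_ub} provides the inequality $f(2,s) \leq (s-r)/2$ for all $s$, where $r$ is the remainder of $s$ divided by $n=2$. Since $r \in \{0,1\}$, we have $(s-r)/2 = \lfloor s/2 \rfloor$, and therefore $\fid(2,s) \leq \lfloor s/2 \rfloor$.

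For the lower bound, I would note that identical binary utilities form a subclass of identical utilities, so $\fidbin(2,s) \leq \fid(2,s)$. Applying Theorem \ref{thm:worst_idenbin} with $n=2$ (the even case) yields the lower bound $\fidbin(2,s) \geq (n/2)\lfloor s/2 \rfloor = \lfloor s/2 \rfloor$. Chaining these two inequalities gives $\fid(2,s) \geq \lfloor s/2 \rfloor$.

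Combining the two bounds completes the proof: $\fid(2,s) = \lfloor s/2 \rfloor$. The only subtlety is verifying that the instance witnessing the lower bound in Theorem \ref{thm:worst_idenbin} does have an $s$-balanced initial allocation under the $\fidbin(2,s)$ definition (it does: one agent holds $s$ valuable goods, the other holds $s$ non-valuable goods), and that the monotonicity of the worst-case quantity under utility-class restriction is correctly oriented (a smaller class of allowed utilities can only decrease the worst case). Both are immediate, so I do not anticipate any real difficulty.
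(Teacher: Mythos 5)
Your proposal is correct and is exactly the paper's argument: the upper bound is inherited from Theorem~\ref{thm:worst_ub} via $\fid(2,s) \leq f(2,s) = (s-r)/2 = \lfloor s/2 \rfloor$, and the lower bound from Theorem~\ref{thm:worst_idenbin} via $\fid(2,s) \geq \fidbin(2,s) \geq \lfloor s/2 \rfloor$. The monotonicity observations about restricting the utility class, which the paper leaves implicit, are correctly oriented in your write-up.
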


For three or more agents, we conjecture that $\fid(n, s)$ is roughly $sn/4$, like $\fidbin(n, s)$.
However, proving this turns out to be surprisingly challenging.
We shall present a result using a slightly weaker fairness notion in the case of three agents.

We say that agent $i$ is \emph{weak-EF1 towards} agent $j$ in an allocation $\mathcal{A} = (A_1, \ldots, A_n)$ if $u_i(A_i) \geq u_i(A_j) - \max_{g \in M} u_i(g)$; note the condition $g\in M$ as opposed to $g\in A_j$ for EF1.
An allocation $\mathcal{A}$ is \emph{weak-EF1} if every agent is weak-EF1 towards every other agent in $\mathcal{A}$.
Weak-EF1 is the fairness notion originally considered by \citet{LiptonMaMo04} (although their algorithm satisfies EF1), and weak-EF1 and EF1 are equivalent when the utilities are binary.
Since we consider identical utilities, we use $u$ instead of $u_i$.
Without loss of generality, we may divide all utilities by $\max_{g \in M} u(g)$.
Then, the utility of each good is in $[0,1]$, and the condition for agent $i$ to be weak-EF1 towards agent~$j$ is $u(A_i) \geq u(A_j) - 1$.

Given $n$ and $s$, let $\fidweak(n, s)$ be the smallest integer such that for every instance with $n$ agents with identical utilities and $ns$ goods, and every $s$-balanced allocation $\mathcal{A}$ in the instance, there exists a weak-EF1 allocation that can be reached from $\mathcal{A}$ using at most $\fidweak(n, s)$ exchanges.
We shall determine the value of $\fidweak(3, s)$.

We describe an algorithm $\mathfrak{A}$ that performs a sequence of exchanges of goods starting from an initial allocation $\mathcal{A}^0$.
For each $t$ starting from $0$, we begin with the allocation $\mathcal{A}^t = (A_1^t, \ldots, A_n^t)$.
If $\mathcal{A}^t$ is weak-EF1, then we are done and the algorithm terminates.
Otherwise, we perform an exchange of goods between two agents to reach the allocation $\mathcal{A}^{t+1} = (A_1^{t+1}, \ldots, A_n^{t+1})$.
For each agent $k$, let $g_k^t$ and $h_k^t$ be a good of the highest utility and a good of the lowest utility in agent $k$'s bundle, $A_k^t$, respectively.
Let $i_t$ be an agent with the most valuable bundle, i.e., $i_t = \argmax_{k\in N} u(A_k^t)$, and $j_t$ be an agent with the least valuable bundle, i.e., $j_t = \argmin_{k\in N} u(A_k^t)$; we may resolve ties arbitrarily.
Note that agent $j_t$ is not weak-EF1 towards agent $i_t$---otherwise, $\mathcal{A}^t$ is weak-EF1---and hence $i_t \neq j_t$.
We then exchange $g_{i_t}^t$ with $h_{j_t}^t$ to form $\mathcal{A}^{t+1}$, i.e., $A_{i_t}^{t+1} = (A_{i_t}^t \setminus \{g_{i_t}^t\}) \cup \{h_{j_t}^t\}$, $A_{j_t}^{t+1} = (A_{j_t}^t \setminus \{h_{j_t}^t\}) \cup \{g_{i_t}^t\}$, and $A_k^{t+1} = A_k^t$ for all $k \in N \setminus \{i_t, j_t\}$.
Subsequently, we increment $t$ by $1$ and repeat the procedure.

To establish our result, we prove a series of lemmas on properties of this algorithm.

\begin{lemma}
\label{lem:goods_comparison}
Let $\mathcal{A}^t$ be an allocation which is not weak-EF1.
Then, $u(g_{i_t}^t) > u(h_{j_t}^t)$.
\end{lemma}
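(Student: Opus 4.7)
The plan is to prove this via a straightforward averaging argument, exploiting the fact that the allocation is $s$-balanced so that every agent's bundle contains exactly $s$ goods.

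First, I would use the hypothesis that $\mathcal{A}^t$ is not weak-EF1 to derive a lower bound on the gap between $u(A_{i_t}^t)$ and $u(A_{j_t}^t)$. Since $i_t$ is chosen to have a most valuable bundle and $j_t$ a least valuable one, if \emph{any} agent fails to be weak-EF1 towards some other agent, then in particular agent $j_t$ must fail to be weak-EF1 towards agent $i_t$. After the normalization $\max_{g \in M} u(g) = 1$ described just before the lemma statement, this failure translates into the strict inequality $u(A_{i_t}^t) - u(A_{j_t}^t) > 1$.

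Next, I would relate $u(g_{i_t}^t)$ and $u(h_{j_t}^t)$ to these bundle values via the obvious max/min bounds: since $g_{i_t}^t$ attains the maximum utility over the $s$ goods in $A_{i_t}^t$, we have $u(g_{i_t}^t) \geq u(A_{i_t}^t)/s$, and since $h_{j_t}^t$ attains the minimum over the $s$ goods in $A_{j_t}^t$, we have $u(h_{j_t}^t) \leq u(A_{j_t}^t)/s$. Subtracting gives
\[
    u(g_{i_t}^t) - u(h_{j_t}^t) \;\geq\; \frac{u(A_{i_t}^t) - u(A_{j_t}^t)}{s} \;>\; \frac{1}{s} \;>\; 0,
\]
which yields the desired strict inequality $u(g_{i_t}^t) > u(h_{j_t}^t)$.

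There is no real obstacle here; the only care needed is to ensure that the two chains of inequalities can be combined into a \emph{strict} inequality. This is secured by the fact that the weak-EF1 violation is strict (a consequence of the normalization making the threshold exactly $1$) and that $s$ is a fixed positive integer, so dividing by $s$ preserves positivity.
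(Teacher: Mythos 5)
Your proof is correct and uses essentially the same ingredients as the paper's: the observation that failure of weak-EF1 anywhere forces $u(A_{j_t}^t) < u(A_{i_t}^t) - 1$ for the extremal agents, combined with the bounds $u(g_{i_t}^t) \geq u(A_{i_t}^t)/s$ and $u(h_{j_t}^t) \leq u(A_{j_t}^t)/s$. The paper merely packages the same inequalities as a proof by contradiction (assuming $u(g_{i_t}^t) \leq u(h_{j_t}^t)$ and concluding the allocation is weak-EF1), whereas you argue directly; the two are equivalent.
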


\begin{proof}
If $u(g_{i_t}^t) \leq u(h_{j_t}^t)$, then
\begin{align*}
    u(A_{i_t}^t) - 1 &\leq u(A_{i_t}^t) \leq s \cdot u(g_{i_t}^t) \leq s \cdot u(h_{j_t}^t) \leq u(A_{j_t}^t),
\end{align*}
so agent $j_t$ is weak-EF1 towards agent $i_t$, and therefore $\mathcal{A}^t$ is weak-EF1, a contradiction.
Hence, $u(g_{i_t}^t) > u(h_{j_t}^t)$.
\end{proof}

\begin{lemma}
\label{lem:agents_ef1}
Let $\mathcal{A}^t$ be an allocation which is not weak-EF1.
Then, in $\mathcal{A}^{t+1}$,
\begin{itemize}
    \item agent $i_t$ is weak-EF1 towards every agent; and
    \item every agent is weak-EF1 towards agent $j_t$.
\end{itemize}
\end{lemma}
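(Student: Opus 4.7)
The plan is to pin down two baseline inequalities and then unfold the weak-EF1 condition by case analysis. Because $\mathcal{A}^t$ is not weak-EF1, there exist agents $k_1, k_2$ with $u(A^t_{k_2}) > u(A^t_{k_1}) + 1$; combined with the extremal choices $i_t = \argmax_{k \in N} u(A^t_k)$ and $j_t = \argmin_{k \in N} u(A^t_k)$, this yields the strict bound $u(A^t_{i_t}) - u(A^t_{j_t}) > 1$. The normalization $u(g) \in [0,1]$ for every good immediately gives $u(g^t_{i_t}) - u(h^t_{j_t}) \leq 1$. I would also record the exchange identities $u(A^{t+1}_{i_t}) = u(A^t_{i_t}) - u(g^t_{i_t}) + u(h^t_{j_t})$ and $u(A^{t+1}_{j_t}) = u(A^t_{j_t}) - u(h^t_{j_t}) + u(g^t_{i_t})$, with $u(A^{t+1}_k) = u(A^t_k)$ for $k \notin \{i_t, j_t\}$.

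For the first bullet, I would split on whether $k = j_t$. If $k \neq j_t$ (and trivially if $k = i_t$), then
\[
u(A^{t+1}_{i_t}) \geq u(A^t_{i_t}) - u(g^t_{i_t}) \geq u(A^t_{i_t}) - 1 \geq u(A^t_k) - 1 = u(A^{t+1}_k) - 1,
\]
using the maximality of $A^t_{i_t}$ and $u(g^t_{i_t}) \leq 1$. If $k = j_t$, then combining the two baseline inequalities gives
\[
u(A^{t+1}_{i_t}) - u(A^{t+1}_{j_t}) = [u(A^t_{i_t}) - u(A^t_{j_t})] - 2[u(g^t_{i_t}) - u(h^t_{j_t})] > 1 - 2 = -1,
\]
so weak-EF1 holds. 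For the second bullet, the case $k = i_t$ is what was just verified, and $k = j_t$ is trivial. For $k \notin \{i_t, j_t\}$, I would write
\[
u(A^{t+1}_{j_t}) - 1 = u(A^t_{j_t}) + [u(g^t_{i_t}) - u(h^t_{j_t})] - 1 \leq u(A^t_{j_t}) \leq u(A^t_k) = u(A^{t+1}_k),
\]
again using $u(g^t_{i_t}) - u(h^t_{j_t}) \leq 1$ and the minimality of $A^t_{j_t}$.

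I do not anticipate any serious obstacle, since the lemma is essentially a routine verification that a single ``rich-to-poor'' swap preserves weak-EF1 from the donor and toward the recipient. The only mild subtlety is to observe that the failure of weak-EF1 at $\mathcal{A}^t$ yields a \emph{strict} inequality $u(A^t_{i_t}) - u(A^t_{j_t}) > 1$; this strictness is precisely what is needed so that the drop of up to $2$ coming from the $-2[u(g^t_{i_t}) - u(h^t_{j_t})]$ term does not break weak-EF1 between $i_t$ and $j_t$ after the exchange.
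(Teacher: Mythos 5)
Your proof is correct and follows essentially the same route as the paper's: both arguments rest on the strict gap $u(A^t_{i_t}) - u(A^t_{j_t}) > 1$ forced by the failure of weak-EF1, the normalization bound $u(g) \leq 1$ on individual goods, and the extremality of $i_t$ and $j_t$. You merely repackage the paper's chained inequalities as explicit difference identities, which changes nothing of substance.
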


\begin{proof}
Let $k \in N \setminus \{i_t, j_t\}$. Note that $u(A_k^{t+1}) = u(A_k^t)$.

Since $i_t = \argmax_{\ell\in N}u(A_\ell^t)$, we have
\begin{align*}
    u(A_{i_t}^{t+1}) &= u((A_{i_t}^t \setminus \{g_{i_t}^t\}) \cup \{h_{j_t}^t\}) 
    \geq u(A_{i_t}^t \setminus \{g_{i_t}^t\}) 
    \geq u(A_{i_t}^t) - 1 
    \geq u(A_k^t) - 1 
    = u(A_k^{t+1}) - 1,
\end{align*}
showing that agent $i_t$ is weak-EF1 towards agent $k$.

Similarly, since $j_t = \argmin_{\ell\in N} u(A_\ell^t)$, we have
\begin{align*}
    u(A_k^{t+1}) &= u(A_k^t) 
    \geq u(A_{j_t}^t) 
    = u((A_{j_t}^{t+1} \cup \{h_{j_t}^t\}) \setminus \{g_{i_t}^t\}) 
    \geq u(A_{j_t}^{t+1} \setminus \{g_{i_t}^t\}) 
    \geq u(A_{j_t}^{t+1}) - 1,
\end{align*}
showing that agent $k$ is weak-EF1 towards agent $j_t$.

Finally, since agent $j_t$ is not weak-EF1 towards agent $i_t$, we have $u(A_{j_t}^t) < u(A_{i_t}^t) - 1$. Thus,
\begin{align*}
    u(A_{i_t}^{t+1}) &= u((A_{i_t}^t \setminus \{g_{i_t}^t\}) \cup \{h_{j_t}^t\}) \\
    &\geq u(A_{i_t}^t \setminus \{g_{i_t}^t\}) \\
    &\geq u(A_{i_t}^t) - 1 
    > u(A_{j_t}^t) 
    = u((A_{j_t}^{t+1} \cup \{h_{j_t}^t\}) \setminus \{g_{i_t}^t\}) 
    \geq u(A_{j_t}^{t+1} \setminus \{g_{i_t}^t\}) 
    \geq u(A_{j_t}^{t+1}) - 1,
\end{align*}
showing that agent $i_t$ is weak-EF1 towards agent $j_t$.
\end{proof}

For each $t \geq 0$, call $i_t$ a \emph{strong agent} and $j_t$ a \emph{weak agent}.
Let $I^0 = J^0 = \emptyset$, and for each $t \geq 0$, let $I^{t+1} = I^t \cup \{i_t\}$ be the set of strong agents up to round $t$, and $J^{t+1} = J^t \cup \{j_t\}$ be the set of weak agents up to round $t$.

\begin{lemma}
\label{lem:strong_weak_disjoint}
Let $t \geq 0$ be given such that $\mathcal{A}^0, \ldots, \mathcal{A}^t$ are not weak-EF1.
Then, $I^{t+1} \cap J^{t+1} = \emptyset$.
\end{lemma}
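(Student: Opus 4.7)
The plan is to prove the claim by strong induction on $t$, carrying along two auxiliary invariants that make the argument self-sustaining: (i) every $k \in I^t$ is weak-EF1 towards every agent in $\mathcal{A}^t$, i.e., $u(A_k^t) \geq u(A_\ell^t) - 1$ for all $\ell \in N$; and (ii) for every $k \in J^t$, every agent is weak-EF1 towards $k$ in $\mathcal{A}^t$, i.e., $u(A_\ell^t) \geq u(A_k^t) - 1$ for all $\ell \in N$. The base case $t = 0$ is trivial because $I^0 = J^0 = \emptyset$.

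For the inductive step, assume all three statements hold at $t$, and note that the hypothesis $\mathcal{A}^t$ not weak-EF1 yields $u(A_{i_t}^t) - u(A_{j_t}^t) > 1$. I would first establish disjointness at $t+1$. Since
\[
I^{t+1} \cap J^{t+1} \subseteq (I^t \cap J^t) \cup (\{i_t\} \cap J^t) \cup (I^t \cap \{j_t\}) \cup (\{i_t\} \cap \{j_t\}),
\]
and the first and last intersections vanish by the inductive hypothesis and by $i_t \neq j_t$ respectively, it suffices to rule out $j_t \in I^t$ and $i_t \in J^t$. If $j_t \in I^t$, invariant (i) forces $u(A_{j_t}^t) \geq u(A_{i_t}^t) - 1$, contradicting the strict gap above; the case $i_t \in J^t$ is symmetric via (ii).

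Next, I would re-establish invariants (i) and (ii) at $t+1$. The contributions $k = i_t$ in (i) and $k = j_t$ in (ii) follow directly from \Cref{lem:agents_ef1}. For $k \in I^t \setminus \{i_t\}$, the disjointness just proved gives $k \neq j_t$, so $k$ is not involved in the round-$t$ exchange and $u(A_k^{t+1}) = u(A_k^t)$. For $\ell \notin \{i_t,j_t\}$ the bundle is unchanged and the inductive inequality persists; for $\ell = i_t$ one has $u(A_{i_t}^{t+1}) \leq u(A_{i_t}^t)$ by \Cref{lem:goods_comparison}, so the inequality is preserved. The subtle case is $\ell = j_t$, whose bundle can gain utility, but here \Cref{lem:agents_ef1} independently asserts that every agent (in particular $k$) is weak-EF1 towards $j_t$ in $\mathcal{A}^{t+1}$, which is exactly what (i) demands. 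A symmetric argument handles (ii).

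The main obstacle I anticipate is this last sub-case: invariant (i) must be maintained against a potentially substantial increase in $u(A_{j_t}^{t+1})$, and the inductive inequality $u(A_k^t) \geq u(A_{j_t}^t) - 1$ is too weak on its own to yield $u(A_k^{t+1}) \geq u(A_{j_t}^{t+1}) - 1$. The resolution is to not rely on the induction hypothesis at all for this direction, but instead invoke \Cref{lem:agents_ef1} directly in $\mathcal{A}^{t+1}$; this is the only place the proof genuinely uses the ``every agent is weak-EF1 towards $j_t$'' half of that lemma.
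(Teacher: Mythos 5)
Your proof is correct and rests on exactly the same ingredients as the paper's: \Cref{lem:agents_ef1} handles the two agents involved in the round-$t$ exchange and, crucially, the growing bundle of $j_t$, while \Cref{lem:goods_comparison} handles the shrinking bundle of $i_t$, with all other bundles unchanged. The paper packages this same induction as a proof by contradiction launched from the first rounds at which a hypothetical agent $k$ enters $I$ and $J$, whereas you carry the two weak-EF1 invariants globally for all of $I^t$ and $J^t$ at once; the substance is identical.
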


\begin{proof}
Suppose on the contrary that there exists an agent $k$ such that $k \in I^{t+1} \cap J^{t+1}$.
Let $t_p$ be the smallest index such that $k \in I^{t_p+1}$, and $t_q$ be the smallest index such that $k \in J^{t_q+1}$.
Then, we have $k = i_{t_p} = j_{t_q}$.
Note that $t_p \neq t_q$, since $i_{t'} \neq j_{t'}$ for all $t'$.

Suppose first that $t_p < t_q$.
We show by induction that agent $k$ is weak-EF1 towards every agent in $\mathcal{A}^{t'+1}$ for all $t' \in \{t_p, \ldots, t\}$.
The base case of $t' = t_p$ is true by \Cref{lem:agents_ef1} since $k = i_{t_p}$.
For the inductive step, suppose that agent $k$ is weak-EF1 towards every agent in $\mathcal{A}^{t'+1}$ for some $t' \in \{t_p, \ldots, t-1\}$.
Then, agent $k$ cannot be $j_{t'+1}$.
If agent $k$ is $i_{t'+1}$, then agent $k$ is weak-EF1 towards every agent in $\mathcal{A}^{t'+2}$ by \Cref{lem:agents_ef1}, making the inductive statement true.
If agent $k$ is not $i_{t'+1}$, then agent $k$ does not take part in the exchange going from $\mathcal{A}^{t'+1}$ to $\mathcal{A}^{t'+2}$.
\begin{itemize}
    \item Agent $k$ is weak-EF1 towards agent $i_{t'+1}$ in $\mathcal{A}^{t'+2}$ since agent $k$ is weak-EF1 towards $i_{t'+1}$ in~$\mathcal{A}^{t'+1}$ by the inductive hypothesis, and agent $i_{t'+1}$'s utility of her own bundle decreases after the exchange by \Cref{lem:goods_comparison}.
    \item Agent $k$ is weak-EF1 towards agent $j_{t'+1}$ in $\mathcal{A}^{t'+2}$ by \Cref{lem:agents_ef1}.
    \item Agent $k$ is weak-EF1 towards every other agent in $\mathcal{A}^{t'+2}$ since their bundles did not change from~$\mathcal{A}^{t'+1}$.
\end{itemize}
Overall, these show that agent $k$ is weak-EF1 towards every agent in $\mathcal{A}^{t'+2}$, proving the inductive statement.
Since agent~$k$ is weak-EF1 towards every agent in $\mathcal{A}^{t'+1}$ for all $t' \in \{t_p, \ldots, t\}$, agent $k$ can never be $j_{t_q}$.
This shows that $t_p < t_q$ is false.

Therefore, we must have $t_p > t_q$.
The argument for this case is similar to that for the previous case.
We show by induction that every agent is weak-EF1 towards agent $k$ in $\mathcal{A}^{t'+1}$ for all $t' \in \{t_q, \ldots, t\}$.
The base case of $t' = t_q$ is true by \Cref{lem:agents_ef1} since $k = j_{t_q}$.
For the inductive step, suppose that every agent is weak-EF1 towards agent $k$ in $\mathcal{A}^{t'+1}$ for some $t' \in \{t_q, \ldots, t-1\}$.
Then, agent $k$ cannot be~$i_{t'+1}$.
If agent $k$ is $j_{t'+1}$, then every agent is weak-EF1 towards agent~$k$ in $\mathcal{A}^{t'+2}$ by \Cref{lem:agents_ef1}, making the inductive statement true.
If agent $k$ is not $j_{t'+1}$, then agent $k$ does not take part in the exchange going from $\mathcal{A}^{t'+1}$ to $\mathcal{A}^{t'+2}$.
\begin{itemize}
    \item Agent $i_{t'+1}$ is weak-EF1 towards agent $k$ in $\mathcal{A}^{t'+2}$ by \Cref{lem:agents_ef1}.
    \item Agent $j_{t'+1}$ is weak-EF1 towards agent $k$ in $\mathcal{A}^{t'+2}$ since agent $j_{t'+1}$ is weak-EF1 towards agent~$k$ in $\mathcal{A}^{t'+1}$ by the inductive hypothesis, and agent $j_{t'+1}$'s utility of her own bundle increases after the exchange by \Cref{lem:goods_comparison}.
    \item Every other agent is weak-EF1 towards agent $k$ in $\mathcal{A}^{t'+2}$ since their bundles did not change from~$\mathcal{A}^{t'+1}$.
\end{itemize}
Overall, these show that every agent is weak-EF1 towards agent $k$ in $\mathcal{A}^{t'+2}$, proving the inductive statement.
Since every agent is weak-EF1 towards agent $k$ in $\mathcal{A}^{t'+1}$ for all $t' \in \{t_q, \ldots, t\}$, agent $k$ can never be $i_{t_p}$.
This yields the desired contradiction.
\end{proof}

\begin{lemma}
\label{lem:monotone}
Let $t \geq 0$ be given such that $\mathcal{A}^0, \ldots, \mathcal{A}^t$ are not weak-EF1.
Then,
\begin{itemize}
    \item for any $i \in I^{t+1}$, $u(A_i^0) \geq \cdots \geq u(A_i^{t+1})$; and
    \item for any $j \in J^{t+1}$, $u(A_j^0) \leq \cdots \leq u(A_j^{t+1})$.
\end{itemize}
\end{lemma}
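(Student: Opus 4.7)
The plan is to prove the two monotonicity statements by tracking, round by round, how the bundle of a fixed strong (respectively, weak) agent evolves. The key insight is that by \Cref{lem:strong_weak_disjoint}, no agent in $I^{t+1}$ is ever a weak agent in rounds $0, \ldots, t$, and symmetrically no agent in $J^{t+1}$ is ever a strong agent. This dichotomy cleanly constrains how each participating agent's bundle can change.

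For the first statement, I would fix $i \in I^{t+1}$ and walk through each round $t' \in \{0, \ldots, t\}$. At round $t'$, \Cref{lem:strong_weak_disjoint} gives $i \neq j_{t'}$. Hence either $i = i_{t'}$, or $i$ does not participate in the exchange, so that $A_i^{t'+1} = A_i^{t'}$. In the first case, since $\mathcal{A}^{t'}$ is not weak-EF1, \Cref{lem:goods_comparison} yields $u(g_{i_{t'}}^{t'}) > u(h_{j_{t'}}^{t'})$; because the exchange replaces $g_i^{t'}$ by $h_{j_{t'}}^{t'}$ in agent $i$'s bundle, we obtain $u(A_i^{t'+1}) = u(A_i^{t'}) - u(g_i^{t'}) + u(h_{j_{t'}}^{t'}) < u(A_i^{t'})$. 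In the second case, the equality $u(A_i^{t'+1}) = u(A_i^{t'})$ is immediate. Chaining these inequalities across $t' = 0, 1, \ldots, t$ gives the desired non-increasing chain $u(A_i^0) \geq \cdots \geq u(A_i^{t+1})$.

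The second statement follows by a symmetric argument: fix $j \in J^{t+1}$, use \Cref{lem:strong_weak_disjoint} to exclude $j = i_{t'}$ at every round, and split into the cases where $j = j_{t'}$ (in which the exchange replaces $h_j^{t'}$ by $g_{i_{t'}}^{t'}$, strictly increasing the utility via \Cref{lem:goods_comparison}) or $j$ is uninvolved (in which case the utility is unchanged). I do not expect a serious obstacle here: the whole argument reduces to combining the disjointness $I^{t+1} \cap J^{t+1} = \emptyset$ with the per-round comparison $u(g_{i_{t'}}^{t'}) > u(h_{j_{t'}}^{t'})$, yielding a straightforward case analysis.
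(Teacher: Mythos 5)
Your proof is correct and follows essentially the same approach as the paper's: invoke \Cref{lem:strong_weak_disjoint} to rule out a strong agent ever acting as a weak agent (and vice versa), then observe via \Cref{lem:goods_comparison} that each round either leaves the agent's bundle untouched or changes its utility in the claimed direction. The paper's proof is just a more compressed version of your round-by-round case analysis.
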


\begin{proof}
At every time step $t' \in \{0, \ldots, t\}$, an agent $i \in I^{t+1}$ cannot be a weak agent by \Cref{lem:strong_weak_disjoint}.
Therefore, agent~$i$ either takes part in the exchange from $\mathcal{A}^{t'}$ to $\mathcal{A}^{t'+1}$ as a strong agent $i_{t'}$ or does not take part in the exchange.
The utility of agent $i$'s bundle either decreases in the former case due to \Cref{lem:goods_comparison} or remains the same in the latter case.
An analogous argument holds for $j \in J^{t+1}$.
\end{proof}

\begin{lemma}
\label{lem:goods}
Each good is not exchanged more than once in algorithm $\mathfrak{A}$.
\end{lemma}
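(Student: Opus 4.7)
The plan is to argue by contradiction: suppose some good $g$ is exchanged at two distinct times $t_1 < t_2$, chosen so that $g$ is not exchanged strictly between them; thus $g$ stays in the same bundle throughout $(t_1, t_2]$. At step $t_1$, either $g = g_{i_{t_1}}^{t_1}$ (Case A), so $g$ moves from $i_{t_1}$ into $A_{j_{t_1}}^{t_2}$; or $g = h_{j_{t_1}}^{t_1}$ (Case B), so $g$ lands in $A_{i_{t_1}}^{t_2}$. Since all allocations $\mathcal{A}^0, \ldots, \mathcal{A}^{t_2}$ are not weak-EF1 (else $\mathfrak{A}$ would have halted), \Cref{lem:strong_weak_disjoint} and \Cref{lem:monotone} are available throughout.

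The first key step is to pin down the role $g$ plays at time $t_2$. In Case A, the current holder $j_{t_1}$ of $g$ has already been a weak agent, so by \Cref{lem:strong_weak_disjoint} it cannot be strong at $t_2$; combined with the fact that the exchange at $t_2$ involves only $i_{t_2}$ and $j_{t_2}$, this forces $j_{t_2} = j_{t_1}$ and $g = h_{j_{t_2}}^{t_2}$. Case B is symmetric and yields $i_{t_2} = i_{t_1}$ with $g = g_{i_{t_2}}^{t_2}$. The second key step is to exploit that $g$ is simultaneously extremal (max on one side, min on the other) of bundles of size $s$: in Case A this gives
\[
    u(A_{j_{t_1}}^{t_2}) \;\geq\; s \cdot u(g) \;\geq\; u(A_{i_{t_1}}^{t_1}).
\]
Because $\mathcal{A}^{t_2}$ is not weak-EF1, we also have $u(A_{i_{t_2}}^{t_2}) > u(A_{j_{t_2}}^{t_2}) + 1$, and combining the two inequalities yields $u(A_{i_{t_2}}^{t_2}) > u(A_{i_{t_1}}^{t_1}) + 1$.

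The final step closes the argument with \Cref{lem:monotone}: since $i_{t_2} \in I^{t_2+1}$, its bundle utility is monotonically non-increasing, so $u(A_{i_{t_2}}^{t_2}) \leq u(A_{i_{t_2}}^{t_1})$; and since $i_{t_1}$ attains the maximum utility at time $t_1$, $u(A_{i_{t_2}}^{t_1}) \leq u(A_{i_{t_1}}^{t_1})$. Chaining these gives $u(A_{i_{t_2}}^{t_2}) \leq u(A_{i_{t_1}}^{t_1})$, which contradicts the previous display. Case B proceeds symmetrically, using the non-decrease of bundle utilities of weak agents and the min property of $j_{t_1}$ at time $t_1$ to derive $u(A_{j_{t_1}}^{t_1}) \leq u(A_{j_{t_2}}^{t_1}) \leq u(A_{j_{t_2}}^{t_2}) < u(A_{j_{t_1}}^{t_1}) - 1$. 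The main obstacle I anticipate is the role-identification step: one must be sure that $g$ cannot be re-exchanged from the same ``side'' it arrived on, and this is exactly what \Cref{lem:strong_weak_disjoint} rules out; everything that follows is a careful but routine combination of the extremality of $g$ with the monotonicity bounds, tight enough to yield the necessary strict gap of more than one.
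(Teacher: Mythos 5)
Your proof is correct and follows essentially the same route as the paper's: both arguments use \Cref{lem:strong_weak_disjoint} to pin down that a re-exchanged good must leave via the same (weak or strong) agent it arrived at, then sandwich $s \cdot u(g)$ between the two relevant bundle utilities and invoke \Cref{lem:monotone} together with the argmax/argmin choice to reach a contradiction. The only cosmetic difference is that you take two consecutive exchange times of $g$ while the paper cases on whether $g$ starts in a strong or weak agent's bundle and uses its first exchange; the inequality chains are the same.
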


\begin{proof}
Suppose on the contrary that some good $g$ is exchanged more than once.
We first consider the case where $g$ is in a strong agent's bundle in $\mathcal{A}^0$ and is exchanged for the first time at round $t$, i.e., $g = g_{i_t}^t$.
After its first exchange, the good is now with agent $j_t$.
By \Cref{lem:strong_weak_disjoint}, $j_t \notin I^{t'}$ for any $t' > t$.
Since the good is exchanged again, it must be that $g = h_{j_{t'}}^{t'}$ for some $t' > t$, where $j_{t'} = j_t$.
Then, we have
\begin{align*}
    u(A_{i_{t'}}^t) - 1 
    &\geq u(A_{i_{t'}}^{t'}) - 1 \tag*{(by \Cref{lem:monotone} on $i_{t'} \in I^{t'+1}$)} \\
    &> u(A_{j_{t'}}^{t'}) \tag*{(since $j_{t'}$ is not weak-EF1 towards $i_{t'}$)}\\
    &\geq s \cdot u( g ) \tag*{(since $g$ is the least valuable good in $A_{j_{t'}}^{t'}$)} \\
    &\geq u(A_{i_t}^t) \tag*{(since $g$ is the most valuable good in $A_{i_t}^t$)} \\
    &> u(A_{i_t}^t) - 1,
\end{align*}
which means that agent $j_t$ should have exchanged goods with agent $i_{t'}$ at round $t$ instead of with agent~$i_t$.
This contradiction shows that a good in a strong agent's bundle in $\mathcal{A}^0$ cannot be exchanged more than once.

Analogously, we now consider the case where $g$ is in a weak agent's bundle in $\mathcal{A}^0$ and is exchanged for the first time at round $t$, i.e., $g = h_{j_t}^t$.
After its first exchange, the good is now with agent $i_t$.
By \Cref{lem:strong_weak_disjoint},  $i_t \notin J^{t'}$ for any $t' > t$.
Since the good is exchanged again, it must be that $g = g_{i_{t'}}^{t'}$ for some $t' > t$, where $i_{t'} = i_t$.
Then, we have
\begin{align*}
    u(A_{j_{t'}}^t) 
    &\leq u(A_{j_{t'}}^{t'}) \tag*{(by \Cref{lem:monotone} on $j_{t'} \in J^{t'+1}$)} \\
    &< u(A_{i_{t'}}^{t'}) - 1 \tag*{(since $j_{t'}$ is not weak-EF1 towards $i_{t'}$)} \\
    &< u(A_{i_{t'}}^{t'}) \\
    &\leq s \cdot u(g) \tag*{(since $g$ is the most valuable good in $A_{i_{t'}}^{t'}$)} \\
    &\leq u(A_{j_t}^t), \tag*{(since $g$ is the least valuable good in $A_{j_t}^t$)}
\end{align*}
which means that agent $i_t$ should have exchanged goods with agent $j_{t'}$ at round $t$ instead of with agent~$j_t$.
This contradiction shows that a good in a weak agent's bundle in $\mathcal{A}^0$ also cannot be exchanged more than once.
\end{proof}

\begin{lemma}
\label{lem:terminates}
Algorithm $\mathfrak{A}$ terminates in finite time.
\end{lemma}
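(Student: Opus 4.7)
The plan is to derive the termination of $\mathfrak{A}$ as a direct corollary of \Cref{lem:goods}. Each iteration of the algorithm, when the current allocation is not weak-EF1, executes exactly one exchange, which by definition involves exactly two goods: the maximum-utility good $g_{i_t}^t$ leaving the strong agent's bundle and the minimum-utility good $h_{j_t}^t$ leaving the weak agent's bundle. The first step of the argument is to record this one-to-one correspondence between rounds and pairs of goods moved.

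The second step is to apply \Cref{lem:goods}, which states that no good is ever exchanged more than once throughout the execution of $\mathfrak{A}$. Since the total number of goods is $ns$, and each round ``uses up'' two distinct goods that have never been exchanged before, the algorithm can perform at most $\lfloor ns/2 \rfloor$ exchanges in total. Once no more exchanges can be performed, the procedure must have reached a weak-EF1 allocation (otherwise, the selection rule would prescribe another exchange, contradicting \Cref{lem:goods}). Hence $\mathfrak{A}$ terminates after at most $\lfloor ns/2 \rfloor$ rounds, which is finite.

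There is essentially no obstacle here, since all the substantive work---especially the delicate argument that a previously moved good is never selected again---is already encapsulated in \Cref{lem:goods}. The present lemma is a short bookkeeping consequence, and the only care needed is to make the correspondence between iterations and pairs of exchanged goods explicit before invoking the prior lemma.
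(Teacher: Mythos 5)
Your proposal is correct and matches the paper's proof, which likewise derives termination directly from \Cref{lem:goods}: since no good is exchanged more than once, at most $\lfloor m/2 \rfloor = \lfloor ns/2 \rfloor$ exchanges can occur, so the algorithm terminates by that round. The extra remark about the selection rule is harmless but not needed.
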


\begin{proof}
Since each good is not exchanged more than once by \Cref{lem:goods}, at most $\lfloor m/2 \rfloor$ pairs of goods can be exchanged, and the algorithm terminates by round $\lfloor m/2 \rfloor$.
\end{proof}

Since the algorithm terminates in finite time by \Cref{lem:terminates}, there exists $T \geq 0$ such that $\mathcal{A}^0, \ldots, \mathcal{A}^T$ are not weak-EF1 but $\mathcal{A}^{T+1}$ is weak-EF1.
Let $I = I^{T+1}$ be the set of strong agents and $J = J^{T+1}$ be the set of weak agents.
By \Cref{lem:strong_weak_disjoint}, $I$ and $J$ are disjoint sets of agents.
Therefore, at each round $t \in \{0, \ldots, T\}$ of the algorithm, some agent $i_t \in I$ exchanges a good with some agent $j_t \in J$.

We derive a bound on the number of steps that $\mathfrak{A}$ takes in the case of two agents.

\begin{lemma}
\label{lem:worst_iden_two_agents}
For $n = 2$ agents with $s$ goods each, algorithm $\mathfrak{A}$ terminates after at most $\lfloor s/2 \rfloor$ rounds.
\end{lemma}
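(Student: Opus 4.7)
My plan is to show that the algorithm must terminate by round $t = \lfloor s/2 \rfloor$. Without loss of generality I assume agent~$1$ is the (strictly) stronger agent at round~$0$, since otherwise $\mathcal{A}^0$ is already weak-EF1. By \Cref{lem:strong_weak_disjoint}, agent~$1$ then remains the unique strong agent and agent~$2$ the unique weak agent at every round actually performed, so each exchange gives the most valuable good $g_{t'}$ of $A_1^{t'}$ to agent~$2$ and the least valuable good $h_{t'}$ of $A_2^{t'}$ to agent~$1$.

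The key structural claim I would establish is that $g_{t'} \in A_1^0$ and $h_{t'} \in A_2^0$ for every round $t'$ performed. Indeed, if $g_{t'}$ had previously been transferred from agent~$2$, meaning $g_{t'} = h_{t''}$ for some $t'' < t'$, then this single good would participate in the exchanges at both rounds $t''$ and $t'$, contradicting \Cref{lem:goods}; the case for $h_{t'}$ is symmetric. Sort $A_1^0 = \{a_1, \ldots, a_s\}$ and $A_2^0 = \{b_1, \ldots, b_s\}$ in non-increasing order of utility. A simple induction on $t'$ then gives $u(g_{t'}) = u(a_{t'+1})$ and $u(h_{t'}) = u(b_{s-t'})$: at round $t'$, the multiset of utilities of $\{g_0, \ldots, g_{t'-1}\}$ equals $\{u(a_1), \ldots, u(a_{t'})\}$ by the inductive hypothesis, so the maximum utility in $A_1^0 \setminus \{g_0, \ldots, g_{t'-1}\}$ is $u(a_{t'+1})$, which is the value $g_{t'}$ must attain.

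With these selections fully determined, I would finish by contradiction. Suppose the algorithm performs round $t = \lfloor s/2 \rfloor$, i.e., $\mathcal{A}^t$ is not weak-EF1. By \Cref{lem:strong_weak_disjoint} agent~$1$ must still be strong at round $t$, so $u(A_1^t) \geq u(A_2^t)$, and the failure of weak-EF1 forces $u(A_1^t) - u(A_2^t) > 1$. On the other hand, a direct substitution yields
\[
u(A_1^t) - u(A_2^t) = \sum_{i=t+1}^{s} u(a_i) + \sum_{i=s-t+1}^{s} u(b_i) - \sum_{i=1}^{t} u(a_i) - \sum_{i=1}^{s-t} u(b_i).
\]
For even $s = 2k$ with $t = k$, the right-hand side reorganises into $\sum_{i=1}^{k}[u(a_{k+i}) - u(a_i)] + \sum_{i=1}^{k}[u(b_{k+i}) - u(b_i)] \leq 0$. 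For odd $s = 2k+1$ with $t = k$, the analogous reorganisation leaves a single residual term $u(a_{k+1}) - u(b_{k+1})$, so the difference is at most $u(a_{k+1}) \leq 1$, using the normalisation $\max_g u(g) \leq 1$. Either way $u(A_1^t) - u(A_2^t) \leq 1$, contradicting the strict inequality $>1$.

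The main obstacle I expect is the structural claim: it is where \Cref{lem:goods} has to be invoked to forbid a good from being simultaneously the least valuable item in one bundle at one time and the most valuable item in the other bundle at a later time. Once that step is in place, the telescoping computation is routine, though the odd-$s$ case genuinely relies on the unit bound on individual good utilities.
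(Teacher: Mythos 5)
Your proof is correct and follows essentially the same route as the paper's: both argue by contradiction, use \Cref{lem:strong_weak_disjoint} to fix agent $1$ as the unique strong agent, invoke \Cref{lem:goods} to conclude that the goods exchanged are exactly the $\lfloor s/2\rfloor$ most valuable goods of $A_1^0$ and the $\lfloor s/2\rfloor$ least valuable goods of $A_2^0$, and then show the resulting bundle difference is at most $1$. Your explicit telescoping sums (with the residual term $u(a_{k+1})\leq 1$ in the odd case) are just an indexed rewriting of the paper's set comparisons $u(B_1)\geq u(C_1')$ and $u(C_2)\geq u(B_2)$ after removing one good from $C_1$.
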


\begin{proof}
The statement is clear when $s = 1$, so we assume that $s \geq 2$.
Suppose on the contrary that after $T = \lfloor s/2 \rfloor$ rounds, the allocation $\mathcal{A}^T$ is still not weak-EF1.
Without loss of generality, assume that $1 \in I$ and $2 \in J$.
Then, the most valuable $\lfloor s/2 \rfloor$ goods from agent $1$'s bundle $A_1^0$ are exchanged with the least valuable $\lfloor s/2 \rfloor$ goods from agent~$2$'s bundle $A_2^0$ to reach $\mathcal{A}^T$.
Let $B_1 \subseteq A_1^0$ and $B_2 \subseteq A_2^0$ be the sets of goods from the respective bundles that are exchanged between the two agents, and let $C_1 = A_1^0 \setminus B_1$ and $C_2 = A_2^0 \setminus B_2$.
Note that all these sets are disjoint by \Cref{lem:goods}.
Let $g$ be an arbitrary good in $C_1$, and let $C'_1 = C_1 \setminus \{g\}$.
We have $|B_1| = |B_2| = \lfloor s/2 \rfloor$, $|C_1| = |C_2| = \lceil s/2 \rceil$, and $|C'_1| \leq \lfloor s/2 \rfloor$.

Now, $u(B_1) \geq u(C'_1)$ since the goods with the highest values from $A_1^0$ are exchanged and $B_1$ has at least as many goods as $C'_1$.
Also, $u(C_2) \geq u(B_2)$ since the goods with the lowest values from $A_2^0$ are exchanged and $C_2$ has at least as many goods as $B_2$.
Therefore, we have
\begin{align*}
    u(A_2^T) &= u(B_1 \cup C_2) 
    \geq u(C'_1 \cup B_2) 
    = u(A_1^T \setminus \{g\}) 
    \geq u(A_1^T) - 1,
\end{align*}
which shows that agent $2$ is weak-EF1 towards agent $1$ in $\mathcal{A}^T$.
On the other hand, agent $1$ is also weak-EF1 towards agent $2$ in $\mathcal{A}^T$ due to \Cref{lem:agents_ef1} applied on $\mathcal{A}^{T-1}$.
This shows that $\mathcal{A}^T$ is weak-EF1, a contradiction.
\end{proof}

We now come to our main lemma, which bounds the number of steps that $\mathfrak{A}$ takes for three agents.
For convenience of the analysis, we focus on the case where $s$ is divisible by $3$.

\begin{lemma}
\label{lem:worst_iden_three_agents}
Let $s$ be a positive integer divisible by $3$.
For $n = 3$ agents with $s$ goods each, algorithm~$\mathfrak{A}$ terminates after at most $2s/3$ rounds.
\end{lemma}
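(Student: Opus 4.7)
The plan is to split on the sizes of $I$ and $J$, which by \Cref{lem:strong_weak_disjoint} are disjoint non-empty subsets of $\{1,2,3\}$. (If the algorithm performs no exchange at all, there is nothing to prove.) The possible sizes are therefore $(|I|,|J|) \in \{(1,1),(1,2),(2,1)\}$.

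In the case $|I| = |J| = 1$, only two agents ever participate in an exchange; call them $i \in I$ and $j \in J$. The third agent's bundle is never touched, and since she is never chosen as strong or weak, the algorithm's behavior depends only on agents $i$ and $j$. Consequently, the sequence of exchanges is identical to the one $\mathfrak{A}$ would produce on the two-agent subinstance consisting of agents $i$ and $j$ with $s$ goods each. By \Cref{lem:worst_iden_two_agents}, this subinstance terminates within $\lfloor s/2 \rfloor \leq 2s/3$ rounds.

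For $(|I|,|J|) \in \{(1,2),(2,1)\}$, the two cases mirror one another under the symmetry swapping strong and weak roles, so it suffices to handle $I = \{1\}$, $J = \{2,3\}$. Let $T_k$ denote the number of rounds in which agent $k \in \{2,3\}$ is the weak agent; then the total round count is $N = T_2 + T_3$. Following the structural argument of \Cref{lem:worst_iden_two_agents}, I plan to show that the $N$ goods agent $1$ surrenders are exactly the $N$ most valuable goods of $A_1^0$ (call this set $B$, partitioned as $B_2 \sqcup B_3$ by recipient), and that for each $k \in \{2,3\}$ the $T_k$ goods agent $k$ surrenders are the $T_k$ least valuable goods of $A_k^0$ (call this set $C_k$). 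This combines the greedy choice of the algorithm with \Cref{lem:goods}, which rules out any received good from being re-exchanged. The final allocation then takes the form $A_1^N = (A_1^0 \setminus B) \cup C_2 \cup C_3$ and $A_k^N = (A_k^0 \setminus C_k) \cup B_k$.

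To conclude, I would argue by contradiction: suppose $N > 2s/3$, so by divisibility $N \geq 2s/3 + 1$ and WLOG $T_2 \geq T_3$, giving $T_2 \geq s/3 + 1$. Since $1 \in I$, agent $1$ is the strongest throughout, so weak-EF1 of $\mathcal{A}^N$ reduces to $u(A_1^N) - u(A_k^N) \leq 1$ for $k \in \{2,3\}$. Using the structural description, I plan to pair goods in $B_k$ (among the most valuable in $A_1^0$) with goods in $A_1^0 \setminus B$ (the least valuable of $A_1^0$), absorb $C_k$ into the residual original bundle, and then invoke the normalization $u(g) \in [0,1]$ to bound each gap by $1$. \textbf{The main obstacle} is that the split $B = B_2 \sqcup B_3$ is determined dynamically by which weak agent is chosen at each round and is not pinned down by the greedy rule alone. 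Overcoming this will likely require either an exchange argument whose bound is independent of how $B$ splits, or a potential-function analysis on $u(A_1^t) - \min(u(A_2^t), u(A_3^t))$ that tracks sufficient progress per round; the hypothesis $3 \mid s$ will then be invoked at the tight threshold to land exactly at $2s/3$ rather than a constant off.
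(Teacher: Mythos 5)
Your case split and the $(1,1)$ subcase match the paper, and your structural setup for the remaining cases (the sets $B_{1,2},B_{1,3},B_2,B_3,C_1,C_2,C_3$ with $x+y=2s/3$ and $x\le s/2$ via \Cref{lem:worst_iden_two_agents}) is exactly the paper's. But the proposal stops at the point where the real work begins: you explicitly flag as ``the main obstacle'' that the split of $B$ into $B_2\sqcup B_3$ is determined dynamically, and you only gesture at two possible strategies (an exchange argument or a potential function) without carrying either out. Neither is what the paper does. The paper's key idea is a convex-combination bound whose weight depends on the realized split: with $\alpha=(6x-s)/(3x+s)$, the greedy inequalities $u(B_{1,2})/x\ge u(C_1)/(s/3)$, $u(C_2)/(s-x)\ge u(B_2)/x$, etc., together with \Cref{lem:goods_comparison}, yield $\alpha u(A_2^T)+(1-\alpha)u(A_3^T)\ge u(A_1^T)$ when $|I^T|=1$ (and the reversed inequality when $|I^T|=2$). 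This holds for \emph{every} admissible split, which is precisely how the dependence on the dynamic choice is eliminated; the hypothesis $3\mid s$ and the constraint $x+y=2s/3$ enter through the identity $(\alpha+\alpha_2)x-x+(1-\alpha)y=s/3$. That inequality places the maximum-utility (resp.\ minimum-utility) bundle on the $J^T$ (resp.\ $I^T$) side, and the proof then concludes via \Cref{lem:agents_ef1} applied to the last exchange involving that agent together with the monotonicity of bundle values (\Cref{lem:monotone}), not by directly bounding each pairwise gap by $1$.

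A further substantive error in your plan: you reduce weak-EF1 of $\mathcal{A}^N$ to $u(A_1^N)-u(A_k^N)\le 1$ on the grounds that agent $1\in I$ is ``the strongest throughout.'' Agent $1$ is the argmax at each round \emph{before} its exchange, but after the final exchange her bundle need not be maximal---indeed, the paper's convex-combination inequality shows that after $2s/3$ rounds the maximum-utility bundle lies with an agent in $J^T$. So the target inequality you aim to verify is not the one that certifies weak-EF1, and the concluding step of your plan would not go through even if the structural claims about $B$ and the $C_k$ were established. As it stands, the proposal identifies the right skeleton but is missing the central argument.
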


\begin{proof}
Suppose on the contrary that after $T = 2s/3$ rounds, the allocation $\mathcal{A}^T$ is still not weak-EF1.
Note that $T > 0$, so $I^T, J^T \neq \emptyset$.
If $|I^T| = |J^T| = 1$, then after at most $\lfloor s/2 \rfloor$ rounds, the agent $i \in I^T$ and the agent $j \in J^T$ are weak-EF1 towards each other by \Cref{lem:worst_iden_two_agents}, while the agent $k \in N \setminus \{i, j\}$ is weak-EF1 towards everyone and vice versa since agent $k$ does not partake in the exchanges.
Since $\lfloor s/2 \rfloor \leq 2s/3$, the allocation $\mathcal{A}^{T'}$ is weak-EF1 for some $T' \leq 2s/3$, contradicting our assumption.
Therefore, we must have $I^T \cup J^T = N$.

\vspace{2mm}

\textbf{Case 1: $|I^T| = 1$. }
We consider the allocation $\mathcal{A}^T$ relative to $\mathcal{A}^0$.
Without loss of generality, let $1 \in I^T$.
Let $B_{1,2} \subseteq A_1^0$ and $B_2 \subseteq A_2^0$ be the sets of goods in the respective bundles that are exchanged between agents $1$ and $2$, $B_{1,3} \subseteq A_1^0$ and $B_3 \subseteq A_3^0$ be the sets of goods in the respective bundles that are exchanged between agents $1$ and $3$, and let $C_1 = A_1^0 \setminus (B_{1,2} \cup B_{1,3})$, $C_2 = A_2^0 \setminus B_2$, and $C_3 = A_3^0 \setminus B_3$.
Note that all these sets are disjoint by \Cref{lem:goods}.
Let $x = |B_{1,2}| = |B_2|$ and $y = |B_{1,3}| = |B_3|$.
We have $x+y = 2s/3$, $|C_2| = s-x$, $|C_3| = s-y$, and $|C_1| = s-x-y = s/3$.
Without loss of generality, let $x \ge y$.
Note that $x \leq s/2$, since otherwise agent $2$ will be weak-EF1 towards agent $1$ by \Cref{lem:worst_iden_two_agents} and does not need to exchange more goods with agent $1$.

In $\mathcal{A}^0$, we have $A_1^0 = B_{1,2} \cup B_{1,3} \cup C_1$, $A_2^0 = B_2 \cup C_2$, and $A_3^0 = B_3 \cup C_3$.
In $\mathcal{A}^T$, we have $A_1^T = B_2 \cup B_3 \cup C_1$, $A_2^T = B_{1,2} \cup C_2$, and $A_3^T = B_{1,3} \cup C_3$.
Since the algorithm always exchanges the most valuable goods from agent $1$'s bundle and the least valuable goods from agent $2$'s and agent $3$'s bundles, we have 
\begin{align*}
\frac{u(B_{1,2})}{x} &\geq \frac{u(C_1)}{s/3}, \quad \frac{u(B_{1,3})}{y} \geq \frac{u(C_1)}{s/3}, \quad
\frac{u(C_2)}{s-x} \geq \frac{u(B_2)}{x}, \quad\text{ and }\quad \frac{u(C_3)}{s-y} \geq \frac{u(B_3)}{y}.
\end{align*}
By \Cref{lem:goods_comparison}, we have $u(B_{1,2}) \geq u(B_2)$ and $u(B_{1,3}) \geq u(B_3)$.

Since $x\ge y$, we have $s/3 \leq x \leq s/2$ and hence $s/6 \leq y \leq s/3$.
Let 
\begin{align*}
\alpha = \frac{6x-s}{3x+s} = 2 - \frac{3s}{3x+s} = 2 - \frac{s}{s-y}.
\end{align*}
Since $s/3 \leq x \leq s/2$, we have $1/2 \leq \alpha \leq 4/5$.

Let $\alpha_2 = \alpha(s-x)/x$ and $\alpha_3 = (1-\alpha)(s-y)/y$.
Since $1/3 \leq x/s \leq 1/2$, we have $\alpha \geq x/s$, which implies that 
\begin{align*}
\alpha_2 = \frac{\alpha(s-x)}{x} = \frac{\alpha s}{x} - \alpha \geq 1 - \alpha.
\end{align*}
On the other hand, the derivative of $\alpha_2 = \alpha (s-x)/x$ with respect to $x$ is
\begin{align*}
    \left( \frac{6x-s}{3x+s} \right) \left( -\frac{s}{x^2} \right) + \left( \frac{9s}{(3x+s)^2} \right)\left( \frac{s-x}{x} \right) 
    &= \left( \frac{s}{x(3x+s)} \right) \left( \frac{s-6x}{x} + \frac{9s-9x}{3x+s} \right) \\
    &= \left( \frac{s}{x(3x+s)} \right) \left( \frac{(s+9x)(s-3x)}{x(3x+s)} \right).
\end{align*}
When the derivative of $\alpha_2$ with respect to $x$ is equal to $0$, we get $x = -s/9$ or $x = s/3$.
It can be verified that $\alpha_2$ attains a local maximum at $x = s/3$.
For $x \in [s/3, s/2]$, the maximum value of $\alpha_2$ is hence equal to $1$ at $x = s/3$.
Together, we have $1-\alpha \leq \alpha_2 \leq 1$.

Now,
\begin{align*}
    \alpha_3 &= (1-\alpha)\frac{s-y}{y} 
    = \left( 1 - \left( 2 - \frac{s}{s-y} \right) \right) \frac{s-y}{y} 
    = -\frac{s-y}{y} + \frac{s}{y} 
    = 1.
\end{align*}
We shall show that $\alpha u(A_2^T) + (1-\alpha)u(A_3^T) \geq u(A_1^T)$.
We have
\begin{align*}
    \alpha u(A_2^T) &+ (1-\alpha) u(A_3^T) \\
    &= \alpha(u(B_{1,2})+u(C_2)) + (1-\alpha)(u(B_{1,3})+u(C_3)) \\
    &\geq \alpha u(B_{1,2}) + \frac{\alpha(s-x)}{x} u(B_2) + (1-\alpha)u(B_{1,3}) + \frac{(1-\alpha)(s-y)}{y} u(B_3) \\
    &= \alpha u(B_{1,2}) + \alpha_2 u(B_2) + (1-\alpha)u(B_{1,3}) + \alpha_3 u(B_3) \\
    &= (\alpha + \alpha_2 - 1) u(B_{1,2}) + (1-\alpha_2)u(B_{1,2}) + \alpha_2 u(B_2) + (1-\alpha) u(B_{1,3}) + u(B_3) \\
    &\geq (\alpha + \alpha_2 - 1) u(B_{1,2}) + (1-\alpha_2)u(B_2) + \alpha_2 u(B_2) + (1-\alpha) u(B_{1,3}) + u(B_3) \\
    &= (\alpha + \alpha_2 - 1) u(B_{1,2}) + u(B_2)  + (1-\alpha) u(B_{1,3}) + u(B_3) \\
    &\geq (\alpha + \alpha_2 - 1) \frac{3x}{s}u(C_1) + u(B_2)  + (1-\alpha) \frac{3y}{s}u(C_1) + u(B_3) \\
    &= \frac{3}{s} \left[ (\alpha+\alpha_2)x - x + (1-\alpha)y \right]u(C_1) + u(B_2) + u(B_3).
\end{align*}
Since $\alpha_2x = \alpha(s-x)$ implies $(\alpha+\alpha_2)x = \alpha s$ and $y = \alpha_3y = (1-\alpha)(s-y)$ implies $(1-\alpha)y = (1-\alpha)s - y$, the expression $(\alpha+\alpha_2)x - x + (1-\alpha)y$ simplifies to $\alpha s - x + (1-\alpha)s - y$, which gives $s-x-y$.
Using the fact that $x+y = 2s/3$, the expression simplifies to $s/3$.
Therefore,
\begin{align*}
    \alpha u(A_2^T) + (1-\alpha) u(A_3^T) 
    &\geq \frac{3}{s}\left(\frac{s}{3}\right)u(C_1) + u(B_2) + u(B_3) 
    = u(C_1) + u(B_2) + u(B_3) 
    = u(A_1^T).
\end{align*}

Let $j \in \argmax_{k\in\{1,2,3\}} u(A_k^T)$.
Since $\alpha u(A_2^T) + (1-\alpha) u(A_3^T) \geq u(A_1^T)$ for some $\alpha \in (0, 1)$, we may assume that $j \in J^T$.
Suppose without loss of generality that $j = 2$.
Note that agent $2$ is weak-EF1 towards every other agent in $\mathcal{A}^T$.
Agent $1$ is weak-EF1 towards every other agent in $\mathcal{A}^T$ by \Cref{lem:agents_ef1}.
Let $t < T$ be the round that agent $2$ exchanges a good with agent $1$ for the final time, i.e., agent~$2$ exchanges a good with agent $1$ going from $\mathcal{A}^t$ to $\mathcal{A}^{t+1}$.
Then, by \Cref{lem:agents_ef1}, agent $3$ is weak-EF1 towards agent $2$ in $\mathcal{A}^{t+1}$.
Since the utility of agent $3$'s bundle does not decrease thereafter and agent~$2$'s bundle remains the same thereafter, agent $3$ is weak-EF1 towards agent $2$ in $\mathcal{A}^T$.
Then, agent $3$ is weak-EF1 towards every other agent in $\mathcal{A}^T$.
This shows that $\mathcal{A}^T$ is weak-EF1, contradicting the original assumption.

\textbf{Case 2: $|I^T| = 2$. }
We consider the allocation $\mathcal{A}^T$ relative to $\mathcal{A}^0$.
Without loss of generality, let $1 \in J^T$.
Let $B_{1,2} \subseteq A_1^0$ and $B_2 \subseteq A_2^0$ be the sets of goods in the respective bundles that are exchanged between agents $1$ and $2$, $B_{1,3} \subseteq A_1^0$ and $B_3 \subseteq A_3^0$ be the sets of goods in the respective bundles that are exchanged between agents $1$ and $3$, and let $C_1 = A_1^0 \setminus (B_{1,2} \cup B_{1,3})$, $C_2 = A_2^0 \setminus B_2$, and $C_3 = A_3^0 \setminus B_3$.
Note that all these sets are disjoint by \Cref{lem:goods}.
Let $x = |B_{1,2}| = |B_2|$ and $y = |B_{1,3}| = |B_3|$.
We have $x+y = 2s/3$, $|C_2| = s-x$, $|C_3| = s-y$, and $|C_1| = s-x-y = s/3$.
Without loss of generality, let $x \geq y$.
Note that $x \leq s/2$, since otherwise agent $1$ will be weak-EF1 towards agent $2$ by \Cref{lem:worst_iden_two_agents} and does not need to exchange more goods with agent $2$.

In $\mathcal{A}^0$, we have $A_1^0 = B_{1,2} \cup B_{1,3} \cup C_1$, $A_2^0 = B_2 \cup C_2$, and $A_3^0 = B_3 \cup C_3$.
In $\mathcal{A}^T$, we have $A_1^T = B_2 \cup B_3 \cup C_1$, $A_2^T = B_{1,2} \cup C_2$, and $A_3^T = B_{1,3} \cup C_3$.
Since the algorithm always exchanges the least valuable goods from agent $1$'s bundle and the most valuable goods from agent $2$'s and agent $3$'s bundles, we have 
\begin{align*}
\frac{u(B_{1,2})}{x} \leq \frac{u(C_1)}{s/3}, \quad \frac{u(B_{1,3})}{y} \leq \frac{u(C_1)}{s/3}, \quad \frac{u(C_2)}{s-x} \leq \frac{u(B_2)}{x}, \quad \text{ and } \quad \frac{u(C_3)}{s-y} \leq \frac{u(B_3)}{y}.
\end{align*}
By \Cref{lem:goods_comparison}, we have $u(B_{1,2}) \leq u(B_2)$ and $u(B_{1,3}) \leq u(B_3)$.

Since $x \geq y$, we have $s/3 \leq x \leq s/2$.
Let 
\begin{align*}
\alpha = \frac{6x-s}{3x+s} = 2 - \frac{3s}{3x+s} = 2 - \frac{s}{s-y},
\end{align*}
$\alpha_2 = \alpha(s-x)/x$, and $\alpha_3 = (1-\alpha)(s-y)/y$.
By the same reasoning as in Case 1, we have $1/2 \leq \alpha \leq 4/5$, $1-\alpha \leq \alpha_2 \leq 1$, and $\alpha_3 = 1$.

We shall show that $\alpha u(A_2^T) + (1-\alpha)u(A_3^T) \leq u(A_1^T)$.
We have
\begin{align*}
    & \alpha u(A_2^T) + (1-\alpha) u(A_3^T) \\
    &= \alpha(u(B_{1,2})+u(C_2)) + (1-\alpha)(u(B_{1,3})+u(C_3)) \\
    &\leq \alpha u(B_{1,2}) + \frac{\alpha(s-x)}{x} u(B_2) + (1-\alpha)u(B_{1,3}) + \frac{(1-\alpha)(s-y)}{y} u(B_3) \\
    &= \alpha u(B_{1,2}) + \alpha_2 u(B_2) + (1-\alpha)u(B_{1,3}) + \alpha_3 u(B_3) \\
    &= (\alpha + \alpha_2 - 1) u(B_{1,2}) + (1-\alpha_2)u(B_{1,2}) + \alpha_2 u(B_2) + (1-\alpha) u(B_{1,3}) + u(B_3) \\
    &\leq (\alpha + \alpha_2 - 1) u(B_{1,2}) + (1-\alpha_2)u(B_2) + \alpha_2 u(B_2) + (1-\alpha) u(B_{1,3}) + u(B_3) \\
    &= (\alpha + \alpha_2 - 1) u(B_{1,2}) + u(B_2) + (1-\alpha) u(B_{1,3}) + u(B_3) \\
    &\leq (\alpha + \alpha_2 - 1) \frac{3x}{s}u(C_1) + u(B_2) + (1-\alpha) \frac{3y}{s}u(C_1) + u(B_3) \\
    &= \frac{3}{s} \left[ (\alpha+\alpha_2)x - x + (1-\alpha)y \right]u(C_1) + u(B_2) + u(B_3).
\end{align*}
By the same reasoning as in Case 1, we have $(\alpha+\alpha_2)x - x + (1-\alpha)y = s/3$, and therefore, $\alpha u(A_2^T) + (1-\alpha)u(A_3^T) \leq u(A_1^T)$.

Let $i \in \argmin_{k\in\{1,2,3\}} u(A_k^T)$.
Since $\alpha u(A_2^T) + (1-\alpha) u(A_3^T) \leq u(A_1^T)$ for some $\alpha \in (0, 1)$, we may assume that $i \in I^T$.
Suppose without loss of generality that $i = 2$.
Note that every agent is weak-EF1 towards agent~$2$ in $\mathcal{A}^T$.
Every agent is weak-EF1 towards agent $1$ in $\mathcal{A}^T$ by \Cref{lem:agents_ef1}.
Let $t < T$ be the round that agent $2$ exchanges a good with agent $1$ for the final time, i.e., agent~$2$ exchanges a good with agent $1$ going from $\mathcal{A}^t$ to $\mathcal{A}^{t+1}$.
Then, by \Cref{lem:agents_ef1}, agent $2$ is weak-EF1 towards agent~$3$ in $\mathcal{A}^{t+1}$.
Since the utility of agent $3$'s bundle does not increase thereafter and agent $2$'s bundle remains the same thereafter, agent~$2$ is weak-EF1 towards agent $3$ in $\mathcal{A}^T$.
Then, every agent is weak-EF1 towards agent $3$ in $\mathcal{A}^T$.
This shows that $\mathcal{A}^T$ is weak-EF1, contradicting the original assumption.
\end{proof}

We are now ready to show the result on $\fidweak(n, s)$ for three agents.

\begin{theorem}
Let $s$ be a positive integer divisible by $3$.
Then, $\fidweak(3, s) = 2s/3$.
\end{theorem}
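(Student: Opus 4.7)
The plan is to show both directions: $\fidweak(3, s) \leq 2s/3$ and $\fidweak(3, s) \geq 2s/3$. The upper bound will come essentially for free from the heavy lifting already done in \Cref{lem:worst_iden_three_agents}, while the lower bound will reuse the identical binary construction from \Cref{thm:worst_idenbin}.

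For the upper bound, I would simply invoke \Cref{lem:worst_iden_three_agents}: given any $s$-balanced allocation $\mathcal{A}^0$ with $n=3$ identical-utility agents and $s$ divisible by $3$, algorithm $\mathfrak{A}$ terminates in at most $2s/3$ rounds with a weak-EF1 allocation. Since $\fidweak$ is the worst-case minimum, this yields $\fidweak(3, s) \leq 2s/3$ immediately.

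For the lower bound, the key observation is that for \emph{binary} utilities, the weak-EF1 condition $u(A_i) \geq u(A_j) - \max_{g\in M}u(g)$ (with $\max_{g\in M}u(g) \in \{0,1\}$) is equivalent to the EF1 condition $u(A_i) \geq u(A_j \setminus \{g\})$ for some $g \in A_j$, since in both cases the condition reduces to $u(A_i) \geq u(A_j) - 1$ (when there is a valuable good) or is trivially satisfied. Hence, for any identical binary instance, lower bounds on the number of exchanges to reach an EF1 allocation transfer verbatim to lower bounds for reaching a weak-EF1 allocation. I would then borrow the construction from the proof of \Cref{thm:worst_idenbin}: let agent $1$ start with $s$ valuable goods (each worth $1$) and agents $2, 3$ each start with $s$ goods of zero utility. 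This is an identical-utility instance, hence a legitimate input for $\fidweak$. Plugging $m_1 = s$, $F = \lfloor s/3 \rfloor = s/3$, $N_0 = \{2,3\}$, $N_1 = \{1\}$ into the formula of \Cref{thm:optimal_idenbin_gen}, the optimal number of exchanges is $\max\{c_0, c_1\} = \max\{2s/3,\; 2s/3 - 1\} = 2s/3$. Thus no weak-EF1 allocation is reachable with fewer than $2s/3$ exchanges, giving $\fidweak(3, s) \geq 2s/3$.

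Combining the two directions yields the equality. There is no real obstacle here: all of the technical work has been performed in \Cref{lem:worst_iden_three_agents} (which established the tight algorithmic bound for three agents via the careful case analysis on $|I^T|$) and in \Cref{thm:worst_idenbin} (which supplies the matching construction). The only delicate point worth flagging in the write-up is the equivalence of EF1 and weak-EF1 under binary utilities, so that the lower bound from \Cref{thm:worst_idenbin} can be transported into the $\fidweak$ setting.
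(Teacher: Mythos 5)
Your proposal is correct and matches the paper's proof exactly: the upper bound is obtained by invoking \Cref{lem:worst_iden_three_agents}, and the lower bound by reusing the identical binary construction underlying \Cref{thm:worst_idenbin} together with the observation that weak-EF1 and EF1 coincide for binary utilities. Your explicit computation of $\max\{c_0, c_1\} = 2s/3$ for the half-valuable/half-worthless instance is just an unfolding of the bound the paper already states, so the two arguments are the same.
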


\begin{proof}
The lower bound of $\fidweak(3, s)$ follows from \Cref{thm:worst_idenbin}---note that weak-EF1 and EF1 are equivalent for binary utilities.
The upper bound follows from \Cref{lem:worst_iden_three_agents}.
\end{proof}

\section{Beneficial Exchanges}
\label{ap:beneficial}

Let us say that an exchange is \emph{beneficial} if the two agents involved in the exchange strictly benefit from the exchange, i.e., if the goods $g \in A_i$ and $g' \in A_j$ are exchanged, then $u_i(g') > u_i(g)$ and $u_j(g) > u_j(g')$.
In this appendix, we investigate the decision problem of whether a given initial allocation can be reformed into an EF1 allocation using \emph{only} beneficial exchanges.
For convenience, we refer to this problem as \textsc{Beneficial Exchanges}.

We show that \textsc{Beneficial Exchanges} is NP-complete, even for binary utilities, using a reduction from \textsc{Minimum $k$-Coverage}.
In \textsc{Minimum $k$-Coverage}, we are given positive integers $k, \ell, p, q$ such that $k \leq q$ and $\ell \leq p$, a set $X = \{x_1, \ldots, x_q\}$, and a collection $C = \{Y_1, \ldots, Y_p\}$ of subsets of $X$.
The problem is to decide whether there exists a set $I \subseteq \{1, \ldots, p\}$ of indices such that $|I| = \ell$ and $|\bigcup_{i \in I} Y_i| \leq k$.
This decision problem is known to be NP-hard \citep{Vinterbo02}.

\begin{theorem}
\label{thm:beneficial}
\textsc{Beneficial Exchanges} is NP-complete for binary utilities.
\end{theorem}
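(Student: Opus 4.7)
The plan has two parts. For membership in NP, I would observe that in any beneficial exchange under binary utilities, both participating agents strictly gain one unit of utility, so the quantity $\sum_{i \in N} u_i(A_i)$ grows by exactly $2$ per exchange. Since this quantity is bounded above by $\sum_{i} u_i(M) \leq nm$, every beneficial exchange sequence has length $O(nm)$. The sequence itself therefore serves as a polynomial-size certificate that can be verified in polynomial time by checking each exchange's beneficiality and the EF1 status of the final allocation.

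For NP-hardness, the plan is to reduce from \textsc{Minimum $k$-Coverage}. Given an instance $(X, C, k, \ell)$, I would build a fair division instance whose agents are a central agent $A$, a \emph{set agent} $B_j$ for each $Y_j \in C$, an \emph{element agent} $E_i$ for each $x_i \in X$, and a \emph{pacifier agent} $W$. In the initial allocation, $A$ holds $\ell$ ``token'' goods valued only by the $B_j$'s; each $B_j$ holds a single ``flag'' good $g_j$ whose value pattern encodes the set $Y_j$ (valued by $A$, and by those $E_i$ with $x_i \in Y_j$); each $E_i$ holds a ``ballast'' good valued only by $W$; and $W$ holds exactly $k$ ``reward'' goods valued only by the $E_i$'s. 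I would then calibrate the utilities so that the only beneficial exchanges available are (i) token-for-flag swaps between $A$ and the $B_j$'s, capped at $\ell$ by the token supply and by the fact that each $B_j$ saturates after a single exchange, and (ii) ballast-for-reward swaps between $E_i$'s and $W$, capped at $k$ by the reward supply. Type-(i) exchanges thus correspond to selecting the $\ell$ subsets, and type-(ii) exchanges correspond to ``pacifying'' up to $k$ element agents.

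I would next argue that the final allocation is EF1 if and only if every element $x_i$ covered by some selected $Y_{j_t}$ has been pacified, making reachability of EF1 equivalent to the condition $|\bigcup_{t=1}^{\ell} Y_{j_t}| \leq k$. The main obstacle is the utility calibration. Naively, after $A$ picks up its flags, the envy of $E_i$ towards $A$ equals the \emph{multiplicity} $c_i = |\{t : x_i \in Y_{j_t}\}|$, whereas I need EF1 to trigger at $c_i \geq 1$ in order to detect mere coverage rather than multi-coverage. To sharpen the threshold, I plan to augment $A$'s initial bundle with a small number of ``amplifier'' goods valued by every $E_i$, thereby shifting the EF1 slack so that even a singly-covered $E_i$ is forced to receive a pacifier reward. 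Beyond this, I must rule out unintended beneficial exchanges---for example, re-exchanging a flag onward from $A$, or indirect swaps through intermediate bundles---by verifying that no such exchange is strictly profitable for both sides under the chosen utility pattern; this will require a careful case analysis showing that every EF1-reaching exchange sequence can be rearranged into the canonical form of $\ell$ type-(i) exchanges followed by at most $k$ type-(ii) exchanges.
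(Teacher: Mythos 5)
Your NP-membership argument is valid and is a slightly different route from the paper's: you bound the sequence length by observing that total utilitarian welfare increases by exactly $2$ per beneficial exchange under binary utilities, giving $O(nm)$ exchanges, whereas the paper notes that each \emph{good} can participate in at most one beneficial exchange (it would have to be both received and given away by the same agent, hence worth both $1$ and $0$ to her), giving at most $m/2$ exchanges. Either bound suffices for a polynomial certificate.

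The hardness reduction, however, has a genuine gap at exactly the point you flag as ``the main obstacle,'' and your proposed fix does not close it. With a single central agent $A$ accumulating the selected flags, the envy of element agent $E_i$ towards $A$ is additive in the multiplicity $c_i = |\{t : x_i \in Y_{j_t}\}|$: after removing one good from $A$'s bundle, $E_i$ needs own utility at least $c_i + a - 1$, where $a$ is the number of amplifier goods. No choice of $a$ turns this threshold into the indicator of $c_i \geq 1$; the pacification demand you end up encoding is $\sum_i (c_i + a - 1)^+$ rather than $|\bigcup_t Y_{j_t}|$, so the reduction does not compute coverage. The paper sidesteps this entirely by never concentrating the selected sets in one bundle: each set $Y_j$ has its own agent $a_{3,j}$ holding \emph{two} goods valued by exactly the element agents covering $x_i \in Y_j$, so EF1 of $E_i$ towards $a_{3,j}$ is a per-set, multiplicity-free condition, and ``selection'' of $Y_j$ means $a_{3,j}$ retains both goods. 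Two further problems in your construction: (i) nothing forces $A$ to perform any token-for-flag swaps, and since performing fewer selections only shrinks the union, the empty selection would make every instance a Yes-instance---the paper forces at least $\ell$ selections by supplying only $p-\ell$ ``relief'' agents $a_{4,\cdot}$, so at least $\ell$ set agents necessarily remain unrelieved, and passing to a size-$\ell$ subset only shrinks the union; (ii) a single pacifier $W$ holding $k$ rewards all valued by every $E_i$ makes every element agent envy $W$ by $k-1$ even before any selection occurs, which is unsatisfiable when $q(k-1) > k$---the paper instead uses $k$ separate agents $a_{2,1},\dots,a_{2,k}$, each holding exactly one good valued by the element agents, so that envy towards each of them is within the EF1 slack. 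These are structural issues with the gadget, not details of the deferred case analysis, so the reduction as proposed would not go through.
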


\begin{proof}
For membership in NP, observe that in a sequence of beneficial exchanges for binary utilities, each good $g \in M$ can only be part of at most one exchange.
Indeed, if good $g$ is part of at least two beneficial exchanges, then it must be received by some agent $i$ (and hence worth $1$ to $i$) and be given away by agent $i$ (and hence worth $0$ to $i$), which is impossible.
Therefore, such a sequence consists of at most $m/2$ exchanges, and can be used as a certificate for polynomial-time verification.

It remains to show that the problem is NP-hard.
Let an instance of \textsc{Minimum $k$-Coverage} be given.
Define an instance of \textsc{Beneficial Exchanges} as follows.
There are $n = 2p+q+k-\ell$ agents and $m = 2n$ goods.
We shall label the agents $a_{1,1}, \ldots, a_{1,q}$, $a_{2,1}, \ldots, a_{2,k}$, $a_{3,1}, \ldots, a_{3,p}$, $a_{4,1}, \ldots, a_{4,p-\ell}$; we use $u_{i,j}$ for the utility of agent $a_{i,j}$.
For each agent $a_{i,j}$, there are two goods $g_{i,j}^0$ and~$g_{i,j}^1$ that are both in agent $a_{i,j}$'s bundle in the initial allocation.
The valuable goods for the agents are as follows:
\begin{itemize}
    \item For $i \in \{1, \ldots, q\}$, $u_{1,i}(g_{2,j}^1) = 1$ for all $j \in \{1, \ldots, k\}$.
    Additionally, if $x_i \in Y_j$ for some $j \in \{1, \ldots, p\}$, then $u_{1,i}(g_{3,j}^0) = u_{1,i}(g_{3,j}^1) = 1$.
    \item For $i \in \{1, \ldots, k\}$, $u_{2,i}(g_{1,j}^1) = 1$ for all $j \in \{1, \ldots, q\}$.
    \item For $i \in \{1, \ldots, p\}$, $u_{3,i}(g_{4,j}^1) = 1$ for all $j \in \{1, \ldots, p-\ell\}$.
    \item For $i \in \{1, \ldots, p-\ell\}$, $u_{4,i}(g_{3,j}^1) = 1$ for all $j \in \{1, \ldots, p\}$.
\end{itemize}
All other goods not mentioned above are worth $0$ to the respective agents.
This reduction can be done in polynomial time.

In the initial allocation, every agent has zero utility for her own bundle, and the only agents who are possibly not EF1 are agents $a_{1,i}$, who envy $a_{3,j}$ if $x_i \in Y_j$.
By construction, the only possible beneficial exchanges are between $g_{1,i}^1$ in agent $a_{1,i}$'s bundle and $g_{2,j}^1$ in agent $a_{2,j}$'s bundle, or between $g_{3,i}^1$ in agent~$a_{3,i}$'s bundle and $g_{4,j}^1$ in agent $a_{4,j}$'s bundle.

We claim that the initial allocation can be reformed into an EF1 allocation via only beneficial exchanges if and only if there exists a set $I \subseteq \{1, \ldots, p\}$ of indices such that $|I| = \ell$ and $|\bigcup_{i \in I} Y_i| \leq k$.

$(\Leftarrow)$ Suppose that there exists a set $I \subseteq \{1, \ldots, p\}$ of indices such that $|I| = \ell$ and $|\bigcup_{i \in I} Y_i| \leq k$.
\begin{itemize}
    \item Let $I' = \{1, \ldots, p\} \setminus I$.
    Since $|I'| = p-\ell$, there exists a bijection $\sigma : I' \to \{1, \ldots, p-\ell\}$.
    For each $i' \in I'$, exchange $g_{3,i'}^1$ in agent $a_{3,i'}$'s bundle with $g_{4,\sigma(i')}^1$ in agent $a_{4,\sigma(i')}$'s bundle.
    \item Let $J = \{j \mid x_j \in \bigcup_{i \in I} Y_i \}$.
    Since $|J| \leq k$, there exists an injection $\phi : J \to \{1, \ldots, k\}$.
    For each $j \in J$, exchange $g_{1,j}^1$ in agent $a_{1,j}$'s bundle with $g_{2,\phi(j)}^1$ in agent $a_{2,\phi(j)}$'s bundle.
\end{itemize}
We now show that the new allocation is EF1.
It is easy to see that the allocation is EF1 for agents $a_{2,i}$, $a_{3,i}$, and $a_{4,i}$, since every other agent has at most one of their valuable goods.
Therefore, it suffices to show that the allocation is EF1 for agents $a_{1,j}$.
\begin{itemize}
    \item If $j \in J$, then agent $a_{1,j}$ has the valuable good $g_{2,\phi(j)}^1$ in her bundle, so her utility of her own bundle is at least $1$.
    Since her utility of every other agent's bundle is at most $2$, agent $a_{1,j}$ is EF1 towards every other agent.
    \item If $j \notin J$, then $x_j \notin \bigcup_{i \in I} Y_i$, and so $x_j \notin Y_i$ for all $i \in I$.
    Note that the valuable goods for $a_{1,j}$ are possibly in the form $g_{2,i}^1$, $g_{3,i}^0$, and $g_{3,i}^1$.
    Suppose on the contrary that $a_{1,j}$ is not EF1 towards some agent.
    This agent must have two such goods in the final allocation.
    The only way for this to happen is when there exists $i^*$ such that agent $a_{3,i^*}$ has both $g_{3,i^*}^0$ and $g_{3,i^*}^1$.
    This means agent~$a_{3,i^*}$ had not exchanged any goods, and so $i^* \notin I'$.
    This implies that $i^* \in I$.
    Since $x_j \notin Y_i$ for all $i \in I$, we must have $u_{1,j}(g_{3,i^*}^0) = u_{1,j}(g_{3,i^*}^1) = 0$.
    This contradicts the assumption that $a_{1,j}$ is not EF1 towards agent $a_{3,i^*}$.
    Therefore, $a_{1,j}$ is EF1 towards every agent.
\end{itemize}

$(\Rightarrow)$ Suppose that the initial allocation can be reformed into an EF1 allocation via only beneficial exchanges.
Consider one such sequence of beneficial exchanges.
Let $I' \subseteq \{1, \ldots, p\}$ be the set of all indices $i'$ such that agent $a_{3,i'}$ exchanged a good with another agent in this sequence.
Since $a_{3,i'}$ can only exchange a good with some $a_{4,i''}$ once, and there are only $p-\ell$ agents of the form $a_{4,i''}$, we have $|I'| \leq p-\ell$.
Therefore, $I_0 := \{1, \ldots, p\} \setminus I'$ has cardinality at least $\ell$, and $I_0$ contains indices $i$ such that agent $a_{3,i}$ retains her original bundle from the initial allocation.

We claim that $|\bigcup_{i \in I_0} Y_i| \leq k$.
Let $J \subseteq \{1, \ldots, q\}$ be the set of all indices $j$ such that agent~$a_{1,j}$ exchanged a good with another agent in this sequence.
Since $a_{1,j}$ can only exchange a good with some $a_{2,j''}$ once, and there are only $k$ agents of the form $a_{2,j''}$, we have $|J| \leq k$.
Therefore, $J' := \{1, \ldots, q\} \setminus J$ has cardinality at least $q-k$, and $J'$ contains all indices $j'$ such that agent $a_{1,j'}$ retains her original bundle from the initial allocation; these agents have utility $0$.
Since the final allocation is EF1, these agents do not envy agents $a_{3,i}$ by more than one good for each $i \in I_0$.
Therefore, we must have $u_{1,j'}(g_{3,i}^0) = u_{1,j'}(g_{3,i}^1) = 0$, which implies that $x_{j'} \notin Y_i$ for all $j'\in J'$ and $i\in I_0$.
This means that $x_{j'} \notin \bigcup_{i \in I_0} Y_i$ for every $j'\in J'$.
Therefore, at least $q-k$ of the $x_{j}$'s are not in $\bigcup_{i \in I_0} Y_i$, which shows that $\bigcup_{i \in I_0} Y_i$ has cardinality at most $q-(q-k) = k$, as claimed.

Finally, take any subset $I \subseteq I_0$ with cardinality $\ell$.
The proof is completed by noting that $\bigcup_{i \in I} Y_i \subseteq \bigcup_{i \in I_0} Y_i$.
\end{proof}

While we have shown that a sequence of beneficial exchanges must be of polynomial length for binary utilities, the same statement in fact holds for general utilities.
Indeed, for any sequence of beneficial exchanges, for each good $g_{t_1}$ in some agent $i$'s initial bundle, it is exchanged with another good~$g_{t_2}$, which is subsequently exchanged with another good $g_{t_3}$, and so on, until some $g_{t_k}$ in agent $i$'s final bundle.
Since we must have $u_i(g_{t_1}) < \dots < u_i(g_{t_k})$ due to the exchanges being beneficial, it must hold that $k \leq m$, and so there are at most $m-1$ exchanges starting from $g_{t_1}$.
Since there are $m$ goods in total and each exchange involves two goods, the maximum number of exchanges in the sequence is $m(m-1) /2$.
Hence, by \Cref{thm:beneficial}, we have NP-completeness for general utilities as well.

\section{Transfers}
\label{ap:transfer}

Thus far, we have focused on the operation where two agents can exchange a pair of goods with each other.
In this appendix, we consider a setting where we instead allow an agent to \emph{transfer} a good to another agent.\footnote{An interesting extension would be to allow both exchanges and transfers (see, e.g., \citep[App.~A]{IgarashiKaSu24}), but we do not consider it here.}
Note that with transfers, it is possible to reach any allocation from any other allocation---indeed, with each transfer, we can move any good that is still not with the ``correct'' agent to that agent.
Hence, the corresponding \textsc{Reformability} problem is trivial, as the answer is always simply ``Yes''.
We shall therefore concentrate on the complexity of computing the optimal number of transfers required to reach an EF1 allocation, as well as worst-case bounds.
As we will show, we can obtain several results in a similar manner as for exchanges (\Cref{sec:optimal}).

For convenience, we refer to as \textsc{Optimal Transfers} the problem of deciding---given an instance, an initial allocation in the instance, and a number $k$---whether the optimal number of transfers required to reach an EF1 allocation is at most $k$.
From the previous paragraph, this number is always polynomial in the number of agents and the number of goods.
Therefore, \textsc{Optimal Transfers} is in NP.

We begin by showing that, for two agents with identical utilities, the problem is also in P.

\begin{theorem}
\textsc{Optimal Transfers} is in P for two agents with identical utilities.    
\end{theorem}

\begin{proof}
We show that we can compute the optimal number of transfers 
in polynomial time. 
If the initial allocation ${\cal A}$ is EF1, we are done. 
Otherwise, assume without loss of generality that agent~$2$
has a higher utility than agent~$1$ in ${\cal A}$. 
The algorithm proceeds as follows: repeatedly transfer 
a most valuable good in agent~$2$'s bundle 
to agent~$1$ until agent~$1$ is EF1 towards agent~$2$. 
The optimal number of transfers required is then the
number of transfers made in this algorithm.

If agent~$2$'s bundle is empty, then 
agent~$1$ is EF1 towards agent~$2$.
Thus, the total number of transfers in the algorithm 
is at most~$m$. 
Each transfer can be performed in polynomial time, 
and so the algorithm terminates in polynomial time.
We show next that an EF1 allocation is obtained when the
algorithm terminates.
To this end, it is sufficient to prove that agent~$2$ is EF1 towards
agent~$1$ at every step of the algorithm. 
Let the initial allocation be ${\cal A}^0 = {\cal A}$, 
and let ${\cal A}^t$ be the allocation after $t$ steps of
the algorithm. 
Note that $\mathcal{A}^0$ satisfies the condition that agent $2$ is EF1 towards agent $1$, since agent $2$ has a higher utility than agent $1$ in $\mathcal{A}$.
We show that if
${\cal A}^t$ has the property that agent~$2$ is 
EF1 towards agent~$1$ and agent~$1$ is \emph{not} EF1 towards agent~$2$, then ${\cal A}^{t+1}$ has the property 
that agent~$2$ is EF1 towards agent~$1$. 
Suppose that $g \in A^t_2$
is transferred to agent~$1$. 
Since agent~$1$ is not EF1 towards agent~$2$ in ${\cal A}^t$, it holds that 
$u(A^t_1) < u(A^t_2 \setminus \{g\})$. 
Thus, 
\begin{equation*}
u(A^{t+1}_2) 
= u(A^t_2 \setminus \{g\})
> u(A^t_1) = u(A^{t+1}_1 \setminus \{g\}),
\end{equation*} 
showing that agent~$2$ is EF1 towards agent~$1$ in ${\cal A}^{t+1}$. 

Finally, we show that the optimal number of 
transfers required to reach an EF1 allocation
is at least the number of transfers made in this algorithm. 
Let $T$ be the number of
transfers made in this algorithm.
For each $t \in \{1,\dots,T\}$, 
let $g^t \in A_2$ be the good in agent~$2$'s 
bundle that is transferred at the $t^\text{th}$ step
of the algorithm.
Notice that 
$u(A^{T-1}_1) < u(A^{T-1}_2 \setminus \{g^T\})$, where $g^T$ is a good with the highest utility in agent $2$'s bundle in $\mathcal{A}^{T-1}$. 
Suppose on the contrary that only $k \le T-1$
transfers  are required to reach an
EF1 allocation. Since ${\cal A}$ is not EF1, 
we have $1 \le k < T$.
Let $(B_1, B_2)$ 
be the EF1 allocation
after the $k$ transfers. 
The utility of $B_1$ is upper-bounded by the utility of $A_1$
after adding $k$ goods of the highest utility from $A_2$, so 
\begin{equation*}
u(B_1) 
\le u(A_1 \cup \{g^1,\dots,g^k\})
\le u(A_1 \cup \{g^1,\dots,g^{T-1}\}) 
= u(A^{T-1}_1). 
\end{equation*}
On the other hand, the utility of 
$B_2$ without the most valuable good is lower-bounded
by the utility of~$A_2$ after 
removing $k + 1 \le T$ goods of the highest utility from 
$A_2$, so 
\begin{equation*}
u(B_2 \setminus \{g\}) 
\ge 
u(A_2 \setminus \{g^1,\dots,g^{k+1}\})
\ge
u(A_2 \setminus \{g^1,\dots,g^{T}\})
= 
u(A_2^{T-1} \setminus \{g^{T}\})
\end{equation*} 
for every $g \in B_2$. 
This implies that
\begin{equation*}
u(B_1) \le u(A^{T-1}_1) < u(A^{T-1}_2 
\setminus \{g^T\}) \le u(B_2 \setminus \{g\})
\end{equation*}
for all $g \in B_2$. Hence, 
agent~$1$ is not EF1 towards agent~$2$ in $(B_1, B_2)$, 
contradicting the
assumption that $(B_1, B_2)$ is 
EF1. It follows that at least $T$ transfers 
are required to reach 
an EF1 allocation.   
\end{proof}

If the two agents do not have identical utilities, we show that the problem becomes computationally intractable.

\begin{theorem}
\label{thm:optimal-transfer-two-NP}
\textsc{Optimal Transfers} is NP-complete for two agents.     
\end{theorem}

\begin{proof}
Clearly, this problem is in NP.
To demonstrate NP-hardness, we modify the construction from the proof of \Cref{thm:optimal_gen_two_nphard}.
Recall that we have an instance $\mathcal{I}$ with $n = 2$ agents and a set of goods $M = \{g_1, \ldots, g_{4q+12}\}$, where $u_i(g_j) = 0$ for all $i \in \{1, 2\}$ and $j \in \{2q+7, \ldots, 4q+12\}$.
In the initial allocation $\mathcal{A}$, agent $1$ has $A_1 = \{g_{2q+7}, \ldots, g_{4q+12}\}$ and agent $2$ has $A_2 = \{g_1, \ldots, g_{2q+6}\}$.
In \Cref{thm:optimal_gen_two_nphard}, it was proven that it is NP-hard to decide whether the optimal number of exchanges required to reach an EF1 allocation from $\mathcal{A}$ is at most $q+2$ in $\mathcal{I}$.

Define an instance $\mathcal{I}'$ of \textsc{Optimal Transfers} as follows.
There are $n = 2$ agents and a set of goods $M = \{g_1, \ldots, g_{2q+6}\}$.
For $j \in \{1, \ldots, 2q+6\}$, the utility of $g_j$ for each agent is identical to that in the original instance $\mathcal{I}$.
In the initial allocation~$\mathcal{A}'$, agent~$2$ has all of the $2q+6$ goods.
This reduction can be done in polynomial time.
We claim that the optimal number of transfers required to reach an EF1 allocation from~$\mathcal{A}'$ in $\mathcal{I}'$ is the same as the optimal number of exchanges required to reach an EF1 allocation from $\mathcal{A}$ in $\mathcal{I}$.
Indeed, if $k$~exchanges are sufficient to reach an EF1 allocation from~$\mathcal{A}$ in $\mathcal{I}$ for some~$k$, then transferring the $k$~corresponding goods from agent~$2$ to agent~$1$ in $\mathcal{I}'$ leads to an EF1 allocation.
Likewise, if $k$~transfers are sufficient to reach an EF1 allocation from $\mathcal{A}'$ in $\mathcal{I}'$, then exchanging the $k$~corresponding goods with $k$~arbitrary goods in agent~$1$'s bundle in $\mathcal{I}$ leads to an EF1 allocation.
Hence, the NP-hardness shown in \Cref{thm:optimal_gen_two_nphard} (when $k = q+2$) implies that \textsc{Optimal Transfers} is also NP-hard.    
\end{proof}

For a constant number of agents with identical utilities, the NP-hardness construction in the proof of \Cref{thm:optimal_iden_const} also has the structure that, in the initial allocation, all agents except one only have goods of utility~$0$.
Therefore, an analogous reduction as in the proof of \Cref{thm:optimal-transfer-two-NP} implies that the problem is NP-hard.

\begin{theorem}
\textsc{Optimal Transfers} is NP-complete for $n\ge 3$ agents with identical utilities, where $n$ is a constant.     
\end{theorem}

On the other hand, for a constant number of agents with binary utilities, the problem can be solved efficiently.
Note that unlike for exchanges, where it is NP-hard in general to determine the smallest number of exchanges required to reach one allocation from another allocation (without EF1 considerations) \citep{IgarashiKaSu24}, for transfers this task is trivial, as the required number is simply equal to the number of goods that belong to different agents in the two given allocations.

\begin{theorem}
\textsc{Optimal Transfers} is in P for a constant number of agents with binary utilities.
\end{theorem}

\begin{proof}
Recall that for binary utilities, there are $2^n$ types of goods.
For $i\in\{1,\dots,n\}$ and $j\in\{1,\dots,2^n\}$, let $a_{i,j}$ denote the number of goods of type~$j$ in agent~$i$'s bundle.
Using the same approach as in \Cref{lem:enumerate_bin_const}, we can enumerate all equivalence classes of EF1 allocations (even without a specified size vector) in polynomial time.
Consider one such equivalence class; for $i\in\{1,\dots,n\}$ and $j\in\{1,\dots,2^n\}$, let $b_{i,j}$ denote the number of goods of type~$j$ in agent~$i$'s bundle for an allocation in this equivalence class.
The optimal number of transfers required to reach some allocation from this equivalence class is then $\frac{1}{2}\sum_{i=1}^n\sum_{j=1}^{2^n}|a_{i,j}-b_{i,j}|$.
The smallest such number across all considered equivalence classes will then answer the decision problem of \textsc{Optimal Transfers}.
\end{proof}

We now consider a general number of agents.
For the instance constructed in the proof of \Cref{thm:optimal_bin_gen}, a similar argument shows that the optimal number of \emph{transfers} required to reach an EF1 allocation from~$\mathcal{A}$ is at most~$q$ if and only if there exists an exact cover in~$C$. (In particular, for each $Y_j\in D$, we select some $x_i\in Y_j$ and transfer the good~$h_j$ from agent~$3q+1$ to agent~$i$.)

\begin{theorem}
\textsc{Optimal Transfers} is NP-complete for binary utilities.
\end{theorem}

For identical binary utilities, we can ignore all non-valuable goods and, as long as the allocation is not EF1, transfer a good from an agent with the largest number of goods to an agent with the smallest number of goods.

\begin{theorem}
\textsc{Optimal Transfers} is in P for identical binary utilities.
\end{theorem}

Finally, we consider worst-case bounds.
Given $n$ and~$s$, let $\ftrans(n,s)$ be the smallest integer such that for every instance with $n$ agents and $ns$ goods and every $s$-balanced allocation $\mathcal{A}$ in the instance, there exists an EF1 allocation that can be reached from~$\mathcal{A}$ using at most $\ftrans(n, s)$ transfers.
We focus on the two cases in which our bound for $f(n,s)$ in \Cref{thm:worst_ub} is tight (according to \Cref{thm:worst_lb_bin}), and show that in these cases, $\ftrans(n, s)$ is exactly twice of $f(n, s)$.

\begin{theorem}
Let $n$ and $s$ be positive integers.
\begin{enumerate}
    \item[(a)] If $s$ is divisible by~$n$, then $\ftrans(n,s) = s(n-1)$.
    \item[(b)] If $s$ is not divisible by~$2$, then $\ftrans(2,s) = s-1$.
\end{enumerate}
\end{theorem}

\begin{proof}
\begin{enumerate}
\item[(a)] Suppose that $s$ is divisible by~$n$.
For the lower bound, consider an instance where each agent has utility~$0$ for each good in her bundle and utility~$1$ for each good in any other agent's bundle.
Since each agent values $s(n-1)$ goods, she needs to receive a bundle worth at least $s(n-1)/n$ to her in order to be EF1.
Thus, at least $s(n-1)/n$ goods must be transferred \emph{to} her.
Since this is true for each of the $n$~agents, the number of transfers required to reach an EF1 allocation is at least $s(n-1)$.

For the upper bound, consider any instance, and denote the initial allocation by~$\mathcal{A}$.
Using the algorithm of \citet{BiswasBa18}, we can find an $s$-balanced EF1 allocation~$\mathcal{B}$ such that $|A_i\cap B_j| = |A_i|/n = s/n$ for all $i,j\in N$.
In order to reach $\mathcal{B}$ from $\mathcal{A}$, for each (ordered) pair of distinct agents $i,j\in N$, we need to transfer $s/n$ goods from $i$ to~$j$.
Hence, the number of transfers required to reach $\mathcal{B}$ from~$\mathcal{A}$ is $n(n-1)\cdot s/n = s(n-1)$.

\item[(b)] Suppose that $s$ is not divisible by~$2$.
For the lower bound, consider an instance where each agent has utility~$0$ for each good in her bundle and utility~$1$ for each good in the other agent's bundle.
Since each agent values $s$ goods, she needs to receive a bundle worth at least $(s-1)/2$ to her in order to be EF1.
Thus, at least $(s-1)/2$ goods must be transferred \emph{to} her.
Since this is true for each of the two agents, the number of transfers required to reach an EF1 allocation is at least $s-1$.

For the upper bound, consider any instance, and denote the initial allocation by~$\mathcal{A}$.
Allocate the goods in $A_1$ to the two agents in a round-robin fashion with agent $1$ going first, and allocate the goods in $A_2$ to the two agents in a round-robin fashion with agent $2$ going first.
Call this new allocation $\mathcal{B}$; note that $\mathcal{B}$ is $s$-balanced.
To see that $\mathcal{B}$ is EF1, observe that agent $1$ does not envy agent $2$ with respect to the goods chosen from $A_1$ and is EF1 towards agent $2$ with respect to the goods chosen from $A_2$, so agent $1$ is EF1 towards agent~$2$ in $\mathcal{B}$; likewise, agent~$2$ is EF1 towards agent~$1$ in $\mathcal{B}$.
Since $A_i \cap B_{3-i} = (s-1)/2$ for $i \in \{1, 2\}$, the optimal number of transfers required to reach $\mathcal{B}$ from $\mathcal{A}$ is $2\cdot (s-1)/2 = s-1$.
\qedhere
\end{enumerate}
\end{proof}

\end{document}